\documentclass[11pt,a4paper]{amsart}
\usepackage[latin1]{inputenc}
\usepackage[all]{xy}
\usepackage{amssymb, amsmath, amscd, geometry}
\usepackage{graphicx}
\usepackage{setspace}
\usepackage[normalem]{ulem}

\numberwithin{equation}{section}
\input xy
\xyoption{all}

\usepackage{latexsym}
\usepackage[usenames]{color}
\usepackage{epic} 
\usepackage{mathrsfs}
\usepackage{euscript}
\usepackage{rotating}
\usepackage{verbatim}
\usepackage{tikz-cd}

\usepackage{ulem}

\newtheorem{lemma}{Lemma}[section]
\newtheorem{theorem}[lemma]{Theorem}
\newtheorem{corollary}[lemma]{Corollary}

\newtheorem{prop}[lemma]{Proposition}
\newtheorem{cor}[lemma]{Corollary}

\newtheorem*{theorem*}{Theorem}

\theoremstyle{definition}
\newtheorem{definition}{Definition}[section]
\newtheorem{remark}{Remark}[section]

\newcommand{\tcr}{\textcolor{red}}

\newcommand{\Z}{{\mathbb Z}}      
      \newcommand{\C}{{\mathbb C}}

\newcommand{\h}{{H}}

\newcommand{\iny}{\hookrightarrow} 

\newcommand{\tr}{\operatorname{tr}}

\newcommand{\id}{\operatorname{id}}

\usepackage{hyperref}
\DeclareMathOperator{\CB}{\text{CB}}
\DeclareMathOperator{\Span}{span}

\newcommand{\oline}[1]{\overline{#1}}

\newcommand{\wt}[1]{\widetilde{#1}}
\newcommand{\norm}[1]{\left\lVert #1 \right\rVert}
\newcommand{\abs}[1]{\left\lvert #1 \right\rvert}
\newcommand{\e}{\epsilon}
\newcommand{\g}{\gamma}
\newcommand{\cbnorm}[1]{\left\lVert #1 \right\rVert_{\text{cb}}}
\newcommand{\ket}[1]{\vert #1 \rangle}
\newcommand{\bra}[1]{\langle #1 \vert}
\newcommand{\om}{\omega}

\renewcommand{\l}{\lambda}
\newcommand{\s}{\sigma}

\newcommand{\wh}[1]{\widehat{#1}}
\newcommand{\ip}[2]{\langle #1 | #2 \rangle}
\renewcommand{\d}{\delta}

\newcommand{\vp}{\varphi}

\newcommand{\op}[2]{\ket{#1}{\bra{#2}}}
\renewcommand{\sp}[2]{\langle #1, #2 \rangle} 
\renewcommand{\mp}[2]{\langle \langle #1, #2 \rangle \rangle} 

\makeatletter
\DeclareRobustCommand\smileotimes{\mathbin{\mathpalette\smile@otimes\relax}}
\newcommand{\smile@otimes}[2]{%
  \vbox{
    \ialign{##\cr
      \hidewidth$\m@th#1{}_\smile$\kern-\scriptspace\hidewidth\cr
      \noalign{\nointerlineskip\kern-1pt}
      $\m@th#1\otimes$\cr
    }%
  }%
}
\makeatother

\makeatletter
\DeclareRobustCommand\frownotimes{\mathbin{\mathpalette\frown@otimes\relax}}
\newcommand{\frown@otimes}[2]{%
  \vbox{
    \ialign{##\cr
      \hidewidth$\m@th#1{}_\frown$\kern-\scriptspace\hidewidth\cr
      \noalign{\nointerlineskip\kern-1pt}
      $\m@th#1\otimes$\cr
    }%
  }%
}
\makeatother

\title{No-signaling values of quantum games--an operator algebra perspective}

\author{Roy Araiza} 
\address{Department of Mathematics \& Illinois Quantum Information Science and Technology Center\\University of Illinois at Urbana-Champaign\\1409 W. Green St. Urbana, IL 61891. USA}
\email{raraiza@illinois.edu}

\author{Marius Junge}
\address{Department of Mathematics \& Illinois Quantum Information Science and Technology Center\\University of Illinois at Urbana-Champaign\\1409 W. Green St. Urbana, IL 61891. USA}
\email{junge@math.uiuc.edu}

\author{Carlos Palazuelos}
\address{Instituto de Ciencias Matem\'aticas (ICMAT)\\Departamento de An\'alisis Matem\'atico y Matem\'atica Aplicada \\
Facultad de Ciencias Matem\'aticas \\ Universidad Complutense de Madrid \\
Madrid 28040. Spain}
\email{cpalazue@ucm.es}

\thanks{MJ is partially supported by the NSF grants DMS 2247114 and Raise-TAG183917. C.P. is partially supported by Grant PID2023-146758NB-I00 funded by MICIU/AEI/10.13039/501100011033, by TEC-2024/COM-84-QUITEMAD-CM funded by the Comunidad de Madrid, and by Grant CEX2023-001347-S funded by MICIU/AEI/10.13039/501100011033. C.P. gratefully acknowledges financial support for this publication from the Fulbright Program, sponsored by the U.S. Department of State and the U.S.-Spain Fulbright Commission.}

\addtocounter{tocdepth}{-1}

\begin{document}

\keywords{Quantum games, Operator spaces, completely bounded maps}

\maketitle

\begin{abstract}
The aim of this work is to study two-prover quantum games (i.e., games with quantum inputs and outputs) from an operator-algebraic and operator-space point of view. We characterize several notions of the value of such games by formulating them in terms of tensor norms in the category of operator spaces. The main results of the paper concern the description of the so-called no-signalling value of these games, for which we not only provide a precise operator-space formulation, but also establish close connections between this study and some problems in operator algebras. In particular, we show how the recent counterexample to Grothendieck's theorem for operator spaces given in \cite{Ara} can be understood as a direct consequence of results in quantum information theory. We also obtain new upper bounds on the gap between the no-signalling value and the quantum value of two-prover quantum games, improving the best previously known estimates.
\end{abstract}

\section{Introduction}

The interaction between nonlocal games or, equivalently, the theory of Bell inequalities in quantum information and the fields of operator algebras and operator spaces/systems has been extremely fruitful over the last decades \cite{palazuelos2016survey}. In its origins, this interaction mostly consisted of the development of mathematical results to solve problems within the framework of quantum information (see, for example, \cite{PerezWPVJ08tripartite, JungePPVW10, JungeP11low, HMPS19, LMPRSSTW11, PSSTW16}). However, in recent years, the application of techniques from nonlocal games to solve problems within the framework of operator algebras has become increasingly common (\cite{ReVi14, JNVWY, Ara}). This symbiotic relationship between both fields has motivated the study of nonlocal games in much more general contexts (\cite{ToTu2020, BrHaToTu23, BrHaTo23, cooney2015rank, La}), where the mathematics required to describe the corresponding objects are much more sophisticated. Perhaps the most natural extension of nonlocal games is the consideration of their quantum version, the so called quantum games, precisely the subject of study in the present work.

A natural way to understand a (classical) two-prover nonlocal game  $\mathcal G$ is by realizing it as a pointwise positive element $\xi_\mathcal G$ in the tensor product of certain vector valued  $L_p$ spaces: $$\ell_1^{|I_A|}(\ell_\infty^{|O_A|})\otimes \ell_1^{|I_B|}(\ell_\infty^{|O_B|}),$$where $I_A$, $I_B$ are the sets of questions for the players of the game, say Alice and Bob, respectively, and $O_A$, $O_B$ are the sets of answers. It turns out that the different values of these games; that is, the best winning probability of the game when the players are restricted to a certain type of strategies, is reflected in the fact of considering different tensor norms of $\xi_\mathcal G$. In this way, given a game $\mathcal G$, its classical and quantum values can be shown \cite{palazuelos2016survey}  to be the injective norm $\|\xi_{\mathcal G}\|_\epsilon$  in the category of Banach spaces and the minimal norm $\|\xi_{\mathcal G}\|_{min}$ in the category of operator spaces.  In fact, due to their relevance as a set of strategies encompassing all those in which no information is exchanged between the players (such as classical and quantum ones), no-signalling strategies play a prominent role in this context and, consequently, so does the value of a game when the players are restricted to this type of strategy. In this case, we speak of the no-signalling value of $\mathcal G$ and, as was shown in \cite{amr2019optimal}, it can be expressed as
\begin{align}\label{Eq: NS_Classical_Games}
\|\xi_{\mathcal G}\|_{NOS}=\inf\Big\{\|\xi_1\|_{\ell_1^{|I_A|}(\ell_\infty^{|O_A|}(\ell_1^{|I_B|}(\ell_\infty^{|O_B|})))}+\|\xi_2^T\|_{\ell_1^{|I_B|}(\ell_\infty^{|O_B|}(\ell_1^{|I_A|}(\ell_\infty^{|O_A|})))}:\, \xi_{\mathcal G}=\xi_1+\xi_2\Big\},
\end{align}where the superscript $T$ denotes the flip operation between the registers $I_AO_A$ and $I_BO_B$. This value is a symmetrization of the norm $\|\xi_{\mathcal G}\|_{\ell_1^{|I_A|}(\ell_\infty^{|O_A|}(\ell_1^{|I_B|}(\ell_\infty^{|O_B|})))}$
which precisely describes the value of the game $\mathcal G$ under strategies where only the flow of information from Alice to Bob is restricted, but not in the opposite direction. This is usually referred to as the \emph{one-way no-signalling value} of $\mathcal G$ and motivates understanding the no-signalling value as the \emph{symmetrized one-way} value of the game.

The importance of nonlocal games in various fields of quantum information, such as cryptography or communication complexity problems, as well as the relationship between these games and different problems in functional analysis, has motivated the study of quantum games in recent years. These games, where the interaction between the players and the referee occurs via quantum states, exhibit a much greater richness than their classical counterparts, while at the same time allowing a similar mathematical treatment to a certain extent. Indeed, as we will show, not only given a quantum game $\mathcal G$, it can be identified with a positive tensor $\xi_\mathcal G$ in the noncommutative analog of the space previously considered; that is, $$S_1(I_A; S_\infty(O_A))\otimes S_1(I_B,S_\infty(O_B)),$$where now $I_A, I_B$ y $O_A, O_B$ refer to the Hilbert spaces that describe the questions and answers of the game, but again, different values of the game $\mathcal G$ will be identified with different tensor norms of $\xi_\mathcal G$. However, quantum games, far from being a mere generalization of their classical counterparts, present a much greater mathematical richness. In particular, there is an alternative and natural way to describe quantum games through a tensor $\eta_\mathcal G$ in  $$C_D\otimes S_1(I_A,O_A)\otimes S_1(I_B,O_B),$$ where we have $\xi_\mathcal G=\eta_\mathcal G^*\eta_\mathcal G$ and where $C_D$  is a column space of certain dimension $D$, in such a way that the different values of the game $\mathcal G$ can also be described by certain norms of $\eta_\mathcal G$ in this last space. We will refer to $\xi_{\mathcal G}$ as the \emph{full tensor} of the game and to $\eta_{\mathcal G}$ as the \emph{half tensor}.

The first part of our work was developed in parallel, albeit independently, with \cite{Crann}, where a similar approach was used to describe various values of games, such as classical, quantum, commuting quantum, and so on. In this way, one can see that, in complete analogy with classical games, the classical, quantum and one-way no-signalling value of a quantum game coincide  with $\|\xi_{\mathcal G}\|_{\epsilon}$, $\|\xi_{\mathcal G}\|_{min}$ and $\|\xi_{\mathcal G}\|_{S_1(I_A; S_\infty(O_A; S_1(I_B; S_\infty(O_B))))}$ respectively, where $\xi_{\mathcal G}$ is the full tensor of the game in $S_1(I_A; S_\infty(O_A))\otimes S_1(I_B,S_\infty(O_B))$. Moreover, one can also describe the second and third values we just mentioned by using the half tensor of the game, in such a way that the quantum value of $G$ is 
$\|\eta_{\mathcal G}\|^2_{C_D(S_1(I_A,O_A)\otimes_{\min} S_1(I_B,O_B))}$, while  $\|\eta_{\mathcal G}\|^2_{C_D(S_1(I_A,O_A)\otimes_{h} S_1(I_B,O_B))}$ is the one-way no-signalling value, 
where here $\otimes_h$ denotes the Haagerup tensor norm.

While we do discuss some of these results in our paper, the main focus of our study is on the no-signalling value of a quantum game.  There are two main reasons for the interest in this value. On the one hand, the relevance of its physical significance and its interpretation in terms of information-theoretic protocols call for a mathematical description which, unlike the situation for the previous values, has not so far been as clearly expressed in terms of tensor norms. On the other hand, as we will show in this work, there exists a deep connection between the study of the no-signalling value of a game and certain open problems in operator algebras. The depth of these connections stems from specific quantum subtleties that arise when moving from classical to quantum games, which are reflected, for example, in the greater difficulty of handling the symmetrization of the one-way no-signalling value when working with noncommutative $L_p$-spaces.   In fact, the mathematical study required in this context is motivated by the recent proof of the failure of a matricial version of Grothendieck's theorem for operator spaces \cite{Ara}, which we will demonstrate is closely tied to the study of the no-signalling value of quantum games.

In this work, we will show that understanding the no-signalling value as a symmetrized version of the one-way no-signalling values, together with the interpretation of these in terms of information transmission protocols between Alice and Bob, is key to providing an adequate mathematical description. As part of this process, we will also see that, in the quantum case, the natural value to consider from the perspective of operator algebras is the sub-no-signalling value of $\mathcal G$.

In this regard, the main result of our paper can be stated as a description of the sub-no-signalling value of a quantum game.
\begin{theorem}\label{thm: Main_I}
Let $G$ be a quantum games which is identified with a tensor $\xi_G\in S_1(I_A; S_\infty(O_A))\otimes S_1(I_B,S_\infty(O_B))$ and with a tensor $\eta_G\in C_D\otimes S_1(I_A,O_A)\otimes S_1(I_B,O_B).$ Then, if we denote by $\omega_{SNOS}(G)$ the sub-no-signalling value of $G$, we have
\begin{enumerate}
\item [a)] $\omega_{SNOS}(G)=\inf\{\|\xi_1\|_{S_1(I_A; S_\infty(O_A; S_1(I_B; S_\infty(O_B)))}+\|\xi_2^T\|_{S_1(I_B; S_\infty(O_B; S_1(I_A; S_\infty(O_A)))}\}$,  where the infimum runs over all positive elements $\xi_1$, $\xi_2$ such that $\xi_G\leq \xi_1+\xi_2$ and the superscript $T$ refers to the suitable rearrangment of the tensor.
\item [b)] $\omega_{SNOS}(G)=\|\eta_G\|^2_{C_D((S_1(I_A,O_A)\otimes_{\mu}S_1(I_B,O_B))},$ where $\otimes_{\mu}$ refer to the symmetrized Haagerup tensor norm in the category of operator spaces.
\end{enumerate}
\end{theorem}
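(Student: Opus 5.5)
The plan is to reduce both statements to the one-way no-signalling characterizations recalled in the introduction, and then handle the symmetrization through duality. The facts I take as given are:
\begin{itemize}
\item the one-way no-signalling value of $G$ equals $\|\xi_G\|_{S_1(I_A;S_\infty(O_A;S_1(I_B;S_\infty(O_B))))}$;
\item it also equals $\|\eta_G\|^2_{C_D(S_1(I_A,O_A)\otimes_h S_1(I_B,O_B))}$;
\item a strategy for $G$ is a channel $\mathcal N: S_1(I_A\otimes I_B)\to S_1(O_A\otimes O_B)$, and the value is the pairing $\langle \xi_G, J(\mathcal N)\rangle$ with the Choi matrix $J(\mathcal N)$.
\end{itemize}

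First I fix the definition of a sub-no-signalling strategy. It is a completely positive map $\mathcal N$ for which there exist one-way no-signalling channels $\mathcal N_1$ (Alice $\to$ Bob) and $\mathcal N_2$ (Bob $\to$ Alice) with $J(\mathcal N)\le J(\mathcal N_1)$ and $J(\mathcal N)\le J(\mathcal N_2)$. Equivalently, $\mathcal N$ is dominated by a no-signalling map in each direction separately. Then
\[
\omega_{SNOS}(G)=\sup\{\langle \xi_G,J\rangle : 0\le J\le J_1,\ J\le J_2\}.
\]
This is a conic program, so I would write down its Lagrangian dual. The constraint $J\le J_i$ gets a positive multiplier $\xi_i$, and stationarity in $J$ forces $\xi_G\le \xi_1+\xi_2$. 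What remains is
\[
\sup_{J_1}\langle \xi_1,J_1\rangle+\sup_{J_2}\langle \xi_2,J_2\rangle,
\]
where $J_1$ and $J_2$ range over the Choi matrices of one-way no-signalling channels. For positive $\xi_i$, these suprema are exactly the one-way values of the games $\xi_1$ and $\xi_2^T$. By the known characterization, they equal the norms appearing in a). Weak duality then gives $\omega_{SNOS}\le\inf$. For strong duality I would use a Slater point: take $J=\frac{1}{2}J_1=\frac{1}{2}J_2$ with $J_1=J_2$ the Choi matrix of the maximally mixing (replacement) channel. This is strictly feasible, and all sets are compact in finite dimensions, so the duality gap vanishes and a) follows.

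The main obstacle is the positivity issue in identifying $\sup_{J_1}\langle\xi_1,J_1\rangle$ with the vector-valued $S_1(S_\infty(S_1(S_\infty)))$ norm. That norm is a supremum over all contractions, not only over positive no-signalling channels. For positive $\xi_1$, though, the norm should be attained on positive elements. I would show this using the factorization formulas of noncommutative vector-valued $L_p$: for positive $x$, one has $\|x\|_{S_1(S_\infty)}=\inf\|a\|_{2}^2$ over factorizations $x\le a^*a\otimes 1$-type expressions. Applying this twice, and reading the resulting density matrices as the one-way protocol (Alice's marginal state plus a conditional channel to Bob), turns the norm into the one-way value.

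For b), write $\xi_G=\eta_G^*\eta_G$. The symmetrized Haagerup norm $\otimes_\mu$ should be defined as the operator space whose unit ball is the convex hull of the unit balls of $\otimes_h$ and of its transpose $\otimes_{h^t}$; equivalently, $\mu$ is the quotient norm of $h\oplus_1 h^t$. At the level of $C_D$, I would use the column version of the decomposition $\xi\le\xi_1+\xi_2$. Write $\eta_G=\eta_1+\eta_2$ with
\[
\|\eta_G\|^2_{C_D(\mu)}=\inf\bigl(\|\eta_1\|^2_{C_D(h)}+\|\eta_2\|^2_{C_D(h^t)}\bigr),
\]
where the column space turns the sum of norms into a sum of squares. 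This holds because $C_D(X)$ for a quotient $X=(Y\oplus Z)/N$ equals $C_D(Y)\oplus_2 C_D(Z)$ modulo $N$. Next I would show that this infimum matches the infimum in a). In one direction, given $\eta_1,\eta_2$, apply the operator inequality
\[
(\eta_1+\eta_2)^*(\eta_1+\eta_2)\le (1+t)\,\eta_1^*\eta_1+(1+t^{-1})\,\eta_2^*\eta_2
\]
and optimize over $t$; this gives $\omega\le$ the squared norm once $\|\eta_i\|^2_{C_D(h)}$ is identified with the one-way value of $\eta_i^*\eta_i$. Conversely, given positive $\xi_1,\xi_2$ with $\xi_G\le\xi_1+\xi_2$, use a Douglas-type factorization $\eta_G=\eta_1'+\eta_2'$ with $\eta_i'^*\eta_i'\le\xi_i$. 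Concretely, set $\eta_i'=\eta_G(\xi_1+\xi_2)^{-1/2}\xi_i^{1/2}\cdot(\text{partial isometry})$, after enlarging $D$ to $2D$, which $C_D$-norms permit. The delicate point is that $\eta_i'$ must lie in the correct tensor space and the ancilla dimension must stay controlled. I expect to handle this with the column-space absorption $C_{2D}=C_D\oplus_2 C_D$.
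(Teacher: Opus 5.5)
Your conic-duality proof of a) is sound and differs from the paper's route. The paper instead deduces a) from the decomposition it constructs for b), together with the fact that sub-one-way Choi matrices lie in the unit ball of $\mathcal{X}^*$. You need the one-way characterization for arbitrary positive $\xi_i$, not just game tensors. That is harmless: every positive tensor is a positive multiple of a game tensor, as one sees by dilating a positive contraction to a projection using an extra referee register.

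The genuine gap is in b), in the direction $\omega_{SNOS}(G)\le\|\eta_G\|^2_{C_D(\mu)}$. Your identity $\|\eta_G\|^2_{C_D(\mu)}=\inf_{\eta_G=\eta_1+\eta_2}\bigl(\|\eta_1\|^2_{C_D(h)}+\|\eta_2\|^2_{C_D(h^t)}\bigr)$ is false. For the trivial game $I_A=O_A=I_B=O_B=\mathbb{C}$, all three norms are the $\ell_2^D$ norm, so the right side equals $\tfrac12\|\eta_G\|^2$. The underlying claim about $C_D$ of a quotient of $Y\oplus_1 Z$ fails as well: $C_D(\ell_1^2)=CB(\ell_\infty^2,C_D)$ carries the $2$-summing norm, which is neither the $\oplus_2$ nor the $\oplus_1$ norm.

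Moreover, minimizing $(1+t)a^2+(1+t^{-1})b^2$ over $t$ gives $(a+b)^2$, not $a^2+b^2$. So your argument only yields $\omega_{SNOS}(G)\le\inf_{\eta_G=\eta_1+\eta_2}\bigl(\|\eta_1\|_{C_D(h)}+\|\eta_2\|_{C_D(h^t)}\bigr)^2$. That bound can exceed $\|\eta_G\|^2_{C_D(\mu)}$ by a factor tending to $2$. Take $\eta=e_1\otimes u+e_2\otimes v\in C_2\otimes S_1^n\otimes S_1^n$ with $u=\sum_i e_{1i}\otimes e_{i1}$ and $v$ its flip. Then $\|\eta\|_{C_2(\mu)}\le\sqrt2$. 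Yet every splitting satisfies $\|\eta_1\|_{C_2(h)}+\|\eta_2\|_{C_2(h^t)}\ge 2n/(n+1)$; to see this, compress to the first row and column and use $\|B\|_1\ge|\tr B|$.

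The missing idea is that the two pieces must live in separate column spaces and be recombined by a jointly contractive map. You need a decomposition
\[
\eta_G=\lambda_1(\mathrm{id}\otimes W_1)(\eta_1)+\lambda_2(\mathrm{id}\otimes W_2)(\eta_2)
\]
with the following properties:
\begin{itemize}
\item $\|\eta_1\|_{C_{K_1}(h)}\le 1$ and $\|\eta_2\|_{C_{K_2}(h^t)}\le 1$, in the two Haagerup orders;
\item $\|(W_1,W_2):K_1\oplus_2K_2\to\ell_2^D\|\le1$;
\item $\lambda_1^2+\lambda_2^2\le\|\eta_G\|^2_{C_D(\mu)}$.
\end{itemize}
Then $\eta_G^*\eta_G\le\lambda_1^2\eta_1^*\eta_1+\lambda_2^2\eta_2^*\eta_2$ holds with no loss, and your one-way bounds finish the argument.

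Producing this decomposition is the real content of the paper's key lemma. View $\eta_G$ as a cb map $\mathcal{X}_1\cap\mathcal{X}_2\to C_D$, where $\mathcal{X}_1,\mathcal{X}_2$ are the two Haagerup products of $S_\infty(I_A,O_A)$ and $S_\infty(I_B,O_B)$. Extend it to $B(H_1\oplus_2 H_2)$ by injectivity of $C$. Dominate $\|\tilde T(z)\|^2\le\varphi(z^*z)$ by a state $\varphi$, and split $\varphi=\lambda_1\varphi_1+\lambda_2\varphi_2$ along the two diagonal blocks.

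Your Douglas-factorization idea for the reverse inequality does work once written in this same form. Write $\eta_G=Z(\xi_1+\xi_2)^{1/2}=ZV^*\begin{pmatrix}\xi_1^{1/2}\\ \xi_2^{1/2}\end{pmatrix}$, where $ZV^*:\mathcal{H}\oplus_2\mathcal{H}\to\ell_2^D$ is a contraction and $V$ comes from the polar decomposition of the column. Then only the easy estimate $\|\eta_G\|^2_{C_D(\mu)}\le\|\xi_1^{1/2}\|^2_{C(h)}+\|\xi_2^{1/2}\|^2_{C(h^t)}$ is needed. The paper proves this direction differently, via the commuting-range description of $\mu$. As you wrote it, however, summing $\eta_1'+\eta_2'$ inside a single $C_D$, the argument again rests on the false identity.
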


Item a) is actually a generalization to the quantum case of the corresponding result for classical games already stated in Eq. (\ref{Eq: NS_Classical_Games}), where the noncommutative spaces are replaced by commutative ones. However, the situation in the noncommutative case is much more complex. In particular, while the consideration of sub-no-signalling strategies is merely a tool in the commutative case, since it can be shown that the sub-no-signalling value   $\omega_{SNOS}(G)$ and the no-signalling value  $\omega_{NOS}(G)$ coincide in that case, this equality is not at all clear in the noncommutative case. However, it can be shown that for any bipartite quantum game $G$ we have
\begin{align}\label{eq. Equiv_NS_SNS}
\omega_{NOS}(G)\leq \omega_{SNOS}(G)\leq 8\cdot \omega_{NOS}(G).
\end{align} In particular, Theorem \ref{thm: Main_I} applies to the value $\omega_{NOS}(G)$, up to a constant.  In addition, we believe that the description of $\omega_{SNOS}(G)$ provided in item a) above can be of independent interest to tackle some other problems in the field such as parallel repetition theorems for quantum games, upper/lower bounds for the distance between nonsingalling correlations and quantum correlations, etc.

The comparison between $\omega_{NOS}(G)$ and $\omega_{SNOS}(G)$ is closely related to the problem of lifting families of commuting partial isometries and families of commuting contractions, and the proof of Equation~(\ref{eq. Equiv_NS_SNS}) is rather technical. Since the results presented in this work do not depend on this estimate, and including its proof would considerably lengthen the paper, we have decided not to include it here. The complete proof will be presented in a forthcoming paper.

However, the significance of Theorem \ref{thm: Main_I} mainly lies in item b), which connects the problem of determining certain values for quantum games with some versions of Grothendieck's theorem in the category of operator spaces. Indeed, a direct consequence of Theorem \ref{thm: Main_I} is the following.
\begin{theorem}\label{thm: Main_II}
The following assertions are equivalent:
\begin{enumerate}
\item[1.] There exists a constant $c_1$ such that for every quantum game $G$ it holds that $$\omega_{SNOS}(G)\leq c_1\omega^*(G).$$
\item[2.] There exists a constant $c_2$ such that for every tensor $t\in C\otimes S_1\otimes S_1$ that is obtained as a half tensor of a quantum game it holds that
$$\|t\|_{C(S_1\otimes_{\mu} S_1)}\leq c_2\|t\|_{C(S_1\otimes_{min} S_1)}.$$
\end{enumerate}

Moreover, if $c_1$ is the best constant satisfying item 1. and  $c_2$ is the best constant satisfying item 2., then $$c_1=c_2^2.$$
\end{theorem}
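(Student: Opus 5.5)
The plan is to combine Theorem \ref{thm: Main_I} b) with the half-tensor description of the quantum value recalled in the introduction. Here $\omega^*(G)$ denotes the quantum value, which for every quantum game $G$ with half tensor $\eta_G\in C_D\otimes S_1(I_A,O_A)\otimes S_1(I_B,O_B)$ satisfies
$$\omega^*(G)=\|\eta_G\|^2_{C_D(S_1(I_A,O_A)\otimes_{\min} S_1(I_B,O_B))}.$$
Theorem \ref{thm: Main_I} b) gives
$$\omega_{SNOS}(G)=\|\eta_G\|^2_{C_D(S_1(I_A,O_A)\otimes_{\mu} S_1(I_B,O_B))}.$$
So both values are squares of two different norms of the \emph{same} tensor $\eta_G$, and the two assertions compare these norms over the same class of tensors.

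The key steps are as follows.
\begin{enumerate}
\item[(i)] Fix the class $\mathcal{T}$ of tensors in item 2 to be exactly $\{\eta_G: G \text{ a quantum game}\}$, with all finite dimensions $D$, $I_A$, $O_A$, $I_B$, $O_B$ allowed. Each such tensor sits in some $C_D\otimes S_1^{n}\otimes S_1^{m}$; we view it inside $C\otimes S_1\otimes S_1$ via the canonical complete isometric embeddings.
\item[(ii)] Check that these embeddings do not change the $\min$ norm or the $\mu$ norm. For $\min$ this is injectivity. For the symmetrized Haagerup norm $\mu$ we need the finite-dimensional blocks to be completely complemented by contractive projections, since $\mu$, like $\otimes_h$ and the $C$-valued structure, is not injective on $S_1$ in general. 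Compressions $S_1\to S_1^n$, $x\mapsto pxp$, are complete contractions, and the inclusions are complete isometries. Functoriality of $\otimes_\mu$ under complete contractions then gives equality of the norms.
\item[(iii)] Prove $2\Rightarrow 1$ by squaring: $\omega_{SNOS}(G)=\|\eta_G\|_\mu^2\le c_2^2\|\eta_G\|_{\min}^2=c_2^2\,\omega^*(G)$. Hence $c_1\le c_2^2$.
\item[(iv)] Prove $1\Rightarrow 2$ by taking square roots. Given $t\in\mathcal T$, pick $G$ with $t=\eta_G$. Then $\|t\|_\mu=\omega_{SNOS}(G)^{1/2}\le \sqrt{c_1}\,\omega^*(G)^{1/2}=\sqrt{c_1}\,\|t\|_{\min}$. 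Hence $c_2\le\sqrt{c_1}$.
\end{enumerate}
Steps (iii) and (iv) together give $c_1=c_2^2$ for the best constants.

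The only genuine obstacle is step (ii): making sure the $\mu$ norm of $\eta_G$, computed in the specific finite-dimensional spaces of Theorem \ref{thm: Main_I}, agrees with its norm computed in $C(S_1\otimes_\mu S_1)$. I would settle this using the complete complementation argument above, which relies on $\otimes_\mu$ being functorial for completely bounded maps. Once this point is checked, the rest is formal bookkeeping with the two identities.
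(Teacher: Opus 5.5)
Your proposal is correct and follows the paper's proof, which likewise squares (for $2\Rightarrow 1$) and takes square roots (for $1\Rightarrow 2$) of the identities $\omega^*(\mathcal G)=\|\eta_{\mathcal G}\|^2_{\min}$ and $\omega_{SNOS}(\mathcal G)=\|\eta_{\mathcal G}\|^2_{\mu}$ to get $c_1\le c_2^2$ and $c_2\le\sqrt{c_1}$. Your step (ii) is an extra embedding check that the paper leaves implicit, since it works directly in $C_D\otimes S_1(I_A,O_A)\otimes S_1(I_B,O_B)$, and your complementation argument for $\mu$ is sound; however, your parenthetical is mistaken, because $\otimes_h$ and $C\otimes_{\min}(\cdot)$ are in fact injective, so only $\mu$ genuinely needs that argument.
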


Since we know of some examples of (even classical) games showing that item 1 in the previous theorem fails, we conclude that item 2 must also be false, providing a negative answer to the open question posed by Blecher in  \cite{Blecher92} (see also \cite{Pi02}) and  which can be referred to as the existence of a Grothendieck theorem in the category of operator spaces. In this way, the approach presented in this work allows for a deeper understanding of the recent resolution of that problem in \cite{Ara}. Indeed, the present work is strongly motivated by the article \cite{Ara}, as it was in that earlier study that the deep connections between certain problems in operator algebras and the no-signalling value of quantum games first began to emerge. It is important to note, though,  that the main result in \cite{Ara} is much stronger than the one followed from this work, as it provides a counterexample to item 2 above when the spaces $S_1$ are replace by its commutative analog $\ell_1$, a case for which the approach followed in the present paper does not offer any insight. In any case, we find it instructive - and for this reason we include it in this work - to show how a nonlocal game can be translated into a concrete tensor in the space $C \otimes S_1^n \otimes S_1^n$ that provides a counterexample to item 2 of the previous theorem and, consequently, to the matricial version of Grothendieck's theorem.

The connections revealed in the previous paragraphs and particularly the resolution of the problem regarding Grothendieck's theorem, motivate us to examine this problem from a more quantitative perspective. Hence, in the final part of this work, we establish some upper bounds for the largest possible value of the constant $c_2$ above, as a function of the rank of the tensor  $t$. Specifically, we can prove:
\begin{theorem}\label{thm: Main_III}
    Given any two-prover quantum game $\mathcal G(\ket{\psi}, P)$ with corresponding half tensor $\eta_\mathcal G$ we have the following inequality:\[
\norm{\eta_{\mathcal G}}_{C_D (S_1(I_A,O_A) \otimes_\mu S_1(I_B,O_B))} \leq K\sqrt{\operatorname{rank}(\eta_\mathcal G)}\norm{\eta_\mathcal G}_{C_D(S_1(I_A,O_A) \otimes_{min} S_1(I_B,O_B))}, 
    \]where $K$ is a universal constant.
    
    Hence, $\omega_{SNOS}(\mathcal G)\leq K^2 \operatorname{rank}(\eta_\mathcal G)\omega^*(\mathcal G)$ for every two-prover quantum game $\mathcal G$.
    \end{theorem}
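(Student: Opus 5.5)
We will prove the inequality by factoring it through the Haagerup tensor norm. The second assertion then follows by squaring, since Theorem~\ref{thm: Main_I} b) gives $\omega_{SNOS}(\mathcal G)=\|\eta_{\mathcal G}\|^2_{C_D(\cdot\otimes_\mu\cdot)}$, while the quantum value $\omega^*(\mathcal G)$ equals $\|\eta_{\mathcal G}\|^2_{C_D(\cdot\otimes_{\min}\cdot)}$. The symmetrized Haagerup norm $\otimes_\mu$ is dominated by both $\otimes_h$ and its flipped version, since it is the largest norm below both. So it suffices to prove
\[
\|\eta\|_{C_D(X\otimes_h Y)}\le K\sqrt{r}\,\|\eta\|_{C_D(X\otimes_{\min}Y)},
\]
where $r=\operatorname{rank}(\eta)$, $X=S_1(I_A,O_A)$ and $Y=S_1(I_B,O_B)$. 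Here the rank is that of $\eta$ viewed as an element of $X\otimes Y$, i.e.\ after absorbing the $C_D$ leg into one side, or as a map between Alice's and Bob's registers. The first task is to fix this interpretation precisely, so that the Haagerup norm is computed in a space of dimension controlled by $r$.

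\textbf{Step 1 (reduction to a finite-dimensional subspace).} Write $\eta=\sum_{i=1}^r x_i\otimes y_i$ with $x_i\in C_D\otimes X$ and $y_i\in Y$, and let $F=\operatorname{span}\{y_i\}\subset Y$, so that $\dim F\le r$. By injectivity of the min norm, $\|\eta\|_{C_D(X\otimes_{\min}F)}=\|\eta\|_{C_D(X\otimes_{\min}Y)}$. The Haagerup norm is also injective. Hence it suffices to compare $\otimes_h$ and $\otimes_{\min}$ on $C_D\otimes X\otimes F$, where $F$ is an $r$-dimensional subspace of a noncommutative $L_1$ space.

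\textbf{Step 2 (factorization of $F$).} We use that $S_1$ has cotype 2 in the operator-space sense. More precisely, any $r$-dimensional subspace $F$ of $S_1$ admits a factorization $\mathrm{id}_F=vu$ with $u:F\to R_r\cap C_r$ (or into $OH_r$) and $v$ going back, where
\[
\|u\|_{cb}\,\|v\|_{cb}\le K\sqrt r .
\]
A first candidate is the John-type estimate $d_{cb}(F,OH_r)\le\sqrt r$, refined through Pisier's results on subspaces of $S_1$. With this in hand, one needs, for the target space $E=R_r\cap C_r$ or $OH_r$, the bound
\[
\|\cdot\|_{C_D(X\otimes_h E)}\le K'\,\|\cdot\|_{C_D(X\otimes_{\min}E)} .
\]
For $E=C_r$ this holds isometrically, since $X\otimes_h C=X\otimes_{\min}C$. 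Therefore the natural route is to factor through $C_r$ with a cost of $\sqrt r$: the identity $F\to C_r\to F$ has cb-cost at most $\sqrt r$ by the cb Banach--Mazur bound for $r$-dimensional spaces. Note that this gives $\sqrt r$ directly, via Pisier's $d_{cb}(E,C_r)$-type estimates. Then
\[
\|\eta\|_{h}\le\|\mathrm{id}\otimes v\|\,\|(\mathrm{id}\otimes u)\eta\|_{C_D(X\otimes_h C_r)}=\|v\|_{cb}\,\|(\mathrm{id}\otimes u)\eta\|_{\min}\le\|v\|_{cb}\|u\|_{cb}\,\|\eta\|_{\min},
\]
using that $\otimes_h$ and $\otimes_{\min}$ are both functorial for cb maps. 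The $C_D$ leg is harmless because $C_D(Z)=C_D\otimes_h Z$ is also functorial.

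\textbf{Main obstacle.} The key point is getting $\|u\|_{cb}\|v\|_{cb}\lesssim\sqrt r$ rather than $r$. The general bound for $d_{cb}(F,C_r)$ can be as large as $r$ (compare $R_r$ with $C_r$), so we must exploit the structure of $S_1$. Here I would use that $S_1^*=B(H)$ and that subspaces of $S_1$ embed completely isomorphically, with a $\sqrt r$ loss, into $R\cap C$-like spaces via the noncommutative Grothendieck/Haagerup--Musat factorization. The factor through $R_r\cap C_r$ then costs the identity $R\cap C\to C$, which is a complete contraction. I also expect some care to be needed in placing the rank on the side where the factorization is applied.
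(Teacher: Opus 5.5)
\textbf{The reduction to a single Haagerup order fails.} The inequality you set out to prove, $\|\eta\|_{C_D(X\otimes_h Y)}\le K\sqrt r\,\|\eta\|_{C_D(X\otimes_{\min}Y)}$, is false. The step meant to justify it rests on a misremembered identity. The correct identities are $C\otimes_h X=C\otimes_{\min}X$ and $X\otimes_h R=X\otimes_{\min}R$. On the other side one has $X\otimes_h C=X\hat\otimes C$. So with $F$ on the right you would need to factor through $R$, and for $F=C_r$ that costs $r$.

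\textbf{A counterexample.} Take $D=1$. Choose the input maximally entangled between the referee's register and $I_AI_B$, and $P=|\gamma\rangle\langle\gamma|$. Then every nonzero element of $X\otimes Y$ is a positive multiple of a half tensor, and both sides are homogeneous. Let $u_1,\dots,u_r\in X$ span a completely isometric copy of $R_r$, and $w_1,\dots,w_r\in Y$ a copy of $C_r$ (a row and a column of matrix units in $S_1$). Put $\eta=\sum_i u_i\otimes w_i$. By injectivity, $\|\eta\|_{X\otimes_h Y}=\|\sum_i e_i\otimes e_i\|_{R_r\otimes_h C_r}=r$, since $R_r\otimes_h C_r=S_1^r$. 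On the other hand $\|\eta\|_{X\otimes_{\min}Y}=1$, since $R_r\otimes_{\min}C_r=S_\infty^r$.

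Here $\dim F=r$, so the ratio is $r$, not $O(\sqrt r)$. With the paper's notion of rank (the minimum over the three bipartitions, including $C_D|XY$), this $\eta$ even has rank $1$. Exchanging the roles of $R_r$ and $C_r$ defeats the other order. So no fixed order of $\otimes_h$ can give the theorem; the symmetrization in $\mu$ is essential (compare the Remark after Proposition~\ref{prop: min_mu_cb}).

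\textbf{Step 2 also fails.} The same example refutes your claim that an $r$-dimensional $F\subset S_1$ factors through $C$ with cb-cost $\sqrt r$. $R_r$ is such a subspace, and $\gamma_C(\id_{R_r})=r$, because cb maps $R_r\to C$ and $C\to R_r$ have cb norm equal to their Hilbert--Schmidt norm. Passing through $R_r\cap C_r$ and then to $C$ does not help either: nothing brings you back from $C$ to $F$ at bounded cost.

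\textbf{What is missing.} You need to split the factorization into a row part and a column part, and estimate each in the Haagerup order where it is harmless. The paper does this as follows.
\begin{enumerate}
\item John's theorem gives $a:F\to\ell_2^r$ and $b:\ell_2^r\to F$ with $ba=\id_F$, $\pi_2(a)\le\sqrt r$ and $\|b\|\le1$.
\item The map $a$ extends to $\tilde a:S_1\to R_r\cap C_r$ with $\|\tilde a\|_{cb}\lesssim\sqrt r$.
\item The noncommutative little Grothendieck theorem makes $j\circ b:R_r\cap C_r\to S_1$ completely bounded. Exactness of $R_r\cap C_r$ (Pisier--Shlyakhtenko) then lets it factor through $R\oplus C$.
\item This yields $v:S_1\to S_1$ with $v\circ j=j$ and $v=v_1+v_2$, where $\gamma_R(v_1)+\gamma_C(v_2)\lesssim\sqrt r$.
\item The $v_1$-piece is controlled through $R\otimes_{\min}X=R\otimes_\mu X$, which uses the flipped Haagerup term. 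The $v_2$-piece is controlled through $C\otimes_{\min}X=C\otimes_\mu X$.
\end{enumerate}

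\textbf{The $C_D$-leg case.} You also need the case where the rank is attained on the $C_D$ leg, which your proposal does not address. There the paper uses $\|\id:C_n\to\ell_\infty^n\|_{cb}=1$ and $\|\id:\ell_\infty^n\to C_n\|_{cb}=\sqrt n$. It combines these with the Haagerup--Musat inequality $\|z\|_{S_1\otimes_\mu S_1}\le2\|z\|_{S_1\otimes_{\min}S_1}$, applied coordinatewise. That estimate again fails for $\otimes_h$, by the example above.

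Your derivation of the second assertion by squaring is fine.
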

    
The previous theorem improves upon the upper bounds established in \cite{Crann} in several respects (see Section~\ref{Sec: Grothendieck's Theorem for Operator Spaces} for further discussion). In a forthcoming paper, we will show that this theorem is optimal up to logarithmic factors. Indeed, additional examples of quantum games, which may exhibit more extreme behaviors and further illustrate the optimality of these bounds, will also be presented there. In that work, we aim to demonstrate the utility of operator space techniques in this context.

\section{Preliminaries}

\subsection{Operator spaces}

In this section we briefly recall the basic notions on operator spaces and their tensor products used throughout the paper; see \cite{Ef00, Pi03} for detailed references. 

An \emph{operator space} is a closed subspace \(X \subset \mathcal B(H)\), where \(\mathcal B(H)\) denotes the bounded operators on a Hilbert space \(H\). Such an inclusion induces matrix norms \(\|\cdot\|_n\) on \(M_n(X)\) via \(M_n(X)\subseteq M_n(\mathcal B(H))\simeq \mathcal B(H^{\oplus n})\). By Ruan's theorem \cite{Ru88}, a complex Banach space \(X\) equipped with norms  \(\|\cdot\|_n\) on \(M_n(X)\) satisfying Ruan's axioms arises in this way; we then say that \(X\) has an \emph{operator space structure} (o.s.s.).

A linear map \(T:X\to Y\) between operator spaces is \emph{completely bounded} if 
\[
\|T\|_{cb}:=\sup_n\|\id_n\otimes T:M_n(X)\to M_n(Y)\|<\infty.
\]
If \(\|T\|_{cb}\le1\), \(T\) is \emph{completely contractive}.  It is a \emph{complete isomorphism} (resp. \emph{complete isometry}) when both \(T\) and \(T^{-1}\) are completely bounded (resp. contractive).

Given an operator space $X$, then $X^*$ has a privileged operator space structure with norms on $M_n(X^*)$ given via the identification \[M_n(X^*)  =  \text{CB}(X, M_n),\] where the latter denotes the space of completely bounded maps $T: X \to M_n$ endowed with the norm $\|\cdot\|_{cb}$.  Every $C^*$-algebra $A$ has a natural operator space structure. Therefore, by duality, $\ell_1$ and the space of trace class operators $S_1$ also have a natural operator space structure, as they are the duals of $c_0$ and the space of compact operators $S_\infty$, respectively. More generally, given two complex Hilbert spaces $\h$ and $K$, we will denote by $S_p(\h, K)$ the spaces of $p$-Schatten class operators from $H$ to $K$. In particular, we will write $S_p(\h)$ when $\h=K$. One can endow the spaces $S_p(\h, K)$ with an o.s.s. by interpolating between $S_1(\h, K)$ and $S_\infty(\h, K)$. When, working in the operator space category, the duality between $S_p(\h, K)$ and $S_{p'}(\h, K)$, when $p$ and $p'$ are conjugate indices, is given \cite{Ef00} by 
\begin{align}\label{duality OS}
\langle A, B\rangle=\tr(A^TB).
\end{align}

The theory of vector-valued noncommutative $L_p$ spaces was developed by Pisier in \cite{Pi98}. Briefly, given an operator space \(X\), once operator space structures have been assigned to \(S_\infty(H; X)\) and, by duality, also to \(S_1(H; X)\), one can define an operator space structure on \( S_p(H; X) = (S_\infty(H; X), S_1(H; X))_{1/p} \),  for every \(1 \leq p \leq \infty\) by interpolation. In particular, we can define the operator spaces \(S_p(H; S_q(K))\) for all \(p,q\) in such a way that many of the properties of the classical spaces are preserved. For instance, the duality relations 
\begin{align}\label{Eq. Dual_p,q}
S_p(H; S_q(K))^* = S_{p'}(H; S_{q'}(K)),
\end{align}
hold, where (\ref{duality OS}) denotes the associated dual action.

According to \cite[Lemma 1.7]{Pi98} given $x\in S_\infty(H; X)$, where $X$ is any given operator spaces and $1\leq p\leq \infty$, 
\begin{align}\label{Ec:S_infty(X)}
\|x\|_{S_\infty(H;X)}=\sup\Big\{\big\|(A\otimes 1_{X})x(B\otimes 1_{X})\big\|_{S_{p}(H;X)}: A,B\in B_{S_{2p}(H)}\Big\}.
\end{align}

Also, \cite[Theorem 1.5]{Pi98} gives us, for every $1\leq p<\infty$, that
\begin{align}\label{Ec:S_1(X)}
\|x\|_{S_p(H;X)}=\inf \big\{\|A\|_{S_{2p}(H)}\|Z\|_{S_\infty(H; X)}\|B\|_{S_{2p}(H)}\big\},
\end{align}where the last infimum runs over all representations of the form $x=\big(A\otimes 1 \big)Z\big(B\otimes1 \big)$.

Throughout this work we shall frequently refer to the \textit{column} and \textit{row} Hilbert operator spaces.  
Given a Hilbert space $H$, its column operator space $H_c$ is defined through the canonical identification
\[
H \simeq \mathcal{B}(\mathbb{C}, H).
\]
In the finite-dimensional case $H = \ell_2^n$, we simply write $C_n$.  
The row operator space is defined by duality as $H_r = (H_c)^*$. More precisely, since the dual space  $H^*$ is naturally identified with the conjugate Hilbert space $\overline{H}$, we have $(H_r)^* = (\overline{H})_c$ and $(H_c)^* = (\overline{H})_r$.  For $H = \ell_2^n$, we denote $R_n$ the corresponding row space and we write $R_n^* = C_n$ and $C_n^* = R_n$.

The ket-bra notation will be extremely useful in this work. Recall that given a Hilbert space $H$, we use the notation $\lvert \xi \rangle \in H$ to denote elements of the space, whereas we use $\langle \xi \rvert \in H^*$ to denote elements of the dual. In this sense, if we consider $H = \mathbb{C}^n$, we naturally have $\lvert \xi \rangle = \sum_{i=1}^n \alpha_i \lvert i \rangle \in C_n,$
while their duals in $R_n$ are the row vectors $\langle \xi \rvert = \sum_{i=1}^n \overline{\alpha_i} \langle i \rvert$. Here, $(|i\rangle)_{i=1}^n$ denotes the canonical (or computational) basis of $\C^n$. This notation is also very useful for denoting rank-one operators. Indeed, given two vectors $\xi$, $\eta$, we denote by $\lvert \xi \rangle \langle \eta \rvert : \mathbb{C}^n \to \mathbb{C}^n$ the operator defined by $\lvert \xi \rangle \langle \eta \rvert ( \lvert z \rangle)= \langle \eta \rvert z \rangle \lvert \xi \rangle$.

\subsubsection{Tensor norms}

An abstract framework for operator space tensor products was established by Blecher and Paulsen in \cite{BlPa91}, where they introduced the \textit{projective} and \textit{injective} operator space tensor norms.

Given operator spaces $X$ and $Y$, and $w \in M_n(X \otimes Y)$, the \textit{operator space projective norm} of $w$ is defined as
\[
\|w\|_{\wedge}
:= \inf\{\|\alpha\|\, \|x\|\, \|y\|\, \|\beta\|\},
\]
where the infimum ranges over all $p,q,n \in \mathbb{N}$, $x \in M_p(X)$, $y \in M_q(Y)$, and $\alpha \in M_{n,pq}$, $\beta \in M_{pq,n}$ satisfying $w = \alpha(x \otimes y)\beta$.  
The corresponding operator space is denoted by $X \hat{\otimes} Y$.  
As suggested by its name, the projective tensor product preserves (complete) quotients, and serves as the predual of the space $\mathrm{CB}(X, Y^*)$, yielding the complete isometry
\[
(X \hat{\otimes} Y)^* = \mathrm{CB}(X, Y^*).
\]

The \textit{injective} (or \textit{minimal}) tensor product is defined via inclusion into the minimal C$^*$-algebra tensor product.  
If $X \subset \mathcal{B}(H)$ and $Y \subset \mathcal{B}(K)$, then there is a completely isometric embedding
\[
X \otimes_{\min} Y \subset \mathcal{B}(H \otimes_2 K),
\]
so that the minimal norm is preserved under (complete) isometries.  
For $w \in M_n(X \otimes Y)$, this norm satisfies
\[
\|w\|_{M_n(X \otimes_{\min} Y)}
= \sup \big\|
(\id \otimes T \otimes S)(w)
\big\|_{M_n(\mathcal{B}(H \otimes K))},
\]
where the supremum is taken over all Hilbert spaces $H,K$ and complete contractions $T \!:\! X \to \mathcal{B}(H)$, $S \!:\! Y \to \mathcal{B}(K)$.

In contrast with the projective case, we have the completely isometric embedding
\begin{align}\label{Eq: min_cb}
X \otimes_{\min} Y \subseteq \mathrm{CB}(X^*, Y).
\end{align}
Hence, the projective and injective norms are dual to each other, and for finite-dimensional operator spaces $X, Y$, one has the completely isometric identifications
\[
(X \hat{\otimes} Y)^* = X^* \otimes_{\min} Y^*,
\qquad
(X \otimes_{\min} Y)^* = X^* \hat{\otimes} Y^*.
\]

Given operator spaces $X \subset \mathcal{B}(H)$ and $Y \subset \mathcal{B}(K)$, the \textit{Haagerup tensor norm} of $w \in M_n(X \otimes Y)$ is defined by
\begin{align}\label{Ec. Haagerup_norm}
\|w\|_{M_n(X \otimes_h Y)} := \inf \big\{ \|u\|_{M_{n,r}(X)} \, \|v\|_{M_{r,n}(Y)} \big\},
\end{align}
where the infimum is taken over all $r \in \mathbb{N}$ and matrices $u, v$ satisfying 
$w_{ij} = \sum_{k=1}^r u_{ik} \otimes v_{kj}$ for all $i,j$.  
The resulting operator space, denoted $X \otimes_h Y$, is called the \textit{Haagerup tensor product}.  
Although $\otimes_h$ is injective, projective, and self-dual, it is well known that it is not commutative.

These tensor norms behave well under tensoring of linear maps.  
If $\alpha$ denotes any of the projective, injective, or Haagerup norms, and $T_i : X_i \to Y_i$ are completely bounded maps for $i=1,2$, then
\[
T_1 \otimes T_2 : X_1 \otimes_\alpha X_2 \longrightarrow Y_1 \otimes_\alpha Y_2
\]
is completely bounded, with
\[
\|T_1 \otimes T_2\|_{cb} = \|T_1\|_{cb} \, \|T_2\|_{cb}.
\]
This property is known as the \textit{metric mapping property} in the category of operator spaces.  
By iterating this construction, one can extend the above tensor norms to tensor products involving more than two operator spaces.  
All three norms are associative, and their fundamental properties---such as projectivity, injectivity, and duality---remain valid and behave consistently under these extensions. It is worth noting that, unlike the projective and minimal norms, the Haagerup norm fails to be commutative.

The tensor norms introduced above are closely connected when considering the column and row operator spaces.  
Indeed, for every operator space $X$, one has the complete isometries 
\begin{align}\label{Eq: indetity_min_h_1}
C \otimes_{\min} X = C \otimes_h X 
\quad \text{and} \quad 
X \otimes_{\min} R = X \otimes_h R.
\end{align} Given an operator space $X$, one usually writes $C \otimes_{\min} X=C(X)$
By duality, the corresponding relations
\begin{align}\label{Eq: indetity_min_h_2}
R \,\hat{\otimes}\, X = R \otimes_h X
\quad \text{and} \quad
X \,\hat{\otimes}\, C = X \otimes_h C
\end{align}
also hold completely isometrically.

Given any operator space $X$, there are two completely isometric identifications which will play an important role in this work \cite[Corollary 5.10 \& 5.11]{Pi03}\footnote{Although the cited reference only discusses the case of $\h=K$, the proof for the statement about rectangular matrices is completely analogous.}. The first one is given by the map $J:\h_c\otimes_h X\otimes_h (K^*)_r\rightarrow S_\infty(K, \h)\otimes_{min} X$, defined by
\begin{align}\label{Haagerup K}
J\Big(\sum_{i,j}e_{i,1}\otimes x_{i,j}\otimes e_{1,j}\Big)=\sum_{i,j} e_{i,j}\otimes x_{i,j}.
\end{align} The second one is given by the map $K:(\h^*)_r\otimes_h X\otimes_h K_c\rightarrow S_1(K, \h)\hat{\otimes} X$, defined by
\begin{align}\label{Haagerup T}
K\Big(\sum_{i,j}e_{1,i}\otimes x_{i,j}\otimes e_{j,1}\Big)=\sum_{i,j} e_{i,j}\otimes x_{i,j}.
\end{align} It is instructive to see the above identifactions using bra-ket notation: 
\begin{align*}
&J\big(\ket{\psi} \otimes x \otimes \bra{\eta}\big)=\sum_{i,j}\op{\psi}{\eta}\otimes x,\\&
K\big(\bra{\eta}\otimes x\otimes \ket{\psi}\big)=\sum_{i,j}\op{\overline{\eta}}{\overline{\psi}}\otimes x.
\end{align*}

As mentioned above, the non-commutativity of the Haagerup tensor norm naturally motivates the definition of the \textit{symmetrized Haagerup tensor product}, denoted by $\otimes_{\mu}$.  
For $w \in M_n(X \otimes Y)$, its norm is given by
\begin{align}\label{Def_mu_norm}
\|w\|_{\mu}
:= \inf \big\{
\|w_1\|_{M_n(X \otimes_h Y)}
+ \|w_2^T\|_{M_n(Y \otimes_h X)} :\,
w = w_1 + w_2
\big\},
\end{align}
where, for $w = \sum_i a_i \otimes x_i \otimes y_i$, we set
$w^T = \sum_i a_i \otimes y_i \otimes x_i$. This can be rephrased in the following way (see \cite[Theorem~1]{Oik99}):  
Given operator spaces $X$ and $Y$, consider the map
\begin{align}\label{mu_quotient}
Q : (X \otimes_h Y) \oplus_1 (Y \otimes_h X) \longrightarrow X \otimes_{\mu} Y
\end{align}
defined on the direct sum of the linear tensor products by 
\[
Q(u \oplus v) = u + v^T.
\]
Then, $Q$ extends to a complete metric surjection from 
$(X \otimes_h Y) \oplus_1 (Y \otimes_h X)$ onto $X \otimes_{\mu} Y$.

Given an operator space $X$, its universal unital operator algebra $OA(X)$ (see \cite[Chapter~6]{Pi03} for a detailed construction) is characterized by the following universal property:  
for every complete contraction $T : X \to \mathcal{B}(H)$, there exists a unique unital homomorphism 
$\pi : OA(X) \to \mathcal{B}(H)$ extending $T$, where $X$ is naturally embedded in $OA(X)$.

This algebra plays a key role since, for operator spaces $X$ and $Y$, one has (see \cite[Lemma~5]{Oi99})
\begin{align*}
X \otimes_{\mu} Y
\subset OA(X) \otimes_{\max} OA(Y)
\quad \text{completely isometrically,}
\end{align*}
where $\otimes_{\max}$ denotes the maximal tensor product of unital operator algebras.  
Consequently, for any $w \in M_n(X \otimes Y)$,
\begin{align}\label{Definition_mu_norm}
\|w\|_{M_n(X \otimes_{\mu} Y)}
= \sup
\big\|
(\id \otimes T \odot S)(w)
\big\|_{M_n(\mathcal{B}(H))},
\end{align}
where the supremum runs over all complex Hilbert spaces $H$ and all complete contractions
$T : X \to \mathcal{B}(H)$ and $S : Y \to \mathcal{B}(H)$
with commuting ranges.  
Here, $T \odot S$ denotes the bilinear map defined by
$T \odot S(x \otimes y) = T(x)S(y)$.

In 1991, Blecher \cite{Blecher92} conjectured a fully functorial version of the \emph{operator space Grothendieck theorem}: for arbitrary C$^*$-algebras $A$ and $B$, it should hold that
\begin{equation}
A^* \otimes_{\min} B^* \approx A^* \otimes_\mu B^*, \label{Blecher conjecture}
\end{equation}
completely isomorphically. In a groundbreaking result by Pisier and Shlyakhtenko \cite{Pi02}, complemented by the work \cite{HaMu08}, the authors proved that Equation~\eqref{Blecher conjecture} holds isomorphically. In particular, it follows from \cite[Theorem 1.1]{HaMu08} (see \cite[Proposition~18.2]{pisier2012grothendieck} for a restatement of this result) that, for any $z \in S_1 \otimes S_1$, one has
\begin{align}\label{Eq: GT_OS}
\|z\|_{S_1 \otimes_{\mu} S_1} \leq 2\|z\|_{S_1 \otimes_{\min} S_1}.
\end{align}

Blecher's conjecture, in its most general form, remained open until it was disproved in \cite{Ara}. We will return to this question in Section~\ref{Sec: Grothendieck's Theorem for Operator Spaces} of the present work.

\subsection{Completely positive maps and the Choi matrix}

Let us recall some basic facts concerning completely positive maps and their Choi representations.  

Given two finite-dimensional Hilbert spaces $H$ and $K$,  a linear map $\varphi : S_\infty(H) \to S_\infty(K)$ is said to be \emph{completely positive (cp)} if, for every $n \in \mathbb{N}$, the amplified map
\[
\id_{n} \otimes \varphi : M_n(S_\infty(H)) \longrightarrow M_n(S_\infty(K))
\]
is positive; that is,
\[
(\id_{n} \otimes \varphi)(X) \ge 0
\quad \text{whenever } X \ge 0.
\]

Given a linear map $\varphi : S_\infty(H) \to S_\infty(K)$, its associated \emph{Choi matrix} is defined as
\begin{align}\label{map_tensor_I}
C_\varphi = \sum_{i,j} |i\rangle\langle j| \otimes \varphi_{i,j}
\;\in\; S_1(H) \otimes S_\infty(K),
\end{align}
where $(|i\rangle)_i$ is any orthonormal basis of $H$ and $\varphi_{i,j} := \varphi(|i\rangle\langle j|)$ for all $i,j$. Note that $\lvert i \rangle \langle j \rvert$ is identified with the matrix having a $1$ in the $(i,j)$ entry and $0$ elsewhere.

It is well known that the map $\varphi$ is completely positive if and only if its Choi matrix $C_\varphi$ is positive semidefinite.  Moreover, one has the reconstruction formula
\begin{align}\label{map_tensor_II}
\varphi(\rho)
=\tr_{S_\infty(H)}
\Big(
C_\varphi \cdot (\rho^{T} \otimes 1_{S_\infty(K)})
\Big),
\end{align}
where $\rho^T$ denotes the transpose of $\rho$ with respect to the chosen basis and $\tr_{S_\infty(H)}$ is the partial trace over $S_\infty(H)$.

Equations (\ref{map_tensor_I}) and (\ref{map_tensor_II}) establish an identification between completely positive maps $\varphi:S_\infty(\h)\rightarrow S_\infty(K)$ and positive elements in $S_\infty(\h)^*\otimes S_\infty(K)$ and, moreover, it leads to a (completely) isometric identification 
\begin{align}\label{identification cb-min}
CB(S_1(\h), S_1(K))=S_\infty(\h; S_1(K)).
\end{align}

Let us also recall that $\varphi:S_\infty(\h)\rightarrow S_\infty(K)$ is completely positive if and only if $\varphi^\dag: S_\infty(K)\rightarrow S_\infty(\h)$ is completely positive. Here, $\varphi^\dag$ is the adjoint map of $\varphi$ defined by means of the duality
\begin{align}\label{duality Hilbert}
\langle\langle A,\varphi^\dag (B)\rangle\rangle=\langle\langle \varphi(A), B\rangle\rangle\hspace{0.5 cm}\text{for every $A\in S_\infty(\h),\, B\in S_\infty(K)$},
\end{align}where $\langle\langle X,Y\rangle\rangle=\tr(X^\dag Y)$ is the standard inner product. 

Moreover, in this case, $\varphi$ is trace preserving if and only if $\varphi^\dag$ is unital.

\begin{remark}\label{rmk: notation_tensors}
 Since quantum channels are completely positive and trace preserving maps, every quantum channel can be naturally seen as a map $\varphi:S_1(\h)\rightarrow S_1(K)$, in the sense that it defines a (complete) contraction between these spaces. According to Equation (\ref{identification cb-min}) and the duality $S_\infty(\h)^*=S_1(\h)$ explained before, we will write $$C_\varphi\in S_\infty(\h)\otimes S_1(K).$$ On the other hand, the adjoint map of a quantum channel can be naturally seen as a map $\varphi^\dag: S_\infty(\h)\rightarrow S_\infty(K)$, which will be identified with a tensor $$C_{\varphi^\dag}\in S_1(\h)\otimes S_\infty(K).$$ Note that in finite dimension, the spaces $S_\infty(\h)$ and $S_1(\h)$ are linearly isomorphic, but the previous notation indicates the natural spaces of these tensors when considering the right norms. Obviously, the remark also applies when considering rectangular matrices $S_\infty(\h, K)$.
\end{remark}

Note that 
\begin{align}\label{Eq. trace_adjoint}
\tr(\varphi(A)B)=\langle\langle B^\dag,\varphi(A)\rangle \rangle=\langle\langle \varphi^\dag(B^\dag),A\rangle \rangle=\tr((\varphi^\dag(B^\dag))^\dag A)=\tr(A(\varphi^\dag(B^\dag))^\dag).
\end{align}

This inequality allows us to relate the duality $\langle \cdot ,\cdot \rangle$ with $ \varphi^\dag$. Indeed, it holds that $$\langle B,\varphi(A)\rangle=\tr (\overline{\varphi^\dag(\bar{B})}A^T)=\langle \overline{\varphi^\dag(\bar{B})},A\rangle.$$

The previous calculations become more clear when $\varphi$ is completely positive. In this case, we have $(\varphi^\dag(B^\dag))^\dag=\varphi^\dag(B)$ and one obtains 
\begin{align}
\tr(\varphi(A)\, B)=\tr(A \, \varphi^\dag(B)). \label{eq: duality for completely positive}
\end{align}

\begin{remark}\label{rem:Choi_adjoint}
Throughout the manuscript, we will often establish corresponding characterizations between the Choi matrix of a linear map and the Choi matrix of its adjoint operator. It is easy to see that if $F$ denotes the flip operator, then
\[
C_{\varphi^\dag} = F(\overline{C_\varphi}).
\]
Note that when $\varphi$ is completely positive, $C_\varphi$ is a positive semidefinite element, so we have
\[
C_{\varphi^\dag} = F(C_\varphi^T).
\]
This means that, up to conjugation---which will be irrelevant in this work---the tensors $C_{\varphi^\dag}$ and $C_\varphi$ are essentially the same, once the tensor indices are rearranged accordingly.

	Thus, in the remainder of this work, whenever we compute a norm, the indices $I_A$, $O_A$, $I_B$, and $O_B$ will determine the specific rearrangement of the tensor we are using.
	\end{remark}

In this work, we will frequently deal with linear maps that are completely contractive and/or completely positive.  
The following two results provide, respectively, useful characterizations of these classes of maps. They can be found in  \cite[Theorem 8.4]{paulsen2002completely} and  \cite[Corollary 2.27]{Watrous} respectively. 
\begin{theorem}\label{thm: cb decomposition as sum}
Let $H$ and $K$ be two finite dimensional Hilbert spaces and $\Phi: S_\infty(K) \to S_\infty(\h)$ be a linear completely bounded map. Then there exist operators $A_i \in S_\infty(K, H)$ and $B_i \in S_\infty(H,K), i=1,\dotsc,m$ such that \[
\Phi(x) = \sum_{i=1}^m A_i x B_i,\hspace{0.3 cm} x\in S_\infty(K),
    \] where $m \leq \dim{H}\dim{K}$. Moreover, \[
\cbnorm{\Phi} = \inf \Big\{  \Big\|\sum_{i}A_iA_i^\dag\Big\|^{\frac{1}{2}} \Big\|\sum_i B_i^\dag B_i\Big\|^{\frac{1}{2}} \Big\},
    \] where the infimum is taken over all families of operators $(A_i)_i, (B_i)_i$ such that $\Phi(\cdot) = \sum_i A_i(\cdot)B_i$ as above.
\end{theorem}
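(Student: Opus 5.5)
The plan is to derive the representation from a factorization through a Hilbert space (Haagerup/Wittstock–Paulsen), and then read off the norm formula from the factorization norms. First I reduce to the case where $\Phi$ is completely contractive by scaling. Then I apply the Wittstock factorization theorem for completely bounded maps on the C$^*$-algebra $S_\infty(K)$. It gives a Hilbert space $\widehat H$, a $*$-representation $\pi: S_\infty(K)\to \mathcal B(\widehat H)$, and operators $V: \widehat H\to H$, $W: H\to \widehat H$ with
\[
\Phi(x)=V\pi(x)W \quad\text{and}\quad \|V\|\,\|W\|=\|\Phi\|_{cb}.
\]

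Next, since $K$ is finite dimensional, $S_\infty(K)$ is a full matrix algebra $M_k$ with $k=\dim K$. Hence every (nondegenerate part of a) representation is unitarily equivalent to an amplification $x\mapsto x\otimes 1_{L}$ on $K\otimes L$. The degenerate part can be dropped because $\Phi$ is unital-free linear and $\pi(x)$ vanishes there. Choosing an orthonormal basis $(e_i)$ of $L$, I write
\[
V(x\otimes 1)W=\sum_i V_i\, x\, W_i,
\]
where $V_i=V(\cdot\otimes e_i)\in S_\infty(K,H)$ and $W_i=(1\otimes e_i^*)W\in S_\infty(H,K)$. I set $A_i=V_i$ and $B_i=W_i$. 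Then
\[
\sum_i A_iA_i^\dag=VV^\dag \quad\text{and}\quad \sum_i B_i^\dag B_i=W^\dag W,
\]
so the infimum in the statement is at most $\|V\|\,\|W\|=\|\Phi\|_{cb}$.

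To get the bound $m\le \dim H\dim K$, I note that the span of the ranges is constrained. One may replace $L$ by the subspace generated by the relevant components: $W(H)\subseteq K\otimes L$ has dimension at most $\dim H$, so it lies in $K\otimes L_0$ with $\dim L_0\le \dim H\dim K$. Compressing to $L_0$ does not change $\Phi$ and only decreases the norms. The main subtlety is exactly this dimension reduction: one must check that compressing $V$ to $K\otimes L_0$ preserves the formula. It does, because $(x\otimes 1)$ leaves $K\otimes L_0$ invariant. Alternatively one can argue via Carathéodory/linear algebra on the Choi-type tensor.

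For the reverse inequality I take any representation $\Phi(x)=\sum_i A_ixB_i$ and write it as a row times a diagonal times a column:
\[
\Phi(x)=[A_1\cdots A_m]\,\operatorname{diag}(x,\dots,x)\,[B_1;\dots;B_m]^T.
\]
Amplifying to $M_n$ gives $\|\Phi\|_{cb}\le \|\sum A_iA_i^\dag\|^{1/2}\|\sum B_i^\dag B_i\|^{1/2}$. Finally, the infimum is attained by compactness, since after normalization the families lie in a bounded finite-dimensional set.
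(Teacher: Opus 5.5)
Your argument is correct and follows the same route as the paper. The paper gives no proof of its own and cites Paulsen's Theorem 8.4, the Haagerup--Paulsen--Wittstock factorization $\Phi=V\pi(\cdot)W$ with $\|V\|\,\|W\|=\|\Phi\|_{cb}$; from it the stated form is read off exactly as you do, via the amplification structure of representations of $M_k$, the compression to $K\otimes L_0$ with $\dim L_0\le \dim H\dim K$, and the row--diagonal--column estimate for the reverse inequality. Two cosmetic fixes: the right-hand factor should be the column $[B_1;\dots;B_m]$ rather than its transpose, and the closing compactness argument is superfluous because your compressed factorization already attains the infimum.
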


\begin{theorem}[Stinespring dilation Theorem]\label{Stinespring}
Let $\Phi:S_1(\h)\rightarrow S_1(K)$ be a completely positive and trace preserving, where $\h$ and $K$ are finite dimensional Hilbert spaces. Then, there exists a Hilbert space $W$ and a linear isometry $V:\h\rightarrow K\otimes W$ such that $\Phi(\rho)=\tr_{W}(V\rho V^\dag)$ for every $\rho \in S_1(\h)$. By dualizing the previous statement, we obtain that $\Phi^{\dag}(\eta)=V^\dag (\eta\otimes 1_{S_\infty(W)})V$ for every $\eta\in S_\infty(K)$.
\end{theorem}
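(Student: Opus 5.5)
The plan is to pass through a Kraus decomposition obtained from the Choi matrix, then assemble the Kraus operators into a single isometry; the dual statement then follows from trace duality.

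\textbf{Step 1: Kraus operators from the Choi matrix.} Since $\Phi$ is completely positive, its Choi matrix $C_\Phi=\sum_{i,j}\op{i}{j}\otimes\Phi(\op{i}{j})$ is positive semidefinite on $\h\otimes K$. By the spectral theorem we write $C_\Phi=\sum_{k=1}^m \op{v_k}{v_k}$ with $m\le \dim\h\dim K$. Each vector $\ket{v_k}=\sum_i \ket{i}\otimes\ket{w_{k,i}}$ defines an operator $E_k:\h\to K$ by $E_k\ket{i}=\ket{w_{k,i}}$. Comparing matrix entries gives
\[
\Phi(\op{i}{j})=\op{w_{k,i}}{w_{k,j}}\ \text{summed over } k, \quad\text{that is,}\quad \Phi(\op{i}{j})=\sum_k E_k\op{i}{j}E_k^\dag .
\]
By linearity, $\Phi(\rho)=\sum_k E_k\rho E_k^\dag$ for every $\rho$. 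This identification can also be read off from the reconstruction formula (\ref{map_tensor_II}). Some care is needed with the transpose and conjugation conventions here, but working directly on matrix units avoids this.

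\textbf{Step 2: trace preservation.} From $\tr\Phi(\rho)=\tr\rho$ for all $\rho$, we get $\tr\big(\rho\sum_k E_k^\dag E_k\big)=\tr\rho$ for every $\rho$. Hence $\sum_k E_k^\dag E_k=1_\h$.

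\textbf{Step 3: the dilation.} Let $W=\C^m$ with orthonormal basis $(\ket{k})_k$, and define $V:\h\to K\otimes W$ by $V\ket{h}=\sum_k E_k\ket{h}\otimes\ket{k}$. Then $V^\dag V=\sum_k E_k^\dag E_k=1_\h$, so $V$ is an isometry. Moreover,
\[
V\rho V^\dag=\sum_{k,l}E_k\rho E_l^\dag\otimes\op{k}{l},
\]
and taking the partial trace over $W$ kills the off-diagonal terms. This leaves $\tr_W(V\rho V^\dag)=\sum_k E_k\rho E_k^\dag=\Phi(\rho)$.

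\textbf{Step 4: the dual form.} For $\eta\in S_\infty(K)$, we have
\[
\tr(\Phi(\rho)\eta)=\tr\big(V\rho V^\dag(\eta\otimes 1_W)\big)=\tr\big(\rho\, V^\dag(\eta\otimes 1_W)V\big).
\]
Since $\Phi$ is completely positive, (\ref{eq: duality for completely positive}) gives $\tr(\Phi(\rho)\eta)=\tr(\rho\,\Phi^\dag(\eta))$ for all $\rho$. Comparing the two expressions yields $\Phi^\dag(\eta)=V^\dag(\eta\otimes 1_{S_\infty(W)})V$.

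\textbf{Main obstacle.} The only delicate point is Step 1: matching the index conventions of $C_\Phi$ so that the factorization is $E_k\rho E_k^\dag$ and not a transposed or conjugated variant. The remaining steps are direct verifications.
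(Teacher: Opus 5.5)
Your proof is correct: computing on matrix units gives $\Phi(\op{i}{j})=\sum_k E_k\op{i}{j}E_k^\dag$ without any transpose or conjugation issue, trace preservation gives $\sum_k E_k^\dag E_k=1_H$ so the stacked map $V$ is an isometry, and the dual form follows from $\tr(\Phi(\rho)\eta)=\tr(\rho\,\Phi^\dag(\eta))$ for completely positive $\Phi$ together with nondegeneracy of the trace pairing. The paper does not prove this statement itself but quotes it from Watrous (Corollary 2.27), and your Choi matrix $\to$ Kraus $\to$ Stinespring route is essentially the standard argument given there.
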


\begin{remark}\label{Rem: Stines_general}
Note that, given the isometry $V:\h\rightarrow K\otimes W$ in Theorem $\ref{Stinespring}$ one can obtain a contraction $U:H\otimes W\to K\otimes W$ defining, by linearity, $U(|h\rangle\otimes |0\rangle)=V(|h\rangle)$ and $U(|h\rangle\otimes |i\rangle)=0$ for every $i\neq 0$, where here $(|i\rangle)_{i=0}^m$ is an orthonormal basis of $W$. In this case, we have that  $\Phi(\rho)=\tr_{W}\big(U(\rho\otimes |0\rangle\langle 0|) U^\dag\big)$  for every $\rho \in S_1(\h)$. Indeed, the operator $U$ defined above is a partial isometry. More generally, if $U$ is allowed to be an arbitrary contraction, then linear maps $\Phi$ admitting such a representation are precisely the completely positive and completely contractive ones.
\end{remark}

\section{no-signalling quantum strategies and symmetrized one-way communication}\label{sec: ns strategies and symmetrized one-way}
In order to assist exposition, when introducing particular quantum strategies we will motivate with its classical counterpart when possible.
\begin{definition}
    A tuple of nonnegative real numbers $\{P(a,b|x,y)\}_{x,y,a,b}^{I_A,I_B,O_A,O_B}$ is called a \emph{correlation or strategy} if $\sum_{a,b} P(a,b|x,y) = 1$ holds for every pair of inputs $x \in I_A, y \in I_B$. A \emph{quantum strategy} is a completely positive and trace preserving (cptp) map $\varphi: S_1(I_AI_B) \to S_1(O_AO_B)$.
\end{definition}

\begin{remark}
Note that in the first part of the previous definition, $I_A, I_B, O_A, O_B$ are sets over which the indices $x, y, a, b$ range, respectively. In contrast, in the definition of a quantum strategy, $I_A, I_B, O_A, O_B$ denote Hilbert spaces. Moreover, we write $I_A I_B = I_A \otimes I_B$, $O_A O_B = O_A \otimes O_B$. This contraction in the tensor notation will remain constant throughout this work. To simplify, for Hilbert spaces $H$ and $K$, we will simply write $HK$. Furthermore, this notation extends to vectors, so that we usually write $|\xi\rangle \otimes |\eta\rangle = |\xi\rangle |\eta\rangle$ or, even more compactly, $|\xi\eta\rangle$ when we need to shorten the notation due to long formulas. In particular, when working with elements of the canonical basis, we write $|i\rangle \otimes |j\rangle = |i\rangle |j\rangle = |ij\rangle$.
\end{remark}

Note that $\varphi$ is a quantum strategy if and only if $C_\varphi$ is positive semidefinite and
\[
\tr_{O_A O_B} C_\varphi = 1_{S_\infty(I_A I_B)}.
\]
In order to avoid cumbersome notation, at times when it is clear from context we will omit the index sets for player's questions and answers. 

\begin{definition}
    Given a strategy $\{P(a,b|x,y)\}_{x,y,a,b}$ we will call $P$ \emph{no-signalling} if there exist conditional probability distributions $\{P_1(a|x)\}_{x,a}, \{P_2(b|y)\}_{y,b}$ such that \begin{align}
        P_1(a|x) &= \sum_b P(a,b|x,y),\quad \text{for every} \,\, x,y,a \,\,\text{and} \label{eq: NS B not to A}\\
        P_2(b|y) &= \sum_a P(a,b|x,y),\quad \text{for every} \,\, x,y,b. \label{eq: NS A not to B}
    \end{align}
\end{definition}
We will denote the set of strategies satisfying Equation~\eqref{eq: NS B not to A} by $\mathcal{NOS}_{B \not\to A}$, and we will let $\mathcal{NOS}_{A \not\to B}$ denote the set of strategies satisfying Equation~\eqref{eq: NS A not to B}. If a strategy $P$ is in $\mathcal{NOS}_{B \not\to A} \cap \mathcal{NOS}_{A \not\to B}$ then we will call it \emph{no-signalling}, and the set of all no-signalling strategies will be denoted by $\mathcal{NOS}$.

The following definition was first considered in \cite{duan2015nons}.
\begin{definition}
    A quantum strategy $\varphi: S_1(I_AI_B) \to S_1(O_AO_B)$ is called \emph{no-signalling} if it satisfies the following two conditions: \begin{align}
\tr_{S_1(O_A)} \vp(\rho \otimes \s) &= \tr_{S_1(O_A)}\vp (\rho^\prime \otimes \s),\quad \text{for every}\,\,\rho,\rho^\prime \in S_1(I_A), \s \in S_1(I_B) \label{eq: QNS A not to B} \\ \tr_{S_1(O_B)} \vp(\rho \otimes \s) &= \tr_{S_1(O_B)} \vp(\rho \otimes \s^\prime), \quad \text{for every}\,\,\rho \in S_1(I_A), \s,\s^\prime \in S_1(I_B). \label{eq: QNS B not to A}
    \end{align}
\end{definition} For consistency with the notation introduced earlier, we will denote the set of quantum strategies satisfying Equation~\eqref{eq: QNS A not to B} by $\mathcal{NOS}_{A \not\to B}$ and set of quantum strategies satisfying Equation~\eqref{eq: QNS B not to A} by $\mathcal{NOS}_{B \not\to A}$. A quantum strategy $\vp$ satisfying both Equation~\eqref{eq: QNS A not to B} and Equation~\eqref{eq: QNS B not to A} will be called \emph{(quantum) no-signalling} and the set of all no-signalling quantum strategies will be denoted $\mathcal{NOS}$. 

Our main goal in this section is to show that quantum no-signalling strategies are precisely those which arise from allowing Alice to send one-way quantum information to Bob, but not the other way around, and also from allowing Bob to send one-way quantum information to Alice, but not the other way around. In particular, we prove that the set of quantum no-signalling strategies coincide with the so-called \emph{symmetrized one-way} quantum strategies. This result was already proved in  \cite{piani2006properties,Eggeling2002semicasual}. We add here a new proof for completeness.

Consider a cptp map $\Phi: S_1(I_AI_B) \to S_1(O_AO_B)$ and let $\Phi^\dag: S_\infty(O_AO_B) \to S_\infty(I_AI_B)$ denote its adjoint, which is necessarily unital completely positive (ucp). We consider the following two conditions: \begin{align}
    &\Phi^\dag(1_{S_\infty(O_A)} \otimes S_\infty(O_B)) \subset 1_{S_\infty(I_A)} \otimes S_\infty(I_B), \label{eq: alternate QNS A not to B} \\
    &\Phi^\dag(S_\infty(O_A) \otimes 1_{S_\infty(O_B)}) \subset S_\infty(I_A) \otimes 1_{S_\infty(I_B)}. \label{eq: alternate QNS B not to A}
\end{align}

We will prove that a quantum strategy $\Phi: S_1(I_AI_B) \to S_1(O_AO_B)$ satisfies Equation~\eqref{eq: QNS A not to B} if and only if its adjoint $\Phi^\dag: S_\infty(O_AO_B) \to S_\infty(I_AI_B)$ satisfies Equation~\eqref{eq: alternate QNS A not to B}. Similarly, $\Phi$ satisfies Equation~\eqref{eq: QNS B not to A} if and only if $\Phi^\dag$ satisfies Equation~\eqref{eq: alternate QNS B not to A}. First, a useful lemma: 

\begin{lemma}\label{lem: vanishing lemma}
    Consider a matrix $Z \in M_n \otimes M_m$ such that $\tr(Z(A \otimes B)) = 0$ for all hermitian $A \in M_n, B \in M_m,$ where $\tr(A) = 0$. Then there exists $X \in M_m$ such that $Z = 1_{M_n} \otimes X$.
\end{lemma}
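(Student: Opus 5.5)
Since the hypothesis is linear in $A$ and $B$, and every traceless matrix $A\in M_n$ is a complex combination of traceless hermitian matrices (namely $\frac{A+A^\dag}{2}$ and $\frac{A-A^\dag}{2i}$), and every $B\in M_m$ is a complex combination of hermitian ones, the plan is first to upgrade the hypothesis to
\[
\tr\big(Z(A\otimes B)\big)=0 \quad\text{for all } A\in M_n \text{ with } \tr(A)=0 \text{ and all } B\in M_m .
\]
Next, I would expand $Z=\sum_{i,j}|i\rangle\langle j|\otimes Z_{ij}$ with $Z_{ij}\in M_m$. Taking $A=|j\rangle\langle i|$, which is traceless whenever $i\neq j$, gives $\tr(Z(A\otimes B))=\tr(Z_{ij}B)=0$ for every $B\in M_m$, and hence $Z_{ij}=0$ for $i\neq j$. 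Taking instead $A=|i\rangle\langle i|-|k\rangle\langle k|$ gives $\tr((Z_{ii}-Z_{kk})B)=0$ for all $B$, so $Z_{ii}=Z_{kk}$ for all $i,k$. Setting $X:=Z_{11}$, we obtain $Z=\sum_i |i\rangle\langle i|\otimes X=1_{M_n}\otimes X$.

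An alternative, more conceptual route is to define $X=\frac1n\tr_{M_n}Z$ and observe that $W=Z-1\otimes X$ annihilates $A\otimes B$ for all traceless $A$ and also, by the definition of $X$, for $A=1$. Since $M_n=\mathrm{span}\{1\}\oplus\{\text{traceless}\}$, $W$ annihilates all of $M_n\otimes M_m$, and nondegeneracy of the trace pairing gives $W=0$.

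There is no real obstacle here. The only point requiring care is the first step, the passage from hermitian test matrices to arbitrary ones, which relies on the bilinearity of $(A,B)\mapsto\tr(Z(A\otimes B))$.
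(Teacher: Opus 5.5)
Your proposal is correct, and both of your routes work. The main argument differs from the paper's mostly in the choice of test matrices.

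The paper fixes orthonormal bases $(A_i)_i$ and $(B_j)_j$ of \emph{hermitian} matrices in $M_n$ and $M_m$, orthonormal for the pairing $\tr(xy)$, with $A_1$ and $B_1$ multiples of the identity. Every $A_i$ with $i\neq 1$ is then automatically traceless. Expanding $Z=\sum_{i,j} c_{ij}\,A_i\otimes B_j$ with $c_{ij}=\tr(Z(A_i\otimes B_j))$, the hypothesis applies directly to these basis elements and kills every coefficient with $i\neq 1$ at once, leaving $Z=1_{M_n}\otimes X$.

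Because the paper's test matrices are hermitian, it never needs your bilinearity/complexification step. The price is that it relies on two facts. First, such a hermitian basis containing the identity must exist; the paper cites this. Second, that basis must also be an orthonormal basis of $M_n$ over $\mathbb{C}$ for the Hilbert--Schmidt inner product, so that the coefficients really are given by the trace pairing.

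Your matrix-unit computation uses the non-hermitian test matrices $|j\rangle\langle i|$ and $|i\rangle\langle i|-|k\rangle\langle k|$. That is exactly why you need the preliminary upgrade of the hypothesis. In exchange, it is completely elementary and exhibits $X$ explicitly as the common diagonal block of $Z$, which equals $\frac1n\tr_{M_n}Z$.

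Your alternative route is essentially the paper's argument in basis-free form. The splitting $M_n=\mathbb{C}1\oplus\{\text{traceless}\}$ together with nondegeneracy of the trace pairing is precisely what the paper's hermitian basis with $A_1\propto 1$ encodes.
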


\begin{proof}
    Begin by choosing orthonormal bases $(A_i)_i \subset (M_n)_{s.a}, (B_j)_j \subset (M_m)_{s.a}$ such that $A_1 = 1_{M_n}, B_1 = 1_{M_m}$ see for instance \cite{devicente2006separability} for its existence. Here, the orthogonality is with respect to the inner product \[
\sp{x}{y} = \tr(xy).
    \] It must follow that for $i \neq 1 \neq j$, we have $\tr(A_i) = \tr(B_j) = 0$. In particular, we write $Z \in M_n \otimes M_m$ as $Z = \sum_{ij} Z_{ij}(A_i \otimes B_j), Z_{ij} = \sp{Z}{A_i \otimes B_j}$, where $Z_{i,j}$ are some complex numbers.
Suppose there exists a pair $(i,j)$ such that $Z_{ij} \neq 0, i \neq 1$. Then it follows $Z_{ij} = \sp{Z}{A_i \otimes B_j} \neq 0$, which contradicts our initial assumptions. Therefore, it must follow \[
Z = \sum_{j} Z_{1j}(A_1 \otimes B_j) = 1_{M_n} \otimes \Big(\sum_j Z_{1j}B_j\Big),
\] which finishes the proof. \qedhere
\end{proof}

We are now ready to prove the equivalence previously mentioned. 
\begin{prop}\label{prop: equivalence of QNS B not to A}
    Given a quantum channel $\Phi: S_1(I_AI_B) \to S_1(O_AO_B)$ with adjoint $\Phi^\dag: S_\infty(O_AO_B) \to S_\infty(I_AI_B)$, let $C_{\Phi^\dag} \in S_1(O_AO_B) \otimes S_\infty(I_AI_B)$ denote the Choi matrix of $\Phi^\dag$. The following conditions are equivalent: \begin{enumerate}
        \item $\tr_{S_1(O_B)}\Phi(\rho \otimes \s) = \tr_{S_1(O_B)} \Phi(\rho \otimes \s^\prime)$ for every state $\rho \in S_1(I_A)$ and $\s, \s^\prime \in S_1(I_B)$. 
        \item $\Phi^\dag(S_\infty(O_A) \otimes 1_{S_\infty(O_B)}) \subset S_\infty(I_A) \otimes 1_{S_\infty(I_B)}$.
        \item $\tr_{S_1(O_B)} C_{\Phi^\dag} = \wh{P} \otimes 1_{S_\infty(I_B)}$ for some positive semidefinite matrix $\wh{P} \in S_1(O_A) \otimes S_\infty(I_A)$ such that $\tr_{S_1(O_A)}\wh{P} = 1_{S_\infty(I_A)}$.
    \end{enumerate} The analogous result via interchanging the roles of Alice and Bob also holds. 
\end{prop}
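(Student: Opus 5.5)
We prove (1)$\Leftrightarrow$(2) by duality and (2)$\Leftrightarrow$(3) by translating (2) into a statement about the Choi matrix.

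\emph{(1)$\Rightarrow$(2).} Fix $X\in S_\infty(O_A)$, a state $\rho\in S_1(I_A)$ and states $\s,\s'\in S_1(I_B)$. By \eqref{eq: duality for completely positive},
\[
\tr\big(\Phi^\dag(X\otimes 1)(\rho\otimes\s)\big)=\tr\big(\Phi(\rho\otimes \s)(X\otimes 1)\big)=\tr\big(\tr_{S_1(O_B)}\Phi(\rho\otimes\s)\,X\big).
\]
By (1), the right-hand side does not depend on $\s$. Set $Z=\Phi^\dag(X\otimes 1)$. The display gives $\tr(Z(\rho\otimes(\s-\s')))=0$.

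Every traceless hermitian $B\in M_{I_B}$ is a real multiple of a difference of two states. Every hermitian $A$ is a real combination of states. Hence $\tr(Z(A\otimes B))=0$ for all hermitian $A$ and all traceless hermitian $B$.

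Lemma~\ref{lem: vanishing lemma} is stated with the trace condition on the first factor. Applying it with the roles of the tensor factors swapped (equivalently, to the flipped matrix) yields $Z=Y\otimes 1_{S_\infty(I_B)}$ for some $Y$. This is (2).

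\emph{(2)$\Rightarrow$(1).} Write $\Phi^\dag(X\otimes 1)=Y_X\otimes 1$. The same identity then gives
\[
\tr\big(\tr_{S_1(O_B)}\Phi(\rho\otimes\s)\,X\big)=\tr(Y_X\rho)\tr(\s)=\tr(Y_X\rho).
\]
This is independent of $\s$. Since $X$ is arbitrary, (1) follows.

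\emph{(2)$\Leftrightarrow$(3).} Expand the Choi matrix of $\Phi^\dag$ in the product basis:
\[
C_{\Phi^\dag}=\sum |a b\rangle\langle a'b'|\otimes \Phi^\dag(|ab\rangle\langle a'b'|).
\]
Taking the partial trace over $O_B$ sets $b=b'$ and sums over $b$. This gives
\[
\tr_{S_1(O_B)}C_{\Phi^\dag}=\sum_{a,a'}|a\rangle\langle a'|\otimes \Phi^\dag(|a\rangle\langle a'|\otimes 1_{O_B}).
\]
Since the matrix units $|a\rangle\langle a'|$ span $S_\infty(O_A)$, condition (2) holds iff every entry $\Phi^\dag(|a\rangle\langle a'|\otimes 1)$ lies in $S_\infty(I_A)\otimes 1$. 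That is exactly $\tr_{S_1(O_B)}C_{\Phi^\dag}=\wh P\otimes 1_{S_\infty(I_B)}$, with
\[
\wh P=\sum_{a,a'}|a\rangle\langle a'|\otimes Y_{a a'},\qquad \Phi^\dag(|a\rangle\langle a'|\otimes 1)=Y_{aa'}\otimes 1.
\]

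It remains to check the properties of $\wh P$.
\begin{itemize}
\item \emph{Positivity.} $C_{\Phi^\dag}\ge 0$ because $\Phi^\dag$ is completely positive, and partial traces preserve positivity. So $\wh P\otimes 1\ge0$, which forces $\wh P\ge 0$: compress by $1\otimes\langle v|\cdot|v\rangle$ for a unit vector $v$.
\item \emph{Normalization.} Since $\Phi^\dag$ is unital,
\[
\tr_{S_1(O_A)}\wh P\otimes 1=\Phi^\dag(1)=1,
\]
so $\tr_{S_1(O_A)}\wh P=1_{S_\infty(I_A)}$.
\end{itemize}

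The converse direction of (2)$\Leftrightarrow$(3) simply reads the displayed identity backwards. The statement with Alice and Bob interchanged follows by symmetry.

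The only delicate point is bookkeeping. Lemma~\ref{lem: vanishing lemma} must be applied with the correct factor ordering. Passing from "states $\s$" to "traceless hermitian $B$" relies on the span argument above.
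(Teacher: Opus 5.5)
Your proof is correct and follows essentially the paper's argument. You get (1)$\Rightarrow$(2) from the duality $\tr(\Phi(A)B)=\tr(A\Phi^\dag(B))$ together with Lemma~\ref{lem: vanishing lemma}, and the positivity and normalization of $\wh P$ from complete positivity and unitality of $\Phi^\dag$, just as the paper does. The differences are only organizational:
\begin{itemize}
\item you get (2)$\Leftrightarrow$(3) by reading off the $O_A$-blocks $\Phi^\dag(|a\rangle\langle a'|\otimes 1)$ of $\tr_{S_1(O_B)}C_{\Phi^\dag}$ directly, instead of a second use of the vanishing lemma;
\item you close the loop with (2)$\Rightarrow$(1), instead of the paper's (3)$\Rightarrow$(1) via the Choi reconstruction formula;
\item you make explicit the span argument (states to traceless hermitian operators) and the factor swap in the lemma, both of which the paper leaves implicit.
\end{itemize}
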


Recall that, according to Remark \ref{rem:Choi_adjoint} the previous statement can also be expressed using  $C_{\Phi}$ instead of $C_{\Phi^\dag}$.

\begin{proof}
    We will prove $(1) \implies (2) \implies (3) \implies (1)$. \newline Assume $\tr_{S_1(O_B)} \Phi(\rho \otimes \s) = \tr_{S_1(O_B)} \Phi(\rho \otimes \s^\prime)$ for every state $\rho \in S_1(I_A)$ and $\s,\s^\prime \in S_1(I_B)$. This condition is equivalent to $\Phi$ satisfying \[
\tr_{S_1(O_B)} \Phi(a \otimes b) = 0,
    \] for every hermitian $b \in S_1(I_B)$ of trace zero, and $a \in S_1(I_A)$. Given $x \in S_\infty(O_A)$ we will prove \[\Phi^\dag(x \otimes 1_{S_\infty(O_B)}) = y \otimes 1_{S_\infty(I_B)},\] for some $y \in S_\infty(I_A)$. Via Lemma~\ref{lem: vanishing lemma} it suffices to prove \[
\tr_{S_\infty(I_AI_B)}(\Phi^\dag(x \otimes 1_{S_\infty(O_B)})\cdot(a \otimes b)) = 0,
    \]for all hermitian $a \in S_1(I_A), b \in S_1(I_B), \tr(b) = 0$. This is immediate by observing \begin{align*}
        \tr_{S_\infty(I_AI_B)}(\Phi^\dag(x \otimes 1_{S_\infty(O_B)})\cdot(a \otimes b)) &= \mp{\Phi^\dag(x \otimes 1_{S_\infty(O_B)})}{a \otimes b} \\&= \mp{x \otimes 1_{S_\infty(O_B)}}{\Phi(a \otimes b)} \\&= \tr_{S_1(O_AO_B)}((x \otimes 1_{S_\infty(O_B)})\cdot \Phi(a \otimes b)) \\
        &= \tr_{S_1(O_A)}(x \cdot \tr_{S_1(O_B)}(\Phi(a \otimes b))) \\ &= 0.
    \end{align*} Note that the last equality follows since $\tr_{S_1(O_B)} \Phi(a \otimes b)=0$, for every hermitian $b \in S_1(O_B)$ of trace zero, and $a \in S_1(I_A)$. An application of Lemma~\ref{lem: vanishing lemma} finishes $(1) \implies (2)$. \newline 
    Suppose now $\Phi^\dag(S_\infty(O_A) \otimes 1_{S_\infty(O_B)}) \subset S_\infty(I_A) \otimes 1_{S_\infty(I_B)}$. This implies  \[
\tr(C_{\Phi^\dag} \cdot (x \otimes 1_{S_\infty(O_B)} \otimes a \otimes b)) = 0,
    \] for every $x \in S_\infty(O_A)$, and self-adjoint $a \in S_\infty(I_A), b \in S_\infty(I_B), \tr(b) = 0$. We rewrite this as \[
\tr (\tr_{O_B} C_{\Phi^\dag} \cdot (x \otimes a \otimes b)) = 0.
    \]
Thus it follows for hermitian operators $w \in S_\infty(O_AI_A)$ and $b \in S_1(I_B), \tr(b) = 0$ we obtain \[
\tr (\tr_{O_B} C_{\Phi^\dag} \cdot (w \otimes b)) = 0,
\] which is equivalent to saying $\tr_{O_B} C_{\Phi^\dag} = \wh{P} \otimes 1_{S_\infty(I_B)}$ for some $\wh{P} \in S_1(O_A) \otimes S_\infty(I_A)$. Finally, we observe that completely positivity of $\Phi^\dag$ implies $C_{\Phi^\dag}$ is positive , and in particular $\tr_{S_\infty(O_B)} C_{\Phi^\dag}$ is positive . Along with the fact that $\tr_{O_AO_B}C_{\Phi^\dag} = 1_{S_1(I_AI_B)}$, implies $\wh{P}$ has the desired properties. This finishes the implication $(2) \implies (3)$. \newline 
Our final implication is $(3) \implies (1)$. Thus, assume $\tr_{S_1(O_B)} C_{\Phi^\dag} = \wh{P} \otimes 1_{S_\infty(I_B)}$ for some positive  $\wh{P} \in S_1(O_A) \otimes S_\infty(I_A)$, which satisfies $\tr_{S_1(O_A)} \wh{P} = 1_{S_\infty(I_A)}$. It follows that for every $x \in S_\infty(O_A)$, and self-adjoint $a \in S_1(I_A), b \in S_1(I_B)$, with $\tr(b) = 0$, we have \[
0 = \tr (\tr_{S_\infty(O_B)} C_{\Phi^\dag} \cdot (x \otimes a \otimes b)) = \tr(C_{\Phi^\dag} \cdot (x \otimes 1_{S_\infty(O_B)} \otimes a \otimes b)).
\] We now observe that $\tr(C_{\Phi^\dag} \cdot (x \otimes 1_{S_\infty(O_B)} \otimes a \otimes b))$ equals
\begin{align*}
 &= \tr_{S_\infty(I_AI_B)}((a \otimes b)\cdot \tr_{S_\infty(O_AO_B)}(C_{\Phi^\dag} \cdot(x \otimes 1_{S_\infty(O_B)} \otimes 1_{S_1(I_A)} \otimes 1_{S_1(I_B)}))) 
\\ &= \tr_{S_\infty(I_AI_B)} ((a \otimes b)\cdot \Phi^\dag(x^T \otimes 1_{S_\infty(O_B)}))
\\ &= \tr_{S_1(O_AO_B)}(\Phi(a \otimes b) \cdot (x^T \otimes 1_{S_1(O_B)})) \\ &= \tr_{S_1(O_A)}(\tr_{S_1(O_B)}\Phi(a \otimes b) \cdot x^T).
\end{align*} Thus we have deduced that for every $x \in S_\infty(O_A)$ and self-adjoint $a \in S_1(I_A), b \in S_1(I_B), \tr(b) = 0$, that \[
\tr_{S_1(O_A)}(\tr_{S_1(O_B)}\Phi(a \otimes b) \cdot x^T) = 0.
\] This is equivalent to $\tr_{S_1(O_B)}\Phi(a \otimes b) = 0$ for all self-adjoint $a \in S_1(I_A), b \in S_1(I_B), \tr(b) = 0$. Thus we have shown $(3) \implies (1)$, which finishes the proof. \qedhere 
\end{proof}
The equivalent conditions given in Proposition~\ref{prop: equivalence of QNS B not to A} describe the no-signalling condition from Bob to Alice, i.e., the conditions express that Bob cannot send quantum information to Alice through the channel. For this reason we have adopted the notation $\mathcal{NOS}_{B \not\to A}(I_{AB}, O_{AB})$, to denote the set of channels $\Phi: S_1(I_AI_B) \to S_1(O_AO_B)$ satisfying any of these conditions. As observed in \cite{piani2006properties,Eggeling2002semicasual}, such a quantum channel $\Phi: S_1(I_AI_B) \to S_1(O_AO_B)$ can be implemented by allowing Alice to send quantum information to Bob, but not the other way around. We now define quantum one-way channels and provide a more detailed explanation of the previous sentence.
\begin{definition}\label{defn: quantum one-way}
    A quantum channel $\Phi: S_1(I_AI_B) \to S_1(O_AO_B)$ is called \emph{quantum one-way from Alice to Bob}, if there exists a Hilbert space $K$ (the register to be sent from Alice to Bob) and quantum channels \[
\Phi_A: S_1(I_A) \to S_1(O_AK), \Phi_B: S_1(KI_B) \to S_1(O_B),
\] such that \begin{equation} 
\Phi = (\id_{S_1(O_A)} \otimes \Phi_B) \circ (\id_{S_1(I_B)} \otimes \Phi_A). \label{eq: QOW A to B}
\end{equation} Denote the set of channels of the form \eqref{eq: QOW A to B} by $\mathcal{OW}_{A \to B}(I_{AB},O_{AB})$.
\end{definition}

\begin{theorem}\label{thm: QNS equals QOW}
    Given a quantum channel $\Phi: S_1(I_AI_B) \to S_1(O_AO_B)$, then $\Phi \in \mathcal{NOS}_{B \not\to A}(I_{AB},O_{AB})$ if and only $\Phi \in \mathcal{OW}_{A \to B}(I_{AB}, O_{AB})$. Similarly, $\Phi \in \mathcal{NOS}_{A \not\to B}(I_{AB}, O_{AB})$ if and only if $\Phi \in \mathcal{OW}_{B \to A}(I_{AB}, O_{AB})$. 
\end{theorem}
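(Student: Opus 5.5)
The easy direction is that one-way channels are no-signalling. Suppose $\Phi=(\id_{S_1(O_A)}\otimes\Phi_B)\circ(\id_{S_1(I_B)}\otimes\Phi_A)$. For a product input $\rho\otimes\s$ we compute
\[
\tr_{S_1(O_B)}\Phi(\rho\otimes\s)=\tr_{S_1(K)}\Phi_A(\rho).
\]
This holds because $\Phi_B$ is trace preserving, so $\tr_{O_B}\circ(\id_{O_A}\otimes\Phi_B)=\id_{O_A}\otimes\tr_{KI_B}$, and $\tr(\s)=1$. The right-hand side does not depend on $\s$, which gives condition (1) of Proposition~\ref{prop: equivalence of QNS B not to A}. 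The statement for $\mathcal{NOS}_{A\not\to B}$ and $\mathcal{OW}_{B\to A}$ follows by exchanging the roles of Alice and Bob.

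For the converse, take $\Phi\in\mathcal{NOS}_{B\not\to A}$. Proposition~\ref{prop: equivalence of QNS B not to A} gives a ucp map $\Phi_1^\dag:S_\infty(O_A)\to S_\infty(I_A)$ with $\Phi^\dag(x\otimes 1)=\Phi_1^\dag(x)\otimes 1$; its predual $\Phi_1=\tr_{O_B}\Phi(\,\cdot\,\otimes\s)$ is a channel. I work with a Stinespring dilation of $\Phi$ (Theorem~\ref{Stinespring}): an isometry $V:I_AI_B\to O_AO_BW$ with $\Phi^\dag(y)=V^\dag(y\otimes 1_W)V$. I also take a Stinespring isometry $V_1:I_A\to O_AE$ for $\Phi_1$, so that $\Phi_1^\dag(x)=V_1^\dag(x\otimes1_E)V_1$. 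The plan is to define Alice's channel as $\Phi_A(\rho)=V_1\rho V_1^\dag$, with $K=E$ the register she sends to Bob, and then to build Bob's channel $\Phi_B:S_1(EI_B)\to S_1(O_B)$.

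The key step is to produce an isometry $U:EI_B\to O_BW$ such that
\[
V=(1_{O_A}\otimes U)(V_1\otimes 1_{I_B}).
\]
To get it, compare the two representations of the algebra $S_\infty(O_A)$ on $I_AI_B$. The first is $x\mapsto(x\otimes 1_{O_BW})$ compressed by $V$. The second is $x\mapsto x\otimes 1_{EI_B}$ compressed by $V_1\otimes 1_{I_B}$. By the no-signalling identity they give the same ucp map $x\mapsto\Phi_1^\dag(x)\otimes1_{I_B}$, namely
\[
V^\dag(x\otimes 1_{O_BW})V=(V_1\otimes 1_{I_B})^\dag(x\otimes 1_{EI_B})(V_1\otimes1_{I_B}).
\]
Both are Stinespring dilations of this map for the representation $x\mapsto x\otimes 1$. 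The second one is minimal once $E$ is chosen minimal; the first one need not be minimal.

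By uniqueness of minimal Stinespring dilations, there is an isometry $\tilde U$ intertwining the two representations, going from the minimal space $O_AEI_B$ into $O_AO_BW$. Any map commuting with $x\otimes 1$ for all $x\in S_\infty(O_A)$ has the form $1_{O_A}\otimes U$. The identity $V=\tilde U(V_1\otimes 1)$ then follows from the standard intertwiner construction $\tilde U\big((x\otimes1)(V_1\otimes1)\xi\big)=(x\otimes1)V\xi$. Well-definedness and isometry of $\tilde U$ come from the equality of the two compressed maps displayed above.

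Finally, set $\Phi_B(\tau)=\tr_W(U\tau U^\dag)$, which is a channel. Then
\[
(\id\otimes\Phi_B)(\id\otimes\Phi_A)(\omega)=\tr_W(V\omega V^\dag)=\Phi(\omega),
\]
so $\Phi\in\mathcal{OW}_{A\to B}$.

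The main obstacle is the intertwining step. One has to check carefully that minimality of $V_1$ makes $(V_1\otimes 1_{I_B})$ minimal for the amplified representation, and that the commutant of $S_\infty(O_A)\otimes1$ in $B(O_AEI_B,O_AO_BW)$ is exactly $1_{O_A}\otimes B(EI_B,O_BW)$. Both hold because $S_\infty(O_A)$ is a full matrix algebra in finite dimensions. If minimality is inconvenient, I would instead take $E$ arbitrary and let $U$ be a partial isometry, using Remark~\ref{Rem: Stines_general} to absorb the defect.
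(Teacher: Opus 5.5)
Your proposal is correct and takes essentially the same route as the paper: both Stinespring-dilate $\Phi$ and the marginal channel $\Phi_1$ (the paper's $\Phi_A$), and both obtain Bob's isometry as the intertwiner forced by $V^\dag(x\otimes 1)V=(V_1\otimes 1)^\dag(x\otimes 1)(V_1\otimes 1)$. The only difference is presentation: the paper works in coordinates, defining $\theta_B(|y\rangle|\xi_x^a\rangle)=\sum_b|\eta_{x,y}^{a,b}\rangle|b\rangle$ on $K=\mathrm{span}\{|\xi_x^a\rangle\}$ (exactly your minimality reduction) and checking that it is an isometry via the Gram identity, whereas you invoke uniqueness of minimal dilations and the commutant of $S_\infty(O_A)\otimes 1$.
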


\begin{proof}
    By Proposition~\ref{prop: equivalence of QNS B not to A} it follows $\Phi \in \mathcal{NOS}_{B \not\to A}(I_{AB}, O_{AB})$ if and only if \[
\Phi^\dag(S_\infty(O_A) \otimes 1_{S_\infty(O_B)}) \subset S_\infty(I_A) \otimes 1_{S_\infty(I_B)}.
    \] We first assume $\Phi \in \mathcal{OW}_{A \to B}(I_{AB}, O_{AB})$ and will show $\Phi^\dag$ satisfies the above equivalent condition. By our assumption, let $\Phi_A: S_1(I_A) \to S_1(O_AK), \Phi_B: S_1(KI_B) \to S_1(O_B)$, be two quantum channels, $K$ a Hilbert space, such that 
\[
\Phi = (\id_{S_1(O_A)} \otimes \Phi_B)\circ (\id_{S_1(I_B)} \otimes \Phi_A), \Phi^\dag = (\Phi_A^\dag \otimes \id_{S_\infty(I_B)}) \circ (\id_{S_\infty(O_A)} \otimes \Phi_B^\dag).
\] Given $x \in S_\infty(O_A)$ we observe \begin{align*}
    \Phi^\dag(x \otimes 1_{S_\infty(O_B)}) &= (\Phi_A^\dag \otimes \id_{S_\infty(I_B)}) \circ (\id_{S_\infty(O_A)} \otimes \Phi_B^\dag)(x \otimes 1_{S_\infty(O_B)}) \\&= (\Phi_A^\dag \otimes \id_{S_\infty(I_B)})(x \otimes 1_{S_\infty(K)} \otimes 1_{S_\infty(I_B)}) \\&= \Phi_A^\dag(x \otimes 1_{S_\infty(K)}) \otimes 1_{S_\infty(I_B)}.
\end{align*} This proves $\Phi \in \mathcal{NOS}_{B \not\to A}(I_{AB}, O_{AB})$. \newline Suppose now $\Phi \in \mathcal{NOS}_{B \not\to A}(I_{AB},O_{AB})$. This in turn implies we may define a completely positive and unital map $\Phi_A^\dag: S_\infty(O_A) \to S_\infty(I_A)$ such that for any $x \in S_\infty(O_A)$ one has \begin{equation} 
\Phi^\dag(x \otimes 1_{S_\infty(O_B)}) = \Phi_A^\dag(x) \otimes S_\infty(I_B).\label{eq: qow assumption}
\end{equation} We now perform Stinespring dilations (see Theorem \ref{Stinespring}) to both $\Phi$ and $\Phi_A$. To this end, we have linear isometries \[
V_{AB}: I_AI_B \to O_AO_BH_{AB}, V_A: I_A \to O_AH_A,
\] such that \[
\Phi(\rho) = \tr_{H_{AB}}(V_{AB}\rho V_{AB}^\dag), \Phi_A(\s) = \tr_{H_A}(V_A \s V_A^\dag), \rho \in S_1(I_AI_B), \s \in S_1(I_A).
\] Begin by writing $V_{AB}(\ket{x}\ket{y}) = \sum_{ab}\ket{a}\ket{b}\ket{\eta_{xy}^{ab}}, \ket{\eta_{xy}^{ab}} \in H_{AB}$ for every $x,y,a,b$. Similarly, write $V_A(\ket{x}) = \sum_a \ket{a}\ket{\xi_x^a}, \ket{\xi_x^a} \in H_A$ for every $x,a$. It then follows for every $a,a^\prime \in A$, that \[
\Phi^\dag(\op{a}{a^\prime} \otimes 1_{S_\infty(O_B)}) = \sum_{x,x^\prime,y,y^\prime,b} \ip{\eta_{x,y}^{a,b}}{\eta_{x^\prime,y^\prime}^{a^\prime,b}} \op{xy}{x^\prime y^\prime},  
\] and \[
\Phi_A^\dag(\op{a}{a^\prime}) = \sum_{x,x^\prime} \ip{\xi_x^a}{\xi_{x^\prime}^{a^\prime}} \op{x}{x^\prime}.
\] It follows via Equation~\eqref{eq: qow assumption} that \[
\sum_b \ip{\eta_{x,y}^{a,b}}{\eta_{x^\prime,y^\prime}^{a^\prime,b^\prime}} = \d_{y,y^\prime} \ip{\xi_x^a}{\xi_{x^\prime}^{a^\prime}},
\] for every $x,x^\prime,y,y^\prime,a,a^\prime$.
Define the Hilbert space $K:= \Span\{\ket{\xi_x^a}: x,a\} \subset H_A$, and define the mapping \[
\theta_B: I_BK \to H_{AB}O_B, \ket{y}\ket{\xi_x^a} \mapsto \sum_b \ket{\eta_{x,y}^{a,b}}\ket{b}, 
\]for every $x,y,a$, and extend via linearity. We first observe that $\theta_B$ is a well defined isometry. This follows since for every $(\l_{x,y}^a)_{x,y,a} \subset \mathbb C$, we have\begin{align*}
    \Big\|\theta_B(\sum_{x,y,a} \l_{x,y}^a \ket{y}\ket{\xi_x^a})\Big\|_{H_{AB}O_B}^2 &= \sum_{x,x^\prime,y,y^\prime,a,a^\prime,b,b^\prime} \oline{\l_{x,y}^a}\l_{x^\prime,y^\prime}^{a^\prime}\ip{\eta_{x,y}^{a,b}}{\eta_{x^\prime,y^\prime}^{a^\prime,b^\prime}} \ip{b}{b^\prime} 
    \\ &= \sum_{x,x^\prime,y,y^\prime,a,a^\prime} \sum_b \oline{\l_{x,y}^a}\l_{x^\prime,y^\prime}^{a^\prime} \ip{\eta_{x,y}^{a,b}}{\eta_{x^\prime,y^\prime}^{a^\prime,b^\prime}} \\ &= \sum_{x,x^\prime,y,a,a^\prime} \oline{\l_{x,y}^a}\l_{x^\prime,y}^{a^\prime} \ip{\xi_x^a}{\xi_{x^\prime}^{a^\prime}} \\&= \Big\| \sum_{x,y,a} \l_{x,y}^a \ket{y}\ket{\xi_x^a}\Big\|_{I_BK}^2.
\end{align*} This proves $\theta_B$ is an isometry. Furthermore, we immediately obtain \[
V_{AB} = (\id_{O_A} \otimes \theta_B) \circ(V_A \otimes \id_{I_B}).
\] To finish the proof, we consider the quantum channels \[
\tilde{\Phi}_A: S_1(I_A) \to S_1(O_AK), \sigma \mapsto V_A \sigma V_A^\dag, \sigma \in S_1(I_A),
\] and \[
\tilde{\Phi}_B: S_1(KI_B) \to S_1(O_B), \rho \mapsto \tr_{H_{AB}} \theta_B \rho \theta_B^\dag, \rho \in S_1(KI_B).
\] We then have \[
\Phi = (\id_{S_1(O_A)} \otimes \tilde{\Phi}_B) \circ (\tilde{\Phi}_A \otimes \id_{S_1(I_B)}),
\] which proves $\Phi \in \mathcal{OW}_{A \to B}(I_{AB}, O_{AB})$. \newline The second claim of the theorem follows in the same manner. This finishes the proof. \qedhere 
\end{proof}

Define the following sets \begin{align*}
    \mathcal{NOS}(I_{AB}, O_{AB}) & =\mathcal{NOS}_{A \not\to B}(I_{AB}, O_{AB}) \cap \mathcal{NOS}_{B \not\to A}(I_{AB}, O_{AB}), 
   \\ \mathcal{OW}_{sym}(I_{AB},O_{AB}) &= \mathcal{OW}_{A \to B}(I_{AB},O_{AB}) \cap \mathcal{OW}_{B \to A}(I_{AB},O_{AB}),  
\end{align*} as the set of \emph{no-signalling quantum channels} and the set of \emph{symmetrized one-way quantum channels}, i.e., those channels which can be implemented simultaneously by allowing Alice to send one-way quantum information to Bob (but not the other way around) and by allowing Bob to send one-way quantum information to Alice (but not the other way around). A consequence of Theorem~\ref{thm: QNS equals QOW} is the following corollary: 

\begin{corollary}
    It follows \[
\mathcal{NOS}(I_{AB}, O_{AB}) = \mathcal{OW}_{sym}(I_{AB}, O_{AB}).
    \]
\end{corollary}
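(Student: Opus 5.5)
The plan is to intersect the two set equalities given by Theorem~\ref{thm: QNS equals QOW}. By definition, $\mathcal{NOS}(I_{AB},O_{AB})$ is the intersection of $\mathcal{NOS}_{A\not\to B}(I_{AB},O_{AB})$ and $\mathcal{NOS}_{B\not\to A}(I_{AB},O_{AB})$. Likewise, $\mathcal{OW}_{sym}(I_{AB},O_{AB})$ is the intersection of $\mathcal{OW}_{A\to B}(I_{AB},O_{AB})$ and $\mathcal{OW}_{B\to A}(I_{AB},O_{AB})$. Theorem~\ref{thm: QNS equals QOW} supplies the two identities
\[
\mathcal{NOS}_{B\not\to A}=\mathcal{OW}_{A\to B},\qquad \mathcal{NOS}_{A\not\to B}=\mathcal{OW}_{B\to A},
\]
as sets of channels $S_1(I_AI_B)\to S_1(O_AO_B)$. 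Intersecting these two equalities gives the corollary at once.

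The only point needing care is the second equality, which the proof of Theorem~\ref{thm: QNS equals QOW} dismisses with ``follows in the same manner''. I would justify it by symmetry. Let $F_{\mathrm{in}}$ and $F_{\mathrm{out}}$ be the flip unitary channels that swap the tensor factors $I_A\leftrightarrow I_B$ and $O_A\leftrightarrow O_B$. Conjugating $\Phi$ to $F_{\mathrm{out}}\circ\Phi\circ F_{\mathrm{in}}$ exchanges the roles of Alice and Bob. It maps $\mathcal{NOS}_{A\not\to B}$ onto $\mathcal{NOS}_{B\not\to A}$, which one sees by comparing \eqref{eq: QNS A not to B} with \eqref{eq: QNS B not to A}. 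It also maps $\mathcal{OW}_{B\to A}$ onto $\mathcal{OW}_{A\to B}$, which one sees by composing the decomposition \eqref{eq: QOW A to B} with the flips. Applying the already proved first equality to the flipped channel and flipping back yields the second equality.

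I expect no real obstacle here. The only subtlety is to check that the flips respect the ordering of the tensor factors in Definition~\ref{defn: quantum one-way}, and that is a routine reindexing.
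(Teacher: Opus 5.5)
Your proof is correct and is exactly the paper's argument: the corollary is stated as an immediate consequence of Theorem~\ref{thm: QNS equals QOW}, obtained by intersecting the two equalities $\mathcal{NOS}_{B\not\to A}=\mathcal{OW}_{A\to B}$ and $\mathcal{NOS}_{A\not\to B}=\mathcal{OW}_{B\to A}$. Your flip-conjugation argument is a sound way to make rigorous the second equality, which the paper only justifies with ``follows in the same manner''.
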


\section{Games and Tensors}\label{Sec: Games and Tensors}
The main theme for this section will be to demonstrate how one may calculate various values of games (entangled, one-way, sub-one-way) via calculating the corresponding game tensor $\xi_\mathcal G$ with respect to different operator space tensor norms. For readability and completeness, we will begin with calculations in the one-player scenario. This will streamline and assist in the more complex two-player tensor calculations. 

\subsection{One-player} \label{subsec: one-player tensors}
A \emph{one-player quantum game} is given by the pair $\mathcal  G(\psi, P)$, where $\ket{\psi} \in RI$ is a state in the bipartite Hilbertian tensor product of the referee's Hilbert space $R$, and Alice's Hilbert space $I$, and $P \in S_\infty(RO)$ is a projection on the Hilbertian tensor product of the referee's Hilbert space and Alice's output space $O$. As in the previous section, we let $I,O,R$, etc. also denote the corresponding Hilbert spaces of the various registers.
In order to simplify notation and making a slight abuse of language, we will write $|\h|$ and $|A|$ to denote the dimension of the Hilbert space $\h$ and the cardinality of the set $A$ respectively.

Begin by writing the projection $P$ as $P = \sum_{t \in D} \op{\g_t}{\g_t}$, where $(\ket{\g_t})_{t \in D}$ is an orthonormal system for $RI$, and thus $\abs{D} \leq |R||I|$.

Fixing an orthonormal basis $(\ket{i})_{i}$ for $R$, we may write \[
\ket{\g_t} = \sum_{i=1}^{|R|} \ket{i}\ket{\g_t^i}, \,\,\text{and}\,\, \ket{\psi} = \sum_{i=1}^{|R|} \ket{i}\ket{\psi_i},
\] where $\ket{\g_t^i} \in O$ and $\ket{\psi_i} \in I$ for every $i=1,\cdots , |R|$ and $t \in D$. We then write the projection as \[
P = \sum_{t \in D} \op{\g_t}{\g_t} = \sum_{t,i,j} \op{i}{j} \otimes \op{\g_t^i}{\g_t^j} = \sum_{ij} \op{i}{j} \otimes x_i^*x_j \in S_\infty(R) \otimes S_\infty(O),
\] where for each $i=1,\cdots , |R|$, $x_i = \sum_t \op{t}{\g_t^i} \in S_\infty(O,D)$.  We also write \[
\psi:= \op{\psi}{\psi} = \sum_{i,j} \op{i}{j} \otimes \op{\psi_i}{\psi_j} \in S_1(R) \otimes S_1(I).
\]
Taking the transpose over the output Hilbert space $O$, we denote \[
P^{T_O} = \sum_{i,j} \op{i}{j} \otimes (x_i^*x_j)^{T} = \sum_{i,j} \op{i}{j} \otimes x_j^T\oline{x}_i = \sum_{i,j} \op{i}{j} \otimes y_j^*y_i \in S_\infty(R) \otimes S_\infty(O),
\] where for each $i=1,\cdots , |R|$, $y_i:= \oline{x}_i = \sum_t \op{t}{\oline{\g}_t^i} \in S_\infty(O,D)$.
\begin{definition}\label{def: full tensor, one-player}
    Given a one-player quantum game $\mathcal  G(\ket{\psi}, P)$ we define the \emph{full game tensor} $\xi_{\mathcal  G}$ as \[
\xi_{\mathcal  G} = \tr_R\left(\left(\psi \otimes 1_{S_\infty(O)} \right)\left(P^{T_O} \otimes 1_{S_1(I)} \right)  \right) = \sum_{i,j} \op{\psi_i}{\psi_j} \otimes y_i^*y_j \in S_1(I) \otimes S_\infty(O).
    \]
\end{definition}

Given a one-player quantum game $\mathcal  G$, then consider the following decomposition of the full tensor $\xi_{\mathcal  G} = \eta_{\mathcal  G}^* \eta_{\mathcal  G}$, where \[
\eta_{\mathcal  G} = \sum_{t,j} \op{1}{\psi_j} \otimes \op{t}{\oline{\g}_t^j} \in S_1(I)\otimes S_\infty(O,D)=S_1(I, S_\infty(O,D)) = R_I \otimes_h C_D \otimes_h R_O \otimes_h C_I.
\]
In particular, we consider the tensor 
\begin{align}\label{Eq._Half_Tensor}
\eta_{\mathcal  G} := \sum_{t,j} \ket{t} \otimes \bra{\oline{\g}_t^j} \otimes \ket{\oline{\psi}_j} \in C_D \otimes R_O \otimes C_I.
\end{align} Recall that the tensor $\bra{\oline{\g}_t^j} \otimes \ket{\oline{\psi}_j} \in R_O \otimes C_I$ is identified with the operator $\op{\g_t^j}{\psi_j} \in S_1(I,O)$ so that, given $S \in S_\infty(I,O)$, we have \[
\tr(S (\op{\g_t^j}{\psi_j})^T) = \bra{\oline{\g}_t^j}S\ket{\oline{\psi}_j}.
\]

\subsection{The honest value of a game}
Given a projection $P \in S_\infty(RO)$ and a state $\ket{\psi} \in RI$, where $\psi:= \op{\psi}{\psi} \in S_1(RI)$, then the \emph{honest value} of the quantum game $\mathcal  G(\ket{\psi}, P)$ is defined as \begin{align}
    \om(\mathcal  G):= \sup \tr(P \cdot (\id_{S_1(R)} \otimes \vp)(\psi)),
\end{align} where the supremum is taken over all strategies (cptp maps) $\vp: S_1(I) \to S_1(O)$. Our first lemma (whose extensions we will also prove in later sections) is to prove that we may reduce to complete contractions $\vp: S_1(I) \to S_1(O)$ when calculating the honest value. 

\begin{lemma}\label{lem: replace cc with cptp}
    Given a one-player quantum game $\mathcal  G(\ket{\psi}, P)$, let \[
\tilde{\om}(\mathcal  G):= \sup \tr(P \cdot (\id_{S_1(R)} \otimes \vp)(\psi)),
    \] where the supremum is taken over all complete contractions $\vp: S_1(I) \to S_1(O)$. Then $\tilde{\om}(\mathcal  G) =\om(\mathcal  G)$. 
\end{lemma}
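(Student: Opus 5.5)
The inequality $\om(\mathcal G)\le\tilde{\om}(\mathcal G)$ is immediate: every cptp map $\vp:S_1(I)\to S_1(O)$ is a complete contraction (trace preserving and completely positive maps are completely contractive on $S_1$). So the work is to prove $\tilde{\om}(\mathcal G)\le \om(\mathcal G)$. The plan is to show that, for any complete contraction $\vp$, the quantity $\tr(P\cdot(\id\otimes\vp)(\psi))$ is dominated in real part (the supremum should be read via real parts, or via absolute value) by the same quantity for some cptp map.

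\textbf{Step 1 (Wittstock/Haagerup–Paulsen factorization).} Dualize $\vp$ to $\vp^\dag:S_\infty(O)\to S_\infty(I)$, which is also a complete contraction. By Theorem~\ref{thm: cb decomposition as sum} (or the Wittstock factorization), write $\vp^\dag(x)=V_1^\dag\,\pi(x)\,V_2$ with $\pi(x)=x\otimes 1_W$ and contractions $V_1,V_2:I\to OW$. Equivalently, $\vp(\rho)=\tr_W(V_2\rho V_1^\dag)$. Put $\vp_k(\rho)=\tr_W(V_k\rho V_k^\dag)$ for $k=1,2$. These maps are completely positive and trace non-increasing, since $V_k^\dag V_k\le 1$.

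\textbf{Step 2 (Cauchy–Schwarz).} Let $\ket{\psi}\in RI$ and $\tilde P=P\otimes 1_W\ge 0$. Then
\[
\tr(P\cdot(\id\otimes\vp)(\psi))=\bra{\psi}(1\otimes V_1)^\dag\tilde P(1\otimes V_2)\ket{\psi}.
\]
Since $\tilde P$ is a projection, Cauchy–Schwarz bounds its absolute value by $\big(\tr(P(\id\otimes\vp_1)(\psi))\big)^{1/2}\big(\tr(P(\id\otimes\vp_2)(\psi))\big)^{1/2}$. This is at most $\max_k \tr(P(\id\otimes\vp_k)(\psi))$.

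\textbf{Step 3 (completion to a channel).} For a cp trace non-increasing map $\vp_k$, add the term $\rho\mapsto \tr\big((1-V_k^\dag V_k)\rho\big)\,\sigma_0$ for a fixed state $\sigma_0$. The result is a cptp map $\hat\vp_k\ge\vp_k$ in the cp order. Because $P\ge0$ and $\psi\ge0$, we get $\tr(P(\id\otimes\hat\vp_k)(\psi))\ge \tr(P(\id\otimes\vp_k)(\psi))$. Hence $\tilde{\om}(\mathcal G)\le\om(\mathcal G)$.

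The only delicate point is Step 1: obtaining the factorization with a *single* representation $\pi(x)=x\otimes 1$ and both $V_1,V_2$ contractions. This is handled by the normalization in Theorem~\ref{thm: cb decomposition as sum}. Rescaling $A_i,B_i$ so that $\|\sum A_iA_i^\dag\|=\|\sum B_i^\dag B_i\|\le\|\vp\|_{cb}\le1$, stack them into $V_1^\dag=\sum_i A_i\otimes\bra{i}$ and $V_2=\sum_i B_i\otimes\ket{i}$. Since the dimensions are finite, the infimum in that theorem is attained.
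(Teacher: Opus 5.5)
Your proposal is correct and is essentially the paper's proof: decompose $\vp^\dag=\sum_i A_i(\cdot)B_i$ with $\|\sum_i A_iA_i^\dag\|,\|\sum_i B_i^\dag B_i\|\le 1$, use Cauchy--Schwarz to bound the value by the geometric mean of the values of the completely positive contractions $\sum_i B_i^\dag(\cdot)B_i$ and $\sum_i A_i(\cdot)A_i^\dag$, and then add a positive defect term to reach a channel (the paper takes $\sigma_0=1_O/|O|$). Your stacking into $V_1,V_2$ is just the Schr\"odinger-picture form of that decomposition (with the paper's Hilbert--Schmidt adjoint convention it reads $\vp(\rho)=\tr_W(V_1\rho V_2^\dag)$ rather than $\tr_W(V_2\rho V_1^\dag)$, which is harmless since the argument is symmetric in $V_1,V_2$), and your explicit handling of real parts and of attainment of the infimum makes precise two points the paper leaves implicit.
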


\begin{proof}
Since every completely positive trace preserving map is a complete contraction it is immediate that $\om(\mathcal  G) \leq \tilde{\om}(\mathcal  G)$, so we only need to show $\tilde{\om}(\mathcal  G) \leq \om(\mathcal G)$. We begin with a complete contraction $\vp: S_1(I) \to S_1(O)$, and let $\vp^\dag: S_\infty(O) \to S_\infty(I)$ denote its adjoint which is also a complete contraction. Employing Theorem~\ref{thm: cb decomposition as sum} we may write $\vp^\dag(\cdot) = \sum_i A_i(\cdot)B_i$, where $(A_i)_i \subset S_\infty(O,I), (B_i)_i \subset S_\infty(I,O)$, and \[ \Big\| \sum_i A_iA_i^\dag \Big\|^{\frac{1}{2}} \Big\| \sum_i B_i^\dag B_i \Big\|^{\frac{1}{2}} \leq 1.\] We may assume without loss of generality that $\| \sum_i A_iA_i^\dag \| \leq 1, \| \sum_i B_i^\dag B_i \| \leq 1$.  In calculating the honest value of $\mathcal  G$ we write \begin{align*}
   & \tr(P \cdot(\id_{S_1(R)} \otimes \vp)(\psi)) = \tr([(\id_{S_\infty(R)} \otimes \vp^\dag)(P^\dag)]^\dag \cdot \psi) 
    \\ &= \tr \Big( \Big(\sum_i (\id_R \otimes  B_i^\dag)P(\id_R \otimes A_i^\dag)\Big) \cdot \psi\Big) 
    \\&= \sum_i \tr\Big(\psi^{\frac{1}{2}}(\id_R \otimes  B_i^\dag)PP(\id_R \otimes A_i^\dag)\psi^{\frac{1}{2}}\Big) 
    \\&\leq \Big(\sum_i \tr((\id_R \otimes  B_i^\dag)P(\id_R \otimes B_i)\psi)\Big)^{\frac{1}{2}}\Big(\sum_i \tr((\id_R \otimes A_i)P(\id_R\otimes A_i^\dag)\psi)\Big)^{\frac{1}{2}},
\end{align*} where we have applied the Cauchy-Schwarz inequality. Note we have also used the fact that $P$ is self-adjoint. Define the completely positive maps $\Phi_1, \Phi_2: S_\infty(O) \to S_\infty(I)$ via \[
\Phi_1(x) = \sum_i B_i^\dag x B_i,\hspace{0.3 cm} \Phi_2(x) = \sum_i A_i x A_i^\dag.
\] Continuing with our calculation we see \begin{align*}
     &\Big(\sum_i \tr[(\id_R \otimes  B_i^\dag)P(\id_R \otimes B_i)\psi]\Big)^{\frac{1}{2}}\Big(\sum_i \tr[(\id_R \otimes A_i)P(\id_R \otimes A_i^\dag)\psi]\Big)^{\frac{1}{2}} 
     \\&= \Big(\tr( (\id_{S_1(R)} \otimes \Phi_1 )(P))\Big)]^{\frac{1}{2}} \Big( \tr((\id_{S_1(R)} \otimes \Phi_2 )(P)) \Big)^{\frac{1}{2}}.
\end{align*} 

The normalization condition on the operators $A_i, B_i$ guarantees that the superoperators $\Phi_1, \Phi_2$ are complete contractions. Thus, we may assume $\vp$ is completely positive and completely contractive in the definition of $\tilde{\om}(\mathcal  G)$. \newline To conclude, given any completely positive complete contraction $S: S_\infty(O) \to S_\infty(I)$, it follows that $X:= 1_{S_\infty(I)} - S(1_{S_\infty(O)})$ is positive . Then, we can consider  the unital completely positive map \[
\tilde{S}: S_\infty(O) \to S_\infty(I), a \mapsto S(a) + \frac{\tr(a)}{|O|}X,
\] for every $a \in S_\infty(O)$. It follows \[
\tr( ( \id_{S_\infty(R)} \otimes S)(P) \cdot \psi) \leq \tr((\id_{S_\infty(R)} \otimes \tilde{S} )(P) \cdot \psi), 
\] where we have used the fact that both $P$ and $\psi$ are positive. This proves that we may take $\vp^\dag: S_\infty(O) \to S_\infty(I)$ to be unital completely positive which implies $\vp: S_1(I) \to S_1(O)$ is completely positive trace preserving, as desired. This proves $\tilde{\om}(\mathcal  G) \leq \om(\mathcal  G)$. \qedhere 
\end{proof}

The following lemma expresses the correspondence between the dual action of strategies on games and the duality in the corresponding Schatten spaces. 
\begin{lemma}\label{lem: duality of game tensors and strategies}
For every linear map $\vp: S_1(I) \to S_1(O)$, letting $C_\vp$ denote the corresponding Choi matrix, we have that \[
\sp{\xi_{\mathcal  G}}{C_\vp} = \tr( P \cdot (\id_{S_1(R)} \otimes \vp )(\psi)).
\]
\end{lemma}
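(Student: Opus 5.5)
This is a direct computation: I will expand both sides in the fixed bases and match coefficients. Since both sides are linear in $\vp$, I may first work with an arbitrary linear map. I record $\vp$ through its matrix entries $\vp_{x,x'} := \vp(\op{x}{x'}) \in S_1(O)$, so that the Choi matrix is $C_\vp = \sum_{x,x'} \op{x}{x'} \otimes \vp_{x,x'}$. Here I use the convention of Remark \ref{rmk: notation_tensors}, which places $C_\vp$ in $S_\infty(I)\otimes S_1(O)$. The full tensor is $\xi_{\mathcal G}=\sum_{i,j}\op{\psi_i}{\psi_j}\otimes y_i^*y_j\in S_1(I)\otimes S_\infty(O)$ (Definition \ref{def: full tensor, one-player}). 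The pairing $\sp{\cdot}{\cdot}$ is the tensor product of the dualities $\langle A,B\rangle=\tr(A^TB)$ from (\ref{duality OS}) on each factor. This gives
\[
\sp{\xi_{\mathcal G}}{C_\vp}=\sum_{i,j}\sum_{x,x'} \tr\big(\op{\psi_i}{\psi_j}^T\op{x}{x'}\big)\,\tr\big((y_i^*y_j)^T\vp_{x,x'}\big).
\]

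Next I compute the right-hand side. Writing $\psi=\sum_{i,j}\op{i}{j}\otimes\op{\psi_i}{\psi_j}$ gives $(\id_{S_1(R)}\otimes\vp)(\psi)=\sum_{i,j}\op{i}{j}\otimes\vp(\op{\psi_i}{\psi_j})$. Using $P=\sum_{i,j}\op{i}{j}\otimes x_i^*x_j$ and taking the trace over $R$ pairs $\op{i}{j}$ with $\op{j}{i}$. Up to relabelling of $i,j$, this yields
\[
\tr\big(P\cdot(\id\otimes\vp)(\psi)\big)=\sum_{i,j}\tr\big(x_j^*x_i\,\vp(\op{\psi_i}{\psi_j})\big).
\]
I then expand $\vp(\op{\psi_i}{\psi_j})=\sum_{x,x'}\langle x|\psi_i\rangle\overline{\langle x'|\psi_j\rangle}\,\vp_{x,x'}$ by linearity.

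Finally I match the two expressions term by term.
\begin{itemize}
\item In the first factor, $\tr(\op{\psi_i}{\psi_j}^T\op{x}{x'}) = \tr(\op{\overline{\psi_j}}{\overline{\psi_i}}\,\op{x}{x'})$. Its value is the product of the coordinates $\langle x'|\overline{\psi_j}\rangle$ and $\langle\overline{\psi_i}|x\rangle$, which equals $\langle x|\psi_i\rangle\,\overline{\langle x'|\psi_j\rangle}$. This is exactly the coefficient appearing above.
\item In the second factor, $y_i=\overline{x_i}$ gives $(y_i^*y_j)^T=(x_i^Tx_j)^T\cdot$(conjugation). Concretely, $(y_i^*y_j)^T = y_j^T\overline{y_i} = x_j^* x_i$, because $y_j^T=\overline{x_j}^T=x_j^*$ and $\overline{y_i}=x_i$. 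So the second factor is $\tr(x_j^*x_i\,\vp_{x,x'})$, which matches the right-hand side.
\end{itemize}

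The only real obstacle is bookkeeping rather than a conceptual step. I must check that the transpose in the duality (\ref{duality OS}), the transpose $T_O$ built into $P^{T_O}$, and the conjugation $y_i=\overline{x_i}$ together cancel exactly. I also need the index order in the partial trace over $R$ to come out consistently between the two sides. I would fix the conventions once at the start and verify the identity on a rank-one $\vp$, which suffices by linearity.
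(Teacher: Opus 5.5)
Your proposal is correct and is essentially the paper's proof. Both arguments expand $\langle \xi_{\mathcal G}, C_\varphi\rangle=\tr(\xi_{\mathcal G}^T C_\varphi)$ and $\tr(P\cdot(\id\otimes\varphi)(\psi))$ in the fixed bases and match coefficients using $(y_i^*y_j)^T=y_j^T\overline{y_i}=x_j^*x_i$; the paper computes $(\id\otimes\varphi)(\psi)$ via the Choi reconstruction formula, whereas you use linearity of $\varphi$. One blemish: the phrase ``$(x_i^Tx_j)^T\cdot$(conjugation)'' is inaccurate, since $y_i^*y_j=x_i^T\overline{x_j}$, but the concrete identity you actually use right after it is right.
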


\begin{proof}
    Recall that our full tensor takes the form \[
\xi_{\mathcal  G} = \sum_{i,j} \op{\psi_i}{\psi_j} \otimes y_i^*y_j \in S_1(I) \otimes S_\infty(O). 
    \] It follows \begin{align*}
\sp{\xi_{\mathcal  G}}{C_{\vp}} = \tr(\xi_{\mathcal  G}^T C_{\vp}) &= \tr \Big( \Big(\sum_{i,j} \op{\oline{\psi}_j}{\oline{\psi}_i} \otimes y_j^T\oline{y_i}\Big) \Big( \sum_{i^\prime,j^\prime} \op{i^\prime}{j^\prime} \otimes \vp(\op{i^\prime}{j^\prime})\Big) \Big)
\\&=  \sum_{i,j,i^\prime,j^\prime} \ip{\oline{\psi}_i}{i^\prime}\ip{j^\prime}{\oline{\psi}_j} \tr( y_j^T\oline{y_i}\vp(\op{i^\prime}{j^\prime}))   
\end{align*}

Representing the output of $\vp$ via its Choi matrix, we observe \begin{align*}
    (\id_{S_1(R)} \otimes \vp)(\psi) &= \tr_{S_1(I)} \Big[\Big( 1_{S_1(R)} \otimes \sum_{i,j} \op{i}{j} \otimes \vp(\op{i}{j})\Big) \Big(\sum_{i^\prime,j^\prime} \op{i^\prime}{j^\prime} \otimes {\op{\psi_{i^\prime}}{\psi_{j^\prime}}^T} \otimes 1_{S_1(O)}\big)\Big] 
    \\&= \tr_{S_1(I)} \Big[\Big( 1_{S_1(R)} \otimes \sum_{i,j} \op{i}{j} \otimes \vp(\op{i}{j})\Big)\Big (\sum_{i^\prime,j^\prime} \op{i^\prime}{j^\prime} \otimes \op{\oline{\psi}_{j^\prime}}{\oline{\psi}_{i^\prime}} \otimes 1_{S_1(O)}\Big)\Big] 
    \\&= \sum_{i,j,i^\prime,j^\prime} \ip{j}{\oline{\psi}_{j^\prime}}\ip{\oline{\psi}_{i^\prime}}{i} \op{i^\prime}{j^\prime} \otimes \vp(\op{i}{j}).
\end{align*}
Recalling that our projection takes the form $P = \sum_{\tilde{i},\tilde{j}} \op{\tilde i}{\tilde j} \otimes x_{\tilde i}^* x_{\tilde j} \in S_\infty(R) \otimes S_\infty(O)$, we conclude \begin{align*}
    \tr(P \cdot(\id_{S_1(R)} \otimes \vp)(\psi)) &= \tr\Big[ P \cdot \Big(\sum_{i,j,i^\prime,j^\prime} \ip{j}{\oline{\psi}_{j^\prime}}\ip{\oline{\psi}_{i^\prime}}{i} \op{i^\prime}{j^\prime} \otimes \vp(\op{i}{j}))\Big) \Big]
    \\&= \sum_{i,j,i^\prime,j^\prime,\tilde{i},\tilde{j}} \ip{j}{\oline{\psi}_{j^\prime}}\ip{\oline{\psi}_{i^\prime}}{i} \ip{\tilde{j}}{i^\prime}\ip{j^\prime}{\tilde{i}}\tr(x_{\tilde{i}}^*x_{\tilde{j}}\vp(\op{i}{j}))
    \\&= \sum_{i,j,i^\prime,j^\prime} \ip{j}{\oline{\psi}_{j^\prime}}\ip{\oline{\psi}_{i^\prime}}{i} \tr(x_{j^\prime}^*x_{i^\prime} \vp(\op{i}{j})) 
    \\&=\sum_{i,j,i^\prime,j^\prime} \ip{j}{\oline{\psi}_{j^\prime}}\ip{\oline{\psi}_{i^\prime}}{i} \tr(y_{j^\prime}^T\oline{y_{i^\prime}} \vp(\op{i}{j})).
\end{align*} This concludes the proof. \qedhere 
\end{proof}

Using the above lemma, we obtain the following characterization of $\omega(\mathcal{G})$ in terms of tensor norms.
\begin{theorem}\label{thm: value of game, one-player, full tensor}
    Given any one-player quantum game $\mathcal  G(\ket{\psi},P)$, it follows \[
\om(\mathcal  G) = \norm{\xi_{\mathcal  G}}_{S_1(I,S_\infty(O))}.
    \]
\end{theorem}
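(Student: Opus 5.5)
The plan is to combine duality with the two lemmas just proved. We have the identification \eqref{identification cb-min}, $CB(S_1(I),S_1(O))=S_\infty(I;S_1(O))$, under which a map $\vp$ corresponds to its Choi matrix $C_\vp$. Combined with the duality \eqref{Eq. Dual_p,q}, this gives $S_1(I;S_\infty(O))^*=S_\infty(I;S_1(O))$, with the pairing $\sp{\cdot}{\cdot}$ of \eqref{duality OS}. In finite dimensions the norm is recovered from the dual, so
\[
\norm{\xi_{\mathcal G}}_{S_1(I,S_\infty(O))}=\sup\big\{\abs{\sp{\xi_{\mathcal G}}{C_\vp}}:\ \cbnorm{\vp:S_1(I)\to S_1(O)}\le 1\big\}.
\]
I will check that the pairing in \eqref{duality OS}, $\tr(A^TB)$, is exactly the one under which the Choi identification is isometric. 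Any conjugation or transpose mismatch only replaces $\vp$ by a map of the same cb norm, so it does not affect the supremum.

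By Lemma~\ref{lem: duality of game tensors and strategies}, $\sp{\xi_{\mathcal G}}{C_\vp}=\tr(P\cdot(\id_{S_1(R)}\otimes\vp)(\psi))$. The right-hand side above is therefore $\sup_\vp\abs{\tr(P\cdot(\id\otimes\vp)(\psi))}$ over complete contractions. Without the absolute value, this supremum is $\tilde\om(\mathcal G)$, which equals $\om(\mathcal G)$ by Lemma~\ref{lem: replace cc with cptp}.

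It remains to remove the absolute value. Given a complete contraction $\vp$, multiplying by a unimodular scalar $e^{i\theta}$ produces another complete contraction for which the quantity is real and equal to $\abs{\tr(P(\id\otimes\vp)(\psi))}$. Hence $\sup\abs{\cdot}=\sup\mathrm{Re}(\cdot)$. The proof of Lemma~\ref{lem: replace cc with cptp} bounds this real part through the Cauchy--Schwarz step by values attained at cptp maps, so it applies to the real part as well. The reverse inequality $\om\le\norm{\xi_{\mathcal G}}$ is immediate, because cptp maps are complete contractions and for them the pairing is nonnegative.

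The main obstacle is the bookkeeping in the first step: verifying that \eqref{identification cb-min} together with \eqref{Eq. Dual_p,q} really makes $C_\vp\mapsto\sp{\cdot}{C_\vp}$ an isometry from $CB(S_1(I),S_1(O))$ onto $S_1(I;S_\infty(O))^*$ with the paper's transpose conventions. I would handle it by tracing the conventions of Remark~\ref{rmk: notation_tensors} and noting that $\xi\mapsto\xi^T$ and $\xi\mapsto\oline{\xi}$ are complete isometries of these vector-valued Schatten classes.
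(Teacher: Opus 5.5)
Your proposal is correct and is essentially the paper's argument: identify the dual of $S_1(I;S_\infty(O))$ with $S_\infty(I;S_1(O))=CB(S_1(I),S_1(O))$ via Choi matrices, rewrite the pairing using Lemma~\ref{lem: duality of game tensors and strategies}, and pass from complete contractions to channels using Lemma~\ref{lem: replace cc with cptp}. Your phase-rotation step for the absolute value and your check of the transpose conventions make explicit details the paper leaves implicit; under the pairing $\tr(A^TB)$ the Choi matrix $C_\vp$ is already exactly the tensor associated with $\vp$, so no correction is needed.
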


\begin{proof}
   By applying Lemma~\ref{lem: replace cc with cptp}, Lemma~\ref{lem: duality of game tensors and strategies}, and the duality (\ref{Eq. Dual_p,q}), we observe that \begin{align*}
        \norm{\xi_{\mathcal  G}}_{S_1(I,S_\infty(O))} &= \sup_{\norm{C_{\vp}}_{S_\infty(I,S_1(O))} \leq 1} \sp{\xi_{\mathcal  G}}{C_{\vp}} 
        \\&= \sup_{\cbnorm{\vp: S_1(I) \to S_1(O)} \leq 1} \tr( P \cdot (\id_{S_1(R)} \otimes \vp)(\psi))
        \\&= \om (\mathcal  G).
    \end{align*}
\end{proof}

Our next result provides a description of the value $\om(\mathcal  G)$ in terms of the half tensor $\eta_{\mathcal  G}$. It is inspired in the work \cite{cooney2015rank}, where a similar statement was proven for rank-one games, i.e., where the game projection $P$ takes the form $P = \op{\g}{\g}$.

\begin{theorem}\label{thm: value of game, one-player, half tensor}
    Given a one-player quantum game $\mathcal  G(\ket{\psi},P)$, we have \[
\om(\mathcal  G) = \norm{\eta_{\mathcal  G}}_{C_D(S_1(I,O))}^2.
    \]
\end{theorem}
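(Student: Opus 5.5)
The plan is to compute $\norm{\eta_{\mathcal G}}_{C_D(S_1(I,O))}$ by duality and show that its square equals $\sup_\varphi \tr(P\cdot(\id_{S_1(R)}\otimes\varphi)(\psi))$ over complete contractions $\varphi:S_1(I)\to S_1(O)$. By Lemma~\ref{lem: replace cc with cptp}, that supremum is $\om(\mathcal G)$.

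\emph{Step 1: dual description of the norm.} Since $C_D(X)=C_D\otimes_{\min}X$ embeds completely isometrically into $CB(R_D,X)$ by (\ref{Eq: min_cb}), and $S_1(I,O)^*=S_\infty(I,O)$, I would write
\[
\norm{\eta_{\mathcal G}}_{C_D(S_1(I,O))}=\sup\big\{\norm{(\id_{C_D}\otimes S)(\eta_{\mathcal G})}_{C_D(S_\infty(I,O))^{\,}}\big\},
\]
where the supremum is over complete contractions $S$ from $S_1(I,O)$ into some $\mathcal B(K)$. For column spaces there is a cleaner form. For $u=\sum_t\ket t\otimes u_t\in C_D\otimes_{\min}X$ with $X\subset\mathcal B(K)$, one has
\[
\norm u=\Big\|\sum_t u_t^*u_t\Big\|^{1/2}.
\]
Taking $X=S_1(I,O)$, whose dual is $S_\infty(I,O)$, I would apply the operator space duality $S_1(I,O)\hat\otimes Y$ against $CB$ maps.

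In practice I expect the following identity: the column norm squared equals the supremum over complete contractions $\Psi:S_1(I,O)\to \mathcal B(K)$ of $\norm{\sum_t\Psi(\eta_t)^*\Psi(\eta_t)}$, where $\eta_t=\sum_j\op{\g_t^j}{\psi_j}$. Complete contractions from $S_1(I,O)$ correspond to elements of $S_\infty(I,O)\otimes_{\min}\mathcal B(K)$ of norm at most one, that is, to contractions $W:I\otimes K'\to O\otimes K'$ in a suitable arrangement.

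\emph{Step 2: realize $\sum_t\Psi(\eta_t)^*\Psi(\eta_t)$ as a game expression.} Take $\Psi(\rho)=(\bra{\phi}\otimes \id)\,(\ldots)$, given by a contraction $U:I\otimes W\to O\otimes W$ as in Remark~\ref{Rem: Stines_general}. Expanding,
\[
\sum_t\Psi(\eta_t)^*\Psi(\eta_t)
\]
pairs the vectors $\ket{\psi_j}$ with $\ket{\g_t^j}$. Evaluating against a unit vector, or taking the state part, should give exactly
\[
\sum_t\Big\|\sum_j\bra{\g_t^j}U\ket{\psi_j}\Big\|^2=\tr\big(P\,(\id_R\otimes\varphi)(\psi)\big),
\qquad \varphi(\rho)=\tr_W\big(U(\rho\otimes\op00)U^\dag\big).
\]
This uses $P=\sum_t\op{\g_t}{\g_t}$ and $\ket\psi=\sum_i\ket i\ket{\psi_i}$, so that the $R$-register contraction produces the sum over $j$.

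\emph{Step 3: matching both inequalities.} For $\ge$: every cptp $\varphi$ has a Stinespring form (Theorem~\ref{Stinespring}, Remark~\ref{Rem: Stines_general}), which yields a complete contraction $\Psi$ attaining $\tr(P(\id\otimes\varphi)(\psi))$. For $\le$: any complete contraction $\Psi$ out of $S_1(I,O)$, viewed via $J$ in (\ref{Haagerup K}) as $R_O\otimes_h C_I$, factors through a contraction $U$. The resulting quantity is then bounded by $\tr(P(\id\otimes\varphi)(\psi))$ for a completely positive and completely contractive $\varphi$. Lemma~\ref{lem: replace cc with cptp}, whose final step passes from cp contractions to ucp adjoints, then bounds this by $\om(\mathcal G)$.

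\emph{Main obstacle.} The main obstacle is the $\le$ direction: identifying the norm of $C_D(R_O\otimes_h C_I)$ precisely with a supremum over contractions $U$ composed with $\bra{\cdot}$ and $\ket{\cdot}$. I would first try the Haagerup factorization $C_D\otimes_h R_O\otimes_h C_I$, using that the $\min$ and $h$ norms agree with $C$ on the left by (\ref{Eq: indetity_min_h_1}). This gives
\[
\norm{\eta}^2=\sup_{\norm U\le1}\sum_t\Big\|\sum_j(\bra{\overline{\g}_t^j}\otimes 1)\,U\,(\ket{\overline{\psi}_j}\otimes\ket{\xi})\Big\|^2 .
\]
The conjugations are then absorbed by replacing $U$ with $\overline U$.
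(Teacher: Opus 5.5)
Your proposal is correct and takes essentially the same route as the paper. You compute the $\min$ norm of $\eta_{\mathcal G}$ through complete contractions $S_1(I,O)\to S_\infty(E)$ and the column formula $\norm{\sum_t \ket{t}\otimes A_t}=\norm{\sum_t A_t^*A_t}^{1/2}$, identify these complete contractions with contractions $U:IE\to OE$ via $CB(S_1(I,O),S_\infty(E))=S_\infty(IE,OE)$, absorb the conjugations by passing to $\overline{U}$, and match the resulting supremum with $\om(\mathcal G)$ using Remark~\ref{Rem: Stines_general} and Lemma~\ref{lem: replace cc with cptp}. Two minor slips: in Step 1 the inner norm should be that of $C_D(\mathcal B(K))$, not $C_D(S_\infty(I,O))$, and the Step 2 expression should carry the ancilla vector, i.e.\ $\sum_t\norm{\sum_j(\bra{\g_t^j}\otimes 1)U(\ket{\psi_j}\otimes\ket{\xi})}^2$.
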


\begin{proof}
Let us start by recalling that, by (\ref{Eq: indetity_min_h_1}),$$C_D(S_1(I,O))=C_D \otimes_{min} S_1(I,O)=C_D \otimes_{h} S_1(I,O).$$

We have \[
\om(\mathcal  G) = \sup \tr (P \cdot (\id_{S_1(R)} \otimes \vp)(\psi)),
\] where the supremum is taken over all completely positive and trace preserving maps $\vp: S_1(I) \to S_1(O)$. Applying Theorem \ref{Stinespring} (see also Remark \ref{Rem: Stines_general}), there exists a Hilbert space $E$, a unit vector $\ket{\xi} \in E$ and a contraction $U: IE \to OE$ such that \[
\vp(\rho) = \tr_E \big[U(\rho \otimes \op{\xi}{\xi})U^\dag\big],
\] and, conversely, every such map of this form yields a completely positive complete contraction from $S_1(I) \to S_1(O)$. 
Using Lemma~\ref{lem: replace cc with cptp} we have \begin{align*}
    \om(\mathcal  G) &= \sup_{\norm{U}_{S_\infty(IE,OE)} \leq 1, \ket{\xi} \in E} \tr_{S_1(RO)}( P \cdot \tr_E((\id_{R} \otimes U)(\psi \otimes \op{\xi}{\xi})(\id_R \otimes U^\dag)))
    \\&= \sup_{\norm{U}_{S_\infty(IE,OE)} \leq 1, \ket{\xi} \in E} \tr_{S_1(ROE)} ((P \otimes \id_{S_1(E)}) \cdot (\id_R \otimes U)(\psi \otimes \op{\xi}{\xi})(\id_R \otimes U^\dag)).
\end{align*} Expressing the projection as $P = \sum_{i,j,t} \op{i}{j} \otimes \op{\g_t^i}{\g_t^j},$ the state $\psi$ as $\psi = \sum_{i,j} \op{i}{j} \otimes \op{\psi_i}{\psi_j}$, and tracing out over the Referee Hilbert space $R$, we express the value as \begin{align*}
    \om(\mathcal  G) &= \sup_{\norm{U}_{S_\infty(IE,OE)} \leq 1, \ket{\xi} \in E} \tr_{S_1(OE)} \Big[ \sum_{i,j,t} (\op{\g_t^i}{\g_t^j} \otimes \id_{S_1(E)}) \cdot U(\op{\psi_j\xi}{\psi_i\xi})U^\dag \Big] 
    \\&= \sup_{\norm{U}_{S_\infty(IE,OE)} \leq 1, \ket{\xi} \in E} \sum_t \Big\|\sum_{j} (\bra{\g_t^j} \otimes \id_E)U(\ket{\psi_j \xi})\Big\|_{E}^2.
\end{align*} Let us now consider the quantity $\norm{\eta_{\mathcal  G}}_{C_D(R_O \otimes_h C_I)}=C_D \otimes_{min} S_1(I,O)$. In calculating the min norm we see \begin{align*}
    \norm{\eta_{\mathcal  G}}_{C_D \otimes_{min} S_1(I,O)} &= \sup_{\cbnorm{T: S_1(I,O) \to S_\infty(E)} \leq 1}  \Big\|(\id_{C_D} \otimes T)(\eta_{\mathcal  G})\Big\|_{C_D \otimes_{min} S_\infty(E)} 
    \\&= \sup_{\cbnorm{T: S_1(I,O) \to S_\infty(E)} \leq 1} \Big\|(\id_{C_D} \otimes T)\Big(\sum_{j,t} \ket{t} \otimes \bra{\oline{\g}_t^j} \otimes \ket{\oline{\psi}_j}\Big)\Big\|_{C_D \otimes_{min} S_\infty(E)} 
    \\&=  \sup_{\cbnorm{T: S_1(I,O) \to S_\infty(E)} \leq 1} \Big\|\sum_{j,t} \ket{t} \otimes T(\bra{\oline{\g}_t^j} \otimes \ket{\oline{\psi}_j})\Big\|_{C_D \otimes_{min} S_\infty(E)}.
\end{align*}

For each $t$ set $A_t:= \sum_j T(\bra{\oline{\g}_t^j} \otimes \ket{\oline{\psi}_j})$. Continuing we observe
\begin{align*}
    \norm{\eta_{\mathcal  G}}_{C_D \otimes_{min} S_1(I,O)} &= \sup_{\cbnorm{T: S_1(I,O) \to S_\infty(E)} \leq 1} \Big\|\sum_t \ket{t} \otimes A_t\Big\|_{C_D \otimes_{min} S_\infty(E)} 
    \\&=  \sup_{\cbnorm{T: S_1(I,O) \to S_\infty(E)} \leq 1} \Big\|\sum_t A_t^* A_t\Big\|_{S_\infty(E)}^{\frac{1}{2}} 
    \\ &=  \sup_{\cbnorm{T: S_1(I,O) \to S_\infty(E)} \leq 1, \ket{\xi} \in E} \Big(\sum_t \norm{A_t \ket{\xi}}_E^2 \Big)^{\frac{1}{2}}.
\end{align*}

Given such a complete contraction $T: S_1(I,O) \to S_\infty(E)$, we consider its associated tensor $\wh{T} \in S_\infty(I,O) \otimes S_\infty(E)$, satisfying $\|\wh{T}\|_{S_\infty(I,O) \otimes_{min} S_\infty(E)} \leq 1,$ where the correspondence is given via the completely isometric identification \[CB(S_1(I,O),S_\infty(E))=
S_\infty(I,O) \otimes_{min} S_\infty(E) = S_\infty(IE,OE),
\] where \[
T(\rho) = \tr_{S_1(O)}(\wh{T} \cdot(\rho^T \otimes 1_{S_\infty(E)})).
\] Therefore, taking the supremum over complete contractions $T: S_1(I,O) \to S_\infty(E)$ is equivalent to taking the supremum over contractions $U: IE \to OE$. Given vectors $\ket{\psi} \in I, \ket{\g} \in O, \ket{\xi} \in E$, the identification satisfies \begin{align}
    T(\bra{\oline{\g}} \otimes \ket{\oline{\psi}})\ket{\xi} = (\bra{\oline{\g}} \otimes \id_E)U(\ket{\oline{\psi} \xi}). \label{eq: useful identification, T to V}
\end{align} 

Thus, we observe 
\begin{align*}
    \norm{\eta_{\mathcal  G}}_{C_D \otimes_{min} S_\infty(E)}^2 &=  \sup_{\cbnorm{T: S_1(I,O) \to S_\infty(E)} \leq 1, \ket{\xi} \in E} \sum_t \norm{A_t \ket{\xi}}_E^2 
    \\&= \sup_{\norm{U}_{S_\infty(IE,OE)} \leq 1, \ket{\xi} \in E} \sum_t \Big\|\sum_j (\bra{{\oline{\g}_t^j}} \otimes \id_E)U(\ket{{\oline{\psi_j}} \xi})\Big\|^2 \\
   &=\sup_{\norm{U}_{S_\infty(IE,OE)} \leq 1, \ket{\xi} \in E} \sum_t \Big\|\sum_j (\bra{\oline{\g}_t^j} \otimes \id_E)\oline{U}(\ket{\oline{\psi}_j\oline{\xi}})\Big\|^2\\&=\sup_{\norm{U}_{S_\infty(IE,OE)} \leq 1, \ket{\xi} \in E} \sum_t \Big\|\sum_j (\bra{\g_t^j} \otimes \id_E)U(\ket{\psi_j \xi})\Big\|^2.
\end{align*} 

This shows $\om(\mathcal  G) = \norm{\eta_{\mathcal  G}}_{C_D(S_1(I,O))}^2$ which finishes the proof. 
\qedhere 
\end{proof}

\subsection{Two-player}\label{subsec: two-player tensors}
We now consider values of two-player quantum games. In particular, we focus on the \emph{entangled, one-way} and \emph{sub-one-way} values of $\mathcal  G$. 
Following Subsection~\ref{subsec: one-player tensors}, there are orthonormal systems $\{\ket{\g_t}\}_{t\in D}$ in $RO_AO_B$, and $\{\ket{i}\}$ in  $R$ such that our game projection $P \in S_\infty(RO_AO_B)$ takes the form \[
P = \sum_{ t} \op{\g_t}{\g_t} = \sum_{i,j,t} \op{i}{j} \otimes \op{\g_t^i}{\g_t^j} = \sum_{i,j} \op{i}{j} \otimes x_i^*x_j, x_i = \sum_{t \in D} \op{t}{\g_t^i} \in S_\infty(O_AO_B,D),
\] and $\abs{D} \leq \abs{R}\abs{O_A}\abs{O_B}$.
The vector $\ket{\psi} \in RI_AI_B$ is written $\ket{\psi} = \sum_i \ket{i}\ket{\psi_i}, \ket{\psi_i} \in I_AI_B$. Together we obtain the full game tensor \begin{align*}
    \xi_{\mathcal  G} = \tr_R \left( (\psi \otimes 1_{S_\infty(O_AO_B)})(P^{T_O} \otimes 1_{S_1(I_AI_B)}) \right) = \sum_{i,j} \op{\psi_i}{\psi_j} \otimes y_i^*y_j \in S_1(I_AI_B) \otimes S_\infty(O_AO_B),
    \end{align*} where $y_i = \sum_t \op{t}{\oline{\g}_t^i}$. Our half tensor $\eta_{\mathcal  G} \in C_D \otimes R_{O_AO_B} \otimes C_{I_AI_B}$, \[
\eta_{\mathcal  G}= \sum_{j,t} \ket{t} \otimes \bra{\oline{\g}_t^j} \otimes \ket{\oline{\psi}_j}. 
    \] 
    
 \subsubsection{Entangled value}

We now introduce the \emph{entangled value} $\om^*(\mathcal  G)$ of a two-player quantum game $\mathcal  G$:
\begin{definition}\label{defn: entangle value of game}
    Given a two-player quantum game $\mathcal  G(\ket{\psi}, P)$ then the \emph{entangled value}, $\om^*(\mathcal  G)$, of the game $\mathcal  G$ is defined as \begin{align}
        \om^*(\mathcal  G) = \sup \tr(P \cdot (\id_{S_1(R)}\otimes \vp_A \otimes \vp_B)(\psi \otimes \rho)),
    \end{align} where the supremum is taken over all Hilbert spaces $E_A,E_B$, all strategies $\vp_A: S_1(I_AE_A) \to S_1(O_A), \vp_B: S_1(I_BE_B) \to S_1(O_B)$ and all pure states $\rho\in S_1(E_AE_B)$. 
\end{definition}

Analogous to Lemma~\ref{lem: replace cc with cptp} we have the following: 

\begin{lemma}\label{lem: replace cc with cptp two player}
    Given a two-player quantum game $\mathcal  G(\ket{\psi}, P)$ let \[
\tilde{\om}^*(\mathcal  G) = \sup \tr(P \cdot (\id_{S_1(R)}\otimes \vp_A \otimes \vp_B)(\psi \otimes \rho)),
    \] where the supremum is taken over all Hilbert spaces $E_A$, $E_B$, all complete contractions $\vp_A: S_1(I_AE_A) \to S_1(O_A), \vp_B: S_1(I_BE_B) \to S_1(O_B)$ and all, not necessarily positive, $\rho \in S_1(E_AE_B), \norm{\rho}_{S_1(E_AE_B)} \leq 1$. Then \[
\tilde{\om}^*(\mathcal  G) = \om^*(\mathcal  G).
    \] 
\end{lemma}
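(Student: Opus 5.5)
The inequality $\om^*(\mathcal G)\le\tilde\om^*(\mathcal G)$ is immediate: cptp maps are complete contractions and pure states have trace norm one. The work is the reverse inequality. I would mirror the proof of Lemma~\ref{lem: replace cc with cptp}, but first reduce the two-player expression to a one-player-type expression in which the entangled state is absorbed into the input.

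\emph{Step 1 (positivity and purity of $\rho$).} Fix complete contractions $\vp_A,\vp_B$ and $\rho$ with $\norm{\rho}_{S_1(E_AE_B)}\le1$. The quantity is linear in $\rho$, so by convexity of the unit ball of $S_1$ we may take $\rho=\op{\alpha}{\beta}$ rank one with unit vectors $\ket{\alpha},\ket{\beta}\in E_AE_B$.

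\emph{Step 2 (from complete contractions to the Cauchy--Schwarz split).} Apply Theorem~\ref{thm: cb decomposition as sum} to $\vp_A^\dag$ and $\vp_B^\dag$, writing
\[
\vp_A^\dag(\cdot)=\sum_k A_k(\cdot)B_k,\qquad \vp_B^\dag(\cdot)=\sum_l C_l(\cdot)D_l,
\]
normalized so that $\norm{\sum A_kA_k^\dag},\norm{\sum B_k^\dag B_k},\norm{\sum C_lC_l^\dag},\norm{\sum D_l^\dag D_l}\le1$. Then the tensor product $\vp_A^\dag\otimes\vp_B^\dag$ has Kraus-type decomposition
\[
\sum_{k,l}(A_k\otimes C_l)(\cdot)(B_k\otimes D_l),
\]
with the same normalizations. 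Write $\ket{\Psi_1}=\ket{\psi}\ket{\alpha}$ and $\ket{\Psi_2}=\ket{\psi}\ket{\beta}$, so that $\psi\otimes\rho=\op{\Psi_1}{\Psi_2}$. As in Lemma~\ref{lem: replace cc with cptp}, use $P=P^2$ and Cauchy--Schwarz to bound the value by
\[
\Big(\sum_{k,l}\bra{\Psi_2}(1\otimes B_k^\dag\otimes D_l^\dag)P(1\otimes B_k\otimes D_l)\ket{\Psi_2}\Big)^{1/2}\Big(\sum_{k,l}\bra{\Psi_1}(1\otimes A_k\otimes C_l)P(1\otimes A_k^\dag\otimes C_l^\dag)\ket{\Psi_1}\Big)^{1/2}.
\]
Each factor is of the form $\tr(P\cdot(\id\otimes\Phi_A\otimes\Phi_B)(\psi\otimes\sigma))$ with $\sigma$ a pure state and $\Phi_A,\Phi_B$ (preduals of $x\mapsto\sum B_k^\dag xB_k$, and similarly for the other families) completely positive complete contractions. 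Both factors are therefore at most the supremum over cp contractive pairs with pure states, and hence so is their geometric mean.

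\emph{Step 3 (making the maps trace preserving).} Given cp contractions $S_A^\dag,S_B^\dag$ in the Heisenberg picture, complete each to a ucp map $\tilde S^\dag=S^\dag+\tfrac{\tr(\cdot)}{|O|}X$ with $X=1-S^\dag(1)\ge0$, as in the one-player lemma. Then $\tilde S_A^\dag\otimes\tilde S_B^\dag-S_A^\dag\otimes S_B^\dag$ equals
\[
S_A^\dag\otimes(\tilde S_B^\dag-S_B^\dag)+(\tilde S_A^\dag-S_A^\dag)\otimes\tilde S_B^\dag,
\]
a sum of tensor products of cp maps, hence cp. Applied to $P\ge0$ and paired with the positive state, the value can only increase, and the resulting maps are cptp.

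The main obstacle is Step 2: the Cauchy--Schwarz argument must keep the product structure. Tensoring the two Kraus-type decompositions, rather than decomposing the joint map, is what achieves this, and the rank-one reduction of $\rho$ in Step 1 is what lets each side again carry a pure shared state.
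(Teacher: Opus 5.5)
Your proof is correct and follows essentially the same route as the paper. Both arguments reduce to a rank-one $\rho=\op{\alpha}{\beta}$ by convexity, tensor the two decompositions from Theorem~\ref{thm: cb decomposition as sum}, and apply Cauchy--Schwarz to reach completely positive complete contractions with pure states, before finishing with the unital completion of Lemma~\ref{lem: replace cc with cptp}. The only difference is that you write out why $\tilde{S}_A^\dag\otimes\tilde{S}_B^\dag-S_A^\dag\otimes S_B^\dag$ is completely positive, which the paper dismisses as ``analogous'' to the one-player case.
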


\begin{proof}We only need to prove $\tilde{\om}^*(\mathcal  G) \leq \om^*(\mathcal  G)$.
    We follow the same methods as Lemma~\ref{lem: replace cc with cptp}. Via a convexity argument, we may assume $\rho$ has the form $\rho = \op{\xi}{\eta}$, where $\ket{\xi}, \ket{\eta} \in E_AE_B$ are unit vectors. Consider two complete contractions \[
    \vp_A: S_1(I_AE_A) \to S_1(O_A), \vp_B: S_1(I_BE_B) \to S_1(O_B).
    \]Using the fact that both $\vp_A^\dag: S_\infty(O_A) \to S_\infty(I_AE_A), \vp_B^\dag: S_\infty(O_B) \to S_\infty(I_BE_B)$ are complete contractions, we apply Theorem~\ref{thm: cb decomposition as sum} to both operators obtaining c.b. decompositions \begin{align*}
\vp_A^\dag(\cdot) &= \sum_i A_i(\cdot)B_i, A_i: O_A \to I_AE_A, B_i: I_AE_A \to O_A, \\ \vp_B^\dag(\cdot) &= \sum_j C_j(\cdot)D_j, C_j: O_B \to I_BE_B, D_j: I_BE_B \to O_B.
    \end{align*} We may assume without loss of generality that \[
\max\Big\{ \Big\|\sum_i A_iA_i^\dag\Big\|, \Big\|\sum_i B_i^\dag B_i\Big\|, \Big\|\sum_i C_iC_i^\dag\Big\|, \Big\|\sum_i D_i^\dag D_i\Big\| \Big\} \leq 1. 
    \] As shown in Lemma \ref{lem: replace cc with cptp}, one can show that $$\tr\left(P \cdot (\id_{S_1(R)}\otimes \vp_A\otimes \vp_B)(\psi \otimes \rho)\right)\leq \Delta_1\cdot \Delta_2,$$where 
    \begin{align*}
     &\Delta_1=\Big( \sum_{i,j} \tr \Big( (\id_R\otimes B_i^\dag \otimes D_j^\dag) P (\id_R\otimes B_i \otimes D_j ) (\psi \otimes \op{\eta}{\eta}) \Big) \Big)^\frac{1}{2},\\&
     \Delta_2=\Big( \sum_{i,j} \tr \Big( ( \id_R\otimes A_i \otimes C_j ) P (\id_R\otimes A_i^\dag \otimes C_j^\dag) (\psi \otimes \op{\xi}{\xi}) \Big) \Big)^\frac{1}{2}.
    \end{align*}

  We define the operators $\vp_{A,i}: S_\infty(O_A) \to S_\infty(I_AE_A), \vp_{B,i}: S_\infty(O_B) \to S_\infty(I_BE_B)$, for $i=1,2$, via \[
\vp_{A,1}(x) = \sum_i B_i^\dag x B_i, \vp_{A,2}(x) = \sum_i A_i x A_i^\dag, \vp_{B,1}(x) = \sum_j D_j^\dag x D_j, \vp_{B,2}(x) = \sum_j C_j x C_j^\dag. 
    \]
By taking the supremum over all strategies we deduce $$\tilde{\om}^*(\mathcal  G) \leq \Gamma_1\cdot \Gamma_2,$$where 
\begin{align*}
&\Gamma_1=\left( \tr \left( (\id_{S_1(R)}\otimes \vp_{A,1} \otimes \vp_{B,1})(P)) \cdot (\psi \otimes \op{\eta}{\eta})\right)\right)^\frac{1}{2},\\&
\Gamma_2=\left( \tr \left((( \id_{S_1(R)}\otimes \vp_{A,2} \otimes \vp_{B,2} )(P)) \cdot (\psi \otimes \op{\xi}{\xi})\right)\right)^\frac{1}{2}
\end{align*}

The operators $\vp_{A,i}, \vp_{B,i}$ are completely positive complete contractions. Therefore we may assume the operators $\vp_A, \vp_B$ are completely positive complete contractions in the definition of $\tilde{\om}^*(\mathcal  G)$. Moreover, we may assume $\rho$ is a state in the definition of $\tilde{\om}^*(\mathcal  G)$. \newline The final claim that we may take the $\vp_A,\vp_B$ to also be trace preserving is analogous to the proof in Lemma~\ref{lem: replace cc with cptp}. This finishes the proof. \qedhere 
\end{proof}

\begin{corollary}\label{cor: entangled value, full tensor}
    Given a two-player quantum game $\mathcal  G(\ket{\psi}, P)$ it follows \[
\om^*(\mathcal  G) = \norm{\xi_{\mathcal  G}}_{S_1(I_A; S_\infty(O_A)) \otimes_{min} S_1(I_B; S_\infty(O_B))}.
    \]
\end{corollary}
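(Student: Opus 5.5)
The plan is to mirror the proof of Theorem~\ref{thm: value of game, one-player, full tensor}, replacing the single completely contractive map by a product of maps and a shared state, and using Lemma~\ref{lem: replace cc with cptp two player} in place of Lemma~\ref{lem: replace cc with cptp}. All spaces are finite dimensional, so the duality $(X\otimes_{\min}Y)^*=X^*\hat\otimes Y^*$ applies. Combined with (\ref{Eq. Dual_p,q}) it gives
\[
\norm{\xi_{\mathcal G}}_{S_1(I_A;S_\infty(O_A))\otimes_{\min}S_1(I_B;S_\infty(O_B))}=\sup\big\{|\sp{\xi_{\mathcal G}}{W}| : \norm{W}_{S_\infty(I_A;S_1(O_A))\hat\otimes S_\infty(I_B;S_1(O_B))}\le 1\big\}.
\]
By (\ref{identification cb-min}), $S_\infty(I_A;S_1(O_A))=CB(S_1(I_A),S_1(O_A))$, and similarly for Bob. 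So the task is to identify the unit ball of the projective tensor product of these two spaces of maps with the set of "entangled strategies" appearing in $\tilde\om^*(\mathcal G)$.

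\textbf{Step 1: the two-player analogue of Lemma~\ref{lem: duality of game tensors and strategies}.} For any linear maps $\vp_A:S_1(I_AE_A)\to S_1(O_A)$, $\vp_B:S_1(I_BE_B)\to S_1(O_B)$ and $\rho\in S_1(E_AE_B)$, let $\Phi(\sigma)=(\vp_A\otimes\vp_B)(\sigma\otimes\rho)$, a map $S_1(I_AI_B)\to S_1(O_AO_B)$. Then
\[
\sp{\xi_{\mathcal G}}{C_\Phi}=\tr\big(P\cdot(\id_{S_1(R)}\otimes\Phi)(\psi)\big).
\]
This is the same computation as in the one-player lemma, since $\xi_{\mathcal G}$ has the same form with $I=I_AI_B$ and $O=O_AO_B$.

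\textbf{Step 2: the unit ball of the projective norm.} I would show that the closed convex hull of the Choi matrices $C_\Phi$ arising in $\tilde\om^*(\mathcal G)$ is exactly the unit ball of $CB(S_1(I_A),S_1(O_A))\hat\otimes CB(S_1(I_B),S_1(O_B))$. For this I use the definition of the projective norm, $W=\alpha(x\otimes y)\beta$ with $x\in M_p(CB(S_1(I_A),S_1(O_A)))$ and $y\in M_q(\cdots)$. The key identification is
\[
M_p(CB(S_1(I_A),S_1(O_A)))\cong CB(S_1(I_A),M_p(S_1(O_A))),
\]
and a contraction there corresponds to a complete contraction $S_1(I_A\otimes\C^p)\to S_1(O_A)$ via the pairing $M_p(S_1)$ versus $S_1^p$-entries. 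I would rather argue via the dual identification $M_p(X^*)=CB(X,M_p)$. Concretely, I would write $x_{kl}(\sigma)=\vp_A(\sigma\otimes|k\rangle\langle l|)$ with $\vp_A:S_1(I_AE_A)\to S_1(O_A)$ a complete contraction, where $E_A=\C^p$. I would similarly write $y$ with $E_B=\C^q$. The scalars $\alpha,\beta$ then combine into $\rho=\beta\alpha^T$-type elements of $S_1(E_AE_B)$ with $\norm{\rho}_{S_1}\le\norm{\alpha}_{HS}\norm{\beta}_{HS}$. Since $\alpha\in M_{1,pq}$ and $\beta\in M_{pq,1}$ are vectors for scalar-valued $W$, this gives $\norm{\rho}_1\le\norm\alpha\norm\beta$, and the correspondence runs in both directions.

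\textbf{Step 3: conclude.} Step 2 gives $\norm{\xi_{\mathcal G}}_{\min}=\sup|\tr(P(\id\otimes\vp_A\otimes\vp_B)(\psi\otimes\rho))|$ over complete contractions $\vp_A,\vp_B$ and $\norm{\rho}_1\le1$. This equals $\tilde\om^*(\mathcal G)$; the absolute value is harmless, since $\rho$ can be rescaled by a phase. Lemma~\ref{lem: replace cc with cptp two player} then finishes the argument.

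\textbf{Main obstacle.} The main obstacle is Step 2: the precise matching of the operator space structure of $M_p(S_\infty(I_A;S_1(O_A)))$ with complete contractions on $S_1(I_A E_A)$. In particular one must check that a norm-one element of $M_p(CB(S_1(I_A),S_1(O_A)))$ yields a complete contraction on $S_1(I_A\otimes \C^p)$ rather than on some other rearrangement, that transposes and conjugations from (\ref{duality OS}) do not spoil this, and that all norms line up.
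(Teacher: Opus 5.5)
Your proposal is correct and is essentially the paper's proof written on the dual side. The paper expresses the min norm as $\sup \tr\big((T_A\otimes T_B)(\xi_{\mathcal G})\rho^T\big)$ over complete contractions $T_A\colon S_1(I_A;S_\infty(O_A))\to S_\infty(E_A)$, $T_B\colon S_1(I_B;S_\infty(O_B))\to S_\infty(E_B)$ and $\|\rho\|_{S_1(E_AE_B)}\le 1$; it then identifies $\CB(S_1(I;S_\infty(O)),S_\infty(E))=\CB(S_1(IE),S_1(O))$ and finishes with Lemma~\ref{lem: replace cc with cptp two player}, which is exactly what your decomposition $W=\alpha(x\otimes y)\beta$ with $M_p(X^*)=\CB(X,M_p)$ and a rank-one $\rho$ amounts to (with the parallel duality (\ref{duality OS}), your choice $x_{kl}(\sigma)=\varphi_A(\sigma\otimes|k\rangle\langle l|)$ is the consistent one, so the transpose issue you flag does not cause trouble).
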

\begin{proof}
    For Hilbert spaces $I,O,E$, we have the completely isometric identification \begin{align*}
        &\CB(S_1(I;S_\infty(O),S_\infty(E)) = \CB(S_1(I) \frownotimes S_\infty(O), S_\infty(E)) = S_\infty(I) \otimes_{min} S_1(O) \otimes_{min}S_\infty(E) \\&= S_\infty(I) \otimes_{min} S_\infty(E) \otimes_{min} S_1(O) = S_\infty(IE) \otimes_{min} S_1(O) = \CB(S_1(IE), S_1(O)).
    \end{align*} To this end, given a linear operator $T \in \CB(S_1(I; S_\infty(O)), S_\infty(E))$, let $\wh{T} \in \CB(S_1(IE), S_1(O))$ denote the corresponding operator under the above identifications. 

    If we start by calculating the norm of the full tensor we see \begin{align*}
\norm{\xi_{\mathcal  G}}_{S_1(I_A; S_\infty(O_A)) \otimes_{min} S_1(I_B; S_\infty(O_B))} &=  \sup \norm{(T_A \otimes T_B)(\xi_\mathcal G)}_{S_\infty(E_AE_B)}
\\&= \sup \tr \left((T_A \otimes T_B)(\xi_{\mathcal  G})\rho^T \right),
    \end{align*} where the supremum is taken over all complete contractions $T_A: S_1(I_A; S_\infty(O_A)) \to S_\infty(E_A), T_B: S_1(I_B; S_\infty(O_B)) \to S_\infty(E_B)$, and elements $\rho \in S_1(E_AE_B)$ such that $\|\rho\|_{S_1(E_AE_B)}\leq 1$. Our reasoning above coupled with Lemma~\ref{lem: replace cc with cptp two player} yields the equality \[
\tr \left((T_A \otimes T_B)(\xi_{\mathcal  G})\rho^T \right) = \tr\left(P \cdot(\id_{S_1(R)} \otimes \wh{T}_A \otimes \wh{T}_B)(\psi \otimes \rho))\right),
    \] which proves the equality. \qedhere 
\end{proof}

Lemma~\ref{lem: replace cc with cptp two player} was proven independently in \cite{Crann} and generalizes the analogous result proven in \cite{palazuelos2016survey} in which it is shown that for a classical game $\mathcal G$, its quantum value is given as \[
\om^*(\mathcal G) = \norm{\mathcal G}_{\ell_1(I_A, \ell_\infty(O_A)) \otimes_{min} \ell_1(I_B, \ell_\infty(O_B))}.
\] It was also shown that the classical value of the classical game $\mathcal G$ is given as \[
\om(\mathcal G) = \norm{\mathcal G}_{\ell_1(I_A, \ell_\infty(O_A)) \otimes_{\e} \ell_1(I_B, \ell_\infty(O_B))}.
\] The previous calculations, specialized to the case where $E_A = E_B = \mathbb{C}$ (so that the state $\rho$ does not appear in the computations), lead to the classical value (or, in the case of quantum games, more appropriately, the non-entangled value) of the game $G$ (see also \cite{Crann}):
\[
\omega(\mathcal{G}) = \|\xi_{\mathcal{G}}\|_{S_1(I_A; S_\infty(O_A)) \otimes_{\varepsilon} S_1(I_B; S_\infty(O_B))}.
\]

Corollary~\ref{cor: entangled value, full tensor} allows us to calculate the entangled value of a game via the half tensor: 

\begin{theorem}\label{thm: entangled value, half-tensor}
    Given a two-player quantum game $\mathcal  G(\ket{\psi},P)$, we have \[
\om^*(\mathcal  G) = \norm{\eta_{\mathcal  G}}_{C_D(S_1(I_A,O_A) \otimes_{min} S_1(I_B,O_B))}^2.
    \]
\end{theorem}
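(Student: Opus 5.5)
The plan is to follow the proof of Theorem~\ref{thm: value of game, one-player, half tensor} and replace the single complete contraction by a pair of complete contractions acting on different tensor factors. First, the entangled value will be put in the same form used in the one-player case. By Lemma~\ref{lem: replace cc with cptp two player}, the supremum can run over completely positive complete contractions $\vp_A,\vp_B$. Remark~\ref{Rem: Stines_general} then lets us write $\vp_A(\rho)=\tr_{E_A'}[U_A(\rho\otimes\op{\xi_A}{\xi_A})U_A^\dag]$, and similarly for $\vp_B$, with contractions $U_A: I_AE_A\to O_AE_A$ and $U_B: I_BE_B\to O_BE_B$. The ancilla vectors can be absorbed into the shared state, so we will take a single unit vector $\ket{\xi}\in E_AE_B$ together with contractions $U_A,U_B$. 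Expanding $P=\sum_{i,j,t}\op{i}{j}\otimes\op{\g_t^i}{\g_t^j}$ and $\psi$ and tracing out $R$, exactly as in the one-player computation, should give
\[
\om^*(\mathcal G)=\sup_{U_A,U_B,\ket{\xi}}\sum_t\Big\|\sum_j(\bra{\g_t^j}\otimes\id_{E_AE_B})(U_A\otimes U_B)\ket{\psi_j\xi}\Big\|^2 .
\]

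Second, we compute the min norm of $\eta_{\mathcal G}$. We use $C_D(X)=C_D\otimes_{\min}X$ with $X=S_1(I_A,O_A)\otimes_{\min}S_1(I_B,O_B)$. Injectivity of min then gives
\[
\norm{\eta_{\mathcal G}}=\sup\Big\|\sum_t\ket{t}\otimes(T_A\otimes T_B)(\eta_t)\Big\|_{C_D\otimes_{\min}S_\infty(E_AE_B)},
\]
where the supremum runs over complete contractions $T_A: S_1(I_A,O_A)\to S_\infty(E_A)$ and $T_B: S_1(I_B,O_B)\to S_\infty(E_B)$, and $\eta_t=\sum_j\bra{\oline\g_t^j}\otimes\ket{\oline\psi_j}$ is read as an element of $S_1(I_AI_B,O_AO_B)=S_1(I_A,O_A)\otimes S_1(I_B,O_B)$. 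Using the column identity, this becomes $\sup_{\ket\xi}\big(\sum_t\norm{A_t\ket\xi}^2\big)^{1/2}$ with $A_t=(T_A\otimes T_B)(\eta_t)$.

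Third, we identify complete contractions $T_A$ with contractions $U_A\in S_\infty(I_AE_A,O_AE_A)$ via $CB(S_1(I,O),S_\infty(E))=S_\infty(I,O)\otimes_{\min}S_\infty(E)$, exactly as in the one-player proof, and likewise for $T_B$. We then check that the identity \eqref{eq: useful identification, T to V} tensorizes. Since $T_A\otimes T_B$ corresponds to $U_A\otimes U_B$ after reordering the factors $I_AI_BE_AE_B$, the relation $(T_A\otimes T_B)(\bra{\oline\g}\otimes\ket{\oline\psi})\ket\xi=(\bra{\oline\g}\otimes\id)(U_A\otimes U_B)\ket{\oline\psi\xi}$ holds first for product vectors and then by linearity for general $\ket{\oline\psi}$ and $\ket{\oline\g}$. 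Substituting this and taking complex conjugates (replacing $U_A,U_B,\xi$ by $\oline U_A,\oline U_B,\oline\xi$), as at the end of the one-player proof, yields exactly the expression for $\om^*(\mathcal G)$ from the first step.

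The main obstacle is the matching of the two suprema. On the game side, the strategies must be brought to the form of a product contraction $U_A\otimes U_B$ acting on a single shared vector $\ket{\xi}$. On the norm side, the min norm is a supremum over pairs $(T_A,T_B)$ mapping into separate $S_\infty(E_A)$ and $S_\infty(E_B)$. For this, the representation $\|(T_A\otimes T_B)(\cdot)\|_{S_\infty(E_AE_B)}$ of the minimal norm must be used with a single vector in $E_AE_B$, and the ancilla states from Stinespring must be absorbed into the entangled state. An alternative route is to write $\xi_{\mathcal G}=\eta_{\mathcal G}^*\eta_{\mathcal G}$, combine Corollary~\ref{cor: entangled value, full tensor} with the $C^*$-identity $\|x^*x\|=\|x\|^2$ in $C_D\otimes_{\min}(\cdot)$, and check that the representations correspond. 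I would use this only as a consistency check, since the direct computation above is cleaner.
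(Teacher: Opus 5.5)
Your proposal is correct and follows essentially the same route as the paper. Like the paper, you rewrite $\om^*(\mathcal G)$ via Stinespring as a supremum over product contractions acting on a single shared vector; the paper likewise only asserts that the ancillas can be absorbed. You then compute the minimal norm of $\eta_{\mathcal G}$ through pairs of complete contractions $T_A,T_B$, the column identity, and the identification \eqref{eq: useful identification, T to V} applied to each player, again as the paper does; your explicit tensorization check and the final complex-conjugation step just make precise what the paper leaves implicit.
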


\begin{proof}
    We have \[
\om^*(\mathcal  G) = \sup \tr(P \cdot(\id_{S_1(R)}\otimes \vp_A \otimes \vp_B)(\psi \otimes \rho))
    \] where the supremum is taken over all Hilbert spaces $E_A,E_B$, strategies $\vp_A: S_1(I_AE_A) \to S_1(O_A), \vp_B: S_1(I_BE_B) \to S_1(O_B)$, and (we may assume pure) quantum states $\rho = \op{\xi}{\xi} \in S_1(E_AE_B)$. Once again, we will prove this result in the same spirit as that of its one-player counterpart (see Theorem~\ref{thm: value of game, one-player, half tensor}).

    Applying Stinespring's theorem to our strategies, there exist Hilbert spaces $K_A,K_B$, unit vectors $\ket{\xi_A} \in K_A, \ket{\xi_B} \in K_B$, and contractions $U_A: O_AK_A \to I_AE_AK_A,U_B: O_BK_B \to I_BE_BK_B$ such that \[
\vp_A(\rho) = \tr_{K_A} U_A^\dag(\rho \otimes \xi_A)U_A,\,  \vp_B(\sigma) = \tr_{K_B} U_B^\dag(\sigma \otimes \xi_B)U_B,
\] where $\xi_A = \op{\xi_A}{\xi_A}, \xi_B = \op{\xi_B}{\xi_B}$, for every $\rho \in S_1(I_AE_A)$ and $\sigma \in S_1(I_BE_B)$. Moreover, every map of this form yields a completely positive complete contraction from $S_1(I_AE_A)\to S_1(O_A)$ and $S_1(I_BE_B) \to S_1(O_B)$, respectively. To simplify the notation, we will denote $O_{AB}=O_AO_B$, $K_{AB}=K_AK_B$ and $\xi_{AB}=\xi_A\xi_B$. Then, applying Lemma~\ref{lem: replace cc with cptp two player} we observe 
\begin{align*}
    &\om^*(\mathcal  G) = \sup \tr_{S_1(RO_{AB})}
    \left[ P \cdot \tr_{K_{AB}} ((\id_R \otimes U_A^\dag \otimes U_B^\dag)(\psi \otimes \op{\xi\xi_{AB}}{\xi\xi_{AB}})(\id_R \otimes U_A \otimes U_B)) \right]
\\ &= \sup \tr_{S_1(RO_{AB}K_{AB})} 
\Big[ (P \otimes \id_{K_{AB}})\cdot (\id_R \otimes U_A^\dag \otimes U_B^\dag)(\psi \otimes \op{\xi \xi_{AB}}{\xi \xi_{AB}})(\id_R \otimes U_A \otimes U_B)\Big],
\end{align*}where the supremum is taken over all Hilbert spaces $E_A$, $E_B$, $K_A$, $K_B$, all contractions $U_A: O_AK_A \to I_AE_AK_A, U_B: O_BK_B \to I_BE_BK_B$ and vectors $\norm{\ket{\xi}}_{E_AE_B}, \norm{\ket{\xi_A}}_{K_A}, \norm{\ket{\xi_B}}_{K_B}$ with norm lower than or equal one. Recalling that $P = \sum_{i,j,t} \op{i}{j} \otimes \op{\g_t^i}{\g_t^j}$ and $\psi = \sum_{i,j} \op{i}{j} \otimes \op{\psi_i}{\psi_j}$, and after tracing out the Referee Hilbert space,  we rewrite $\om^*(\mathcal  G)$ as 
\begin{align*}
    \sup \tr_{S_1(O_{AB}K_{AB})} 
     \Big(\sum_{i,j,t} (\op{\g_t^i}{\g_t^j} \otimes \id_{K_{AB}}) \cdot (U_A^\dag \otimes U_B^\dag)(\op{\psi_j\xi\xi_{AB}}{\psi_i\xi\xi_{AB}})(U_A\otimes U_B)\Big).
     \end{align*} Then, one can easily deduce that 
     \begin{align*}\om^*(\mathcal  G)= \sup \sum_t \Big\|\sum_{j} (\bra{\g_t^j} \otimes \id_{K_{AB}})\cdot (U_A^\dag \otimes U_B^\dag)(\ket{\psi_j \xi \xi_{AB}})\Big\|_{K_{AB}}^2,
      \end{align*}where the supremum is taken over all Hilbert spaces $E_A$, $E_B$, $K_A$, $K_B$, all contractions \[\norm{U_A}_{S_\infty(O_AK_A,I_AE_AK_A)}, \norm{U_B}_{S_\infty(O_BK_B,I_BE_BK_B)} \leq 1,\] and all vectors $\norm{\ket{\xi}}_{E_AE_B}, \norm{\ket{\xi_A}}_{K_A}, \norm{\ket{\xi_B}}_{K_B} \leq 1$.

It can be shown that the above value equals \[
\om^*(\mathcal  G) = \sup \sum_t \Big\|\sum_j (\bra{\g_t^j} \otimes \id_{H_AH_B})\cdot (V_A \otimes V_B)(\ket{\psi_j \eta})\Big\|_{H_AH_B}^2,
\] where the supremum is taken over all Hilbert spaces $H_A$, $H_B$, all contractions \[\norm{V_A}_{S_\infty(I_AH_A, O_AH_A)}, \norm{V_B}_{S_\infty(I_BH_B,O_BH_B)} \leq 1,\] and all vectors $\norm{\ket{\eta}}_{H_AH_B} \leq 1$. \newline 

If we now calculate the norm of the half tensor we see \begin{align*}
    \norm{\eta_{\mathcal  G}}_{C_D (S_1(I_A,O_A) \otimes_{min} S_1(I_B,O_B))} &= \norm{\eta_{\mathcal  G}}_{C_D \otimes_{min} (S_1(I_A,O_A) \otimes_{min} S_1(I_B,O_B))} 
    \\&= \sup \norm{ (\id_{C_D} \otimes T_A \otimes T_B)(\eta_{\mathcal  G})}_{C_D \otimes_{min} S_\infty(E_A) \otimes_{min} S_\infty(E_B)}
    \\&= \sup \norm{ (\id_{C_D} \otimes T_A \otimes T_B)(\eta_{\mathcal  G})}_{C_D \otimes_{min} S_\infty(E_AE_B)}
    \\&= \sup \Big\|\sum_{j,t} \ket{t} \otimes (T_A \otimes T_B)(\bra{\oline{\g}_t^j} \otimes \ket{\oline \psi_j}\Big\|_{C_D \otimes_{min} S_\infty(E_AE_B)},
\end{align*} where the supremum is taken over all complete contractions $T_A: S_1(I_A,O_A) \to S_\infty(E_A), T_B: S_1(I_B,O_B) \to S_\infty(E_B)$. As before, let us denote for each $t \in D$ the operator $A_t:= \sum_{j} (T_A \otimes T_B)(\bra{\oline{\g}_t^j} \otimes \ket{\oline{\psi}_j})$. As in Theorem~\ref{thm: value of game, one-player, half tensor}, we have \[
\norm{\eta_{\mathcal  G}}_{C_D \otimes_{min} (S_1(I_A, O_A) \otimes_{min} S_1(I_B,O_B))} = \sup \Big( \sum_t \norm{A_t\ket{\eta}}_{E_AE_B}^2\Big)^\frac{1}{2},
\] where the supremum is taken over all complete contractions $T_A: S_1(I_A,O_A) \to S_\infty(E_A), T_B: S_1(I_B,O_B) \to S_\infty(E_B)$ and unit vectors $\ket{\eta} \in E_AE_B$. In Theorem~\ref{thm: value of game, one-player, half tensor} we used Equation~\eqref{eq: useful identification, T to V} where, if given a complete contraction $T: S_1(I,O) \to S_\infty(E)$, we may take the supremum over contractions $V: IE \to OE$, where if $\ket{\psi} \in I, \ket{\g} \in O, \ket{\xi} \in E$ then \[
T(\bra{\oline{\g}} \otimes \ket{\oline{\psi}})\ket{\xi} = {(\bra{\oline{\g}} \otimes \id_E)V(\ket{\oline{\psi} \xi}}).
\] 
Using this for both $T_A$ and $T_B$, our previous expression becomes \[
\norm{\eta_{\mathcal  G}}_{C_D \otimes_{min} (S_1(I_A,O_A) \otimes_{min} S_1(I_B,O_B))}^2 = \sup \sum_t \Big\|\sum_j (\bra{\g_t^j} \otimes \id_{E_AE_B})(V_A \otimes V_B)(\ket{\psi_j\eta})\Big\|_{E_AE_B}^2 ,
\] where the supremum is over all Hilbert spaces $E_A$, $E_B$, all contractions $V_A: I_AE_A \to O_AE_A, V_B: I_BE_B \to O_BE_B$ and all unit vectors $\ket{\eta}_{E_AE_B}$.
This concludes the proof. \qedhere 
\end{proof}

\subsubsection{One-way value}\label{sec: one-way values of games}

We will now focus on the so-called \emph{one-way value} of a two-player quantum game $\mathcal  G$. This value of the game corresponds to Alice sending quantum information to Bob (but not vice versa) as part of their strategy, i.e., the supremum will be taken over strategies from the set $\mathcal {OW}_{A \to B}(I_{AB}, O_{AB})$, as defined in Section~\ref{sec: ns strategies and symmetrized one-way}. Analogously, we may consider the one-way value by allowing Bob to send quantum information to Alice, but not vice-versa. Formally: 
\begin{definition}\label{defn: value, one-way}
    Given a two-player quantum game $\mathcal  G(\ket{\psi}, P)$ then the \emph{one-way} value of $\mathcal  G$, $\om_{A \to B}(\mathcal  G)$ defined by allowing Alice to send quantum information to Bob (but not vice versa), is defined as \begin{align}
        \om_{A\to B}(\mathcal  G) = \sup \tr\left(P \cdot (\id_{S_1(R)}\otimes \vp)(\psi)\right),
        \end{align} where the supremum is taken over all $\vp \in \mathcal {OW}_{A \to B}(I_{AB}, O_{AB})$.  Equivalently, \begin{align}
            \om_{A \to B}(\mathcal  G) = \sup \tr\left(P \cdot (\id_{S_1(R)}\otimes \id_{S_1(O_A)}\otimes \vp_B) \circ (\id_{S_1(R)}\otimes \vp_A \otimes \id_{S_1(I_B)})(\psi)\right),
    \end{align} where the supremum is taken over all Hilbert spaces $K$, and quantum channels \[
\vp_A: S_1(I_A) \to S_1(O_AK),\,  \vp_B: S_1(KI_B) \to S_1(O_B),
    \] such that $\vp = (\id_{S_1(O_A)} \otimes \vp_B) \circ (\vp_A \otimes \id_{S_1(I_B)})$.
\end{definition}

\begin{remark}
To simplify notation we will writer $\id_{S_1(R)}\otimes \id_{S_1(O_A)}=\id_{S_1(RO_A)}$ and $\id_{S_1(R)}\otimes \id_{S_1(I_B)}=\id_{S_1(RI_B)}$ so that 
\begin{align}
            \om_{A \to B}(\mathcal  G) = \sup \tr\left(P \cdot (\id_{S_1(RO_A)}\otimes \vp_B) \circ (\id_{S_1(RO_A)}\otimes \vp_A )(\psi)\right).
    \end{align}
\end{remark}

As we saw for the honest value of a game, and the entangled value of a game, it follows that to calculate the one-way value of $\mathcal  G$, one may consider  complete contractions $\vp_A: S_1(I_A) \to S_1(O_AK), \vp_B: S_1(KI_B) \to S_1(O_B)$ such that $\vp = (\id \otimes \vp_B) \circ (\vp_A \otimes \id)$. Therefore, analogous to both Lemma~\ref{lem: replace cc with cptp} and Lemma~\ref{lem: replace cc with cptp two player}, we have: 

\begin{lemma}\label{lem: replace cc with cptp, two-player one-way}
    Given a two-player quantum game $\mathcal  G(\ket{\psi},P)$ consider the value \[
\tilde{\om}(\mathcal  G):= \sup \tr\left( P \cdot ( \id_{S_1(RO_A)} \otimes \vp_B) \circ (\id_{S_1(RI_B)}\otimes  \vp_A)(\psi)\right),
    \] where the supremum is taken over all complete contractions $\vp_A: S_1(I_A) \to S_1(O_AK)$ and $\vp_B: S_1(KI_B) \to S_1(O_B)$. Then \[
\tilde{\om}(\mathcal  G) = \om_{A\to B}(\mathcal  G).
    \] 
\end{lemma}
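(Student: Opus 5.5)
The inequality $\om_{A\to B}(\mathcal G)\le\tilde\om(\mathcal G)$ is immediate, since quantum channels are complete contractions $S_1\to S_1$. The content is the reverse inequality, and we follow the scheme of Lemma~\ref{lem: replace cc with cptp} and Lemma~\ref{lem: replace cc with cptp two player}. Fix complete contractions $\vp_A: S_1(I_A)\to S_1(O_AK)$ and $\vp_B: S_1(KI_B)\to S_1(O_B)$. By Theorem~\ref{thm: cb decomposition as sum}, applied to the adjoints $\vp_A^\dag: S_\infty(O_AK)\to S_\infty(I_A)$ and $\vp_B^\dag: S_\infty(O_B)\to S_\infty(KI_B)$, we write
\[
\vp_A^\dag(\cdot)=\sum_i A_i(\cdot)B_i,\qquad \vp_B^\dag(\cdot)=\sum_j C_j(\cdot)D_j,
\]
normalized so that $\|\sum_iA_iA_i^\dag\|$, $\|\sum_iB_i^\dag B_i\|$, $\|\sum_jC_jC_j^\dag\|$ and $\|\sum_jD_j^\dag D_j\|$ are all at most one.

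The adjoint of the composed map is $\vp^\dag=(\vp_A^\dag\otimes\id_{I_B})\circ(\id_{O_A}\otimes\vp_B^\dag)$. Hence
\[
\vp^\dag(X)=\sum_{i,j}(A_i\otimes 1_{I_B})(1_{O_A}\otimes C_j)\,X\,(1_{O_A}\otimes D_j)(B_i\otimes 1_{I_B}).
\]
This is a single Kraus-type decomposition with left operators $L_{ij}=(A_i\otimes 1)(1\otimes C_j)$ and right operators $R_{ij}=(1\otimes D_j)(B_i\otimes 1)$. A direct computation gives
\[
\sum_{i,j}L_{ij}L_{ij}^\dag=\sum_i A_i\Big(1\otimes\sum_j C_jC_j^\dag\Big)A_i^\dag\le\sum_iA_iA_i^\dag\otimes 1\le 1,
\]
and similarly $\sum_{i,j}R_{ij}^\dag R_{ij}\le 1$. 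We write $P=P\cdot P$, split it across the trace with $\psi^{1/2}$ exactly as in Lemma~\ref{lem: replace cc with cptp}, and apply Cauchy--Schwarz. This bounds the value by $\Gamma_1^{1/2}\Gamma_2^{1/2}$, where $\Gamma_1$ is the value of $\Phi_1(x)=\sum_{ij}R_{ij}^\dag xR_{ij}$ and $\Gamma_2$ is the value of $\Phi_2(x)=\sum_{ij}L_{ij}xL_{ij}^\dag$. The key point is that these factor one-way as well: $\Phi_1=(\vp_{A,1}\otimes\id)\circ(\id\otimes\vp_{B,1})$ with $\vp_{A,1}(y)=\sum_iB_i^\dag yB_i$ and $\vp_{B,1}(x)=\sum_jD_j^\dag xD_j$, and $\Phi_2$ factors the same way through $\vp_{A,2}$ and $\vp_{B,2}$. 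All four maps are completely positive complete contractions. Since $\sqrt{\Gamma_1\Gamma_2}\le\max(\Gamma_1,\Gamma_2)$, we may therefore assume in $\tilde\om$ that both adjoints are completely positive and subunital.

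It remains to pass from subunital to unital maps while keeping the factorization. This is the main obstacle, because a naive correction of the composite map would destroy the one-way structure. The plan is to correct Bob first. Replace $\vp_B^\dag$ by
\[
\tilde\vp_B^\dag(x)=\vp_B^\dag(x)+\frac{\tr(x)}{|O_B|}\big(1-\vp_B^\dag(1)\big),
\]
which is unital completely positive and satisfies $\tilde\vp_B^\dag\ge\vp_B^\dag$ on positive elements. Since $\vp_A^\dag\otimes\id$ is positive, the composed adjoint can only increase on the positive operator $P$ (after the appropriate transpose bookkeeping). Pairing with the positive $\psi$, the value does not decrease. Then apply the same correction to Alice, $\tilde\vp_A^\dag(y)=\vp_A^\dag(y)+\frac{\tr(y)}{|O_A||K|}\big(1-\vp_A^\dag(1)\big)$, with the same monotonicity argument.

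The resulting maps $\tilde\vp_A$ and $\tilde\vp_B$ are quantum channels, and their composition lies in $\mathcal{OW}_{A\to B}(I_{AB},O_{AB})$ with value at least the original one. This yields $\tilde\om(\mathcal G)\le\om_{A\to B}(\mathcal G)$.
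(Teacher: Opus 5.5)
Your proposal is correct and follows essentially the same route as the paper. Both arguments apply the cb-decomposition of Theorem~\ref{thm: cb decomposition as sum} to $\vp_A^\dag$ and $\vp_B^\dag$, then use a Cauchy--Schwarz split to reduce to completely positive complete contractions that still factor one-way through $\vp_{A,1},\vp_{B,1}$ and $\vp_{A,2},\vp_{B,2}$, and finish with the factor-by-factor unital correction from Lemma~\ref{lem: replace cc with cptp}. Your write-up simply makes explicit the monotonicity step that the paper only sketches.
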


\begin{proof}We only need to prove the nontrivial inequality. We once again use our insight gained in the proof of Lemma~\ref{lem: replace cc with cptp}. Let us denote our complete contractions as $\vp_A: S_1(I_A) \to S_1(O_AK)$ and $\vp_B: S_1(KI_B) \to S_1(O_B)$, which in turn yield complete contractions $\vp_A^\dag: S_\infty(O_AK) \to S_\infty(I_A)$ and $\vp_B^\dag: S_\infty(O_B) \to S_\infty(KI_B)$, for Hilbert space $K$. Via Theorem~\ref{thm: cb decomposition as sum} there exist decompositions \begin{align*}
        \Phi_A^\dag(\cdot) = \sum_i A_i(\cdot)B_i, \, \Phi_B^\dag(\cdot) = \sum_j C_j(\cdot)D_j,
    \end{align*} where $A_i: O_AK \to I_A, B_i: I_A \to O_AK, C_j: O_B \to KI_B, D_j: KI_B \to O_B$, for every $i,j$. We calculate: 
    \begin{align}\label{Eq: one_way}
         &\tr\left(P \cdot( \id_{S_1(R)}\otimes \id_{S_1(O_A)} \otimes \vp_B)\circ(\id_{S_1(R)}\otimes \vp_A \otimes \id_{S_1(I_B)}))(\psi)\right)
        \\ &\nonumber= \tr\Big( \Big((\id_{S_\infty(R)}\otimes \vp_A^\dag \otimes \id_{S_\infty(I_B)}) \circ (\id_{S_\infty(R)}\otimes \id_{S_\infty(O_A)} \otimes \vp_B^\dag)(P^\dag)\Big)^\dag (\psi)\Big)
        \\&\nonumber= \sum_{i,j} \tr \Big( \Big(  (B_i^\dag \otimes \id_{I_BR})(D_j^\dag \otimes \id_{O_AR})P(C_j^\dag \otimes \id_{O_AR})(A_i^\dag \otimes \id_{I_BR}) \Big) \cdot \psi \Big)
       \\&\nonumber= \sum_{i,j} \tr \left( \psi^\frac{1}{2} \cdot \left(  (B_i^\dag \otimes \id_{I_BR})(D_j^\dag \otimes \id_{O_AR})PP(C_j^\dag \otimes \id_{O_AR})(A_i^\dag \otimes \id_{I_BR}) \right) \cdot \psi^\frac{1}{2} \right).
    \end{align}
    We define the following completely positive complete contractions for $i=1,2$: \begin{align*}
        \vp_{A,i}: S_\infty(O_AK) \to S_\infty(I_A), \vp_{B,i}: S_\infty(O_B) \to S_\infty(KI_B),
    \end{align*} by \begin{align*}
        \vp_{A,1}(x) = \sum_i B_i^\dag x B_i, \vp_{A,2}(x) = \sum_i A_i x A_i^\dag, \vp_{B,1}(y) = \sum_j D_j^\dag y D_j, \vp_{B,2}(y) = \sum_j C_j y C_j^\dag,
    \end{align*} for every $x \in S_\infty(O_AK)$ and $y \in S_\infty(O_B)$. Then, Cauchy-Schwartz inequality allows us to upper bound Equation (\ref{Eq: one_way}) by $\Delta_1\cdot\Delta_2$, where 
     \begin{align*}
         &\Delta_1=\tr \left( ( \vp_{A,1}\otimes \id_{S_\infty(RI_B)})(\vp_{B,1}\otimes \id_{S_\infty(RO_A)})(P)\cdot \psi \right)^\frac{1}{2}.\\&
         \Delta_2=\tr \left( ( \vp_{A,2}\otimes \id_{S_\infty(RI_B)})(\vp_{B,2}\otimes \id_{S_\infty(RO_A)})(P)\cdot \psi \right)^\frac{1}{2}.
                 \end{align*}

            Thus, this proves we may take our operators $\vp_A, \vp_B$ to be completely positive complete contractions in the definition of $\tilde{\om}(\mathcal  G)$. Finally, the claim that we may take $\varphi_A$ and $\varphi_B$ to also be trace-preserving is proved as in Lemma~\ref{lem: replace cc with cptp}, by applying the same argument to both $\varphi_{A,i}$ and $\varphi_{B,i}$. This finishes the proof. \qedhere 
    \end{proof}

We now arrive at the corresponding half-tensor calculation: 
\begin{theorem}\label{thm: one-way value A to B, half-tensor}
    Given a two-player quantum game $\mathcal  G(\ket{\psi}, P)$ it follows \begin{align*}
        \om_{A \to B}(\mathcal  G) = \norm{\eta_{\mathcal  G}}_{C_D ( S_1(I_B,O_B) \otimes_h S_1(I_A,O_A))}^2.
    \end{align*}
\end{theorem}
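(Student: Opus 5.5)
We follow the template of Theorems~\ref{thm: value of game, one-player, half tensor} and \ref{thm: entangled value, half-tensor}. The idea is to write both sides as a supremum over the same family of contractions and then match them term by term.

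\textbf{Step 1: the game side.} By Lemma~\ref{lem: replace cc with cptp, two-player one-way} we may take $\vp_A$ and $\vp_B$ to be completely positive complete contractions. By Remark~\ref{Rem: Stines_general} they can be dilated by contractions: an auxiliary space $E_A$ with unit vector $\ket{\xi_A}$ and a contraction $U_A: I_AE_A\to O_AKE_A$, and similarly $U_B: KI_BE_B\to O_BE_B$ with a unit vector $\ket{\xi_B}$. Absorbing the Stinespring environment $E_A$ into the message register, we may replace the composition by
\[
W=(\id_{O_A}\otimes V_B)(V_A\otimes \id_{I_B}),\qquad V_A: I_A\to O_A H,\quad V_B: HI_B\to O_B H',
\]
where $V_A,V_B$ are contractions and the initial ancilla vector is absorbed into $V_A$. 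Tracing out the referee exactly as in Theorem~\ref{thm: value of game, one-player, half tensor}, we get
\[
\om_{A\to B}(\mathcal G)=\sup\sum_t\Big\|\sum_j(\bra{\g_t^j}\otimes\id_{H'})(\id_{O_A}\otimes V_B)(V_A\otimes\id_{I_B})\ket{\psi_j}\Big\|^2 .
\]
Conversely, every such pair of contractions defines a complete contraction of the admissible form, so the supremum is exactly the value.

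\textbf{Step 2: the tensor side.} By \eqref{Eq: indetity_min_h_1}, $C_D(X)=C_D\otimes_h X$ with $X=S_1(I_B,O_B)\otimes_h S_1(I_A,O_A)$. Since the Haagerup norm is injective, and by \eqref{Ec. Haagerup_norm}, the norm of $C_D\otimes_h X$ is the supremum of $\|(\id\otimes \sigma_B\odot\sigma_A)(\eta_{\mathcal G})\|_{C_D\otimes_{min}S_\infty(E)}$. Here $\sigma_A,\sigma_B:S_1(\cdot,\cdot)\to S_\infty(E)$ range over complete contractions with the same target and the product is \emph{not} required to commute (Christensen--Effros--Sinclair--Paulsen representation of $\otimes_h$). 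The order $B$ then $A$ matches the fact that Bob's map acts after Alice's. The norm in $C_D\otimes_{min}S_\infty(E)$ equals $\sup_{\ket\xi}(\sum_t\|A_t\ket\xi\|^2)^{1/2}$, as in the one-player proof.

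\textbf{Step 3: matching the two suprema.} Use the identification \eqref{eq: useful identification, T to V}: a complete contraction $S_1(I,O)\to S_\infty(E)$ corresponds to a contraction $IE\to OE$. The form is rectangular here, since we need to allow $E$ to change between input and output, i.e.\ $S_\infty(E,E')$. This amounts to using $S_1(I,O)\to S_\infty(E_1,E_2)$ and inserting the Haagerup factorization $S_\infty(E_2,E_3)\cdot S_\infty(E_1,E_2)$. Then $\sigma_B(\cdot)\sigma_A(\cdot)$ evaluated on $\bra{\oline\g}\otimes\ket{\oline\psi}$ becomes $(\bra{\oline\g_B}\otimes\id)V_B(\id\otimes\ldots)(\bra{\oline\g_A}\otimes\id)V_A\ket{\oline\psi_A\,\xi}$. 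Up to complex conjugation (handled as in Theorem~\ref{thm: value of game, one-player, half tensor}) and moving Alice's output register $O_A$ past $V_B$, this is exactly the expression in Step~1. Squaring gives the equality.

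\textbf{Main obstacle.} The delicate step is Step~3. We must check that the Haagerup supremum over pairs $(\sigma_B,\sigma_A)$ with a shared middle space coincides with the one-way protocol. In particular, the register passed between Alice and Bob is precisely the intermediate Hilbert space $E_2$ in the factorization. The vector $\ket\xi\in E_1$ plays the role of Alice's ancilla. Bob's input $I_B$, on the other hand, must be routed through $E_1$ and $E_2$ untouched by $\sigma_A$. To handle this, we take $E_1=I_B\otimes F$ and similarly for $E_2$. We then check that this entails no loss: an arbitrary $\sigma_A$ can be dominated via the contraction $V_A\otimes\id_{I_B}$ together with the identity on the $I_B$ leg. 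This uses the tensor structure of $\psi_j\in I_AI_B$ and the row/column bookkeeping $R\otimes_h C$, which must be carried out carefully in the bra--ket notation of \eqref{Haagerup T}.
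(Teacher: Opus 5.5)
Your Steps 1--3 are correct and amount to the paper's proof. Both sides are reduced to one supremum over pairs of contractions. On the game side this uses Lemma~\ref{lem: replace cc with cptp, two-player one-way} and Stinespring. On the tensor side it uses the Christensen--Effros--Sinclair--Paulsen representation of $\otimes_h$, together with the correspondence between complete contractions $S_1(I,O)\to B(E_1,E_2)$ and contractions $IE_1\to OE_2$.

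The only difference is that you allow rectangular targets, so the middle space $E_2$ of the Haagerup factorization is literally the message register. The two suprema then match term by term: take $E_1=\mathbb{C}$, $E_2=H$, $E_3=H'$ in one direction, and absorb $\ket{\xi}\in E_1$ into $V_A$ in the other. This is slightly cleaner than the paper, which passes to square contractions on a common space with only the phrase ``it can be seen that''.

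The ``main obstacle'' in your last paragraph, however, does not exist, and you should drop the device $E_1=I_B\otimes F$. As phrased, ``an arbitrary $\sigma_A$ can be dominated'' has no definite meaning. Bob's input is never routed through $E_1$ or $E_2$; it enters through the argument of $\sigma_B$. If $\sigma_A,\sigma_B$ correspond to $V_A:I_AE_1\to O_AE_2$ and $V_B:I_BE_2\to O_BE_3$, then for basis vectors
\[
\sigma_B(|b\rangle\langle y|)\,\sigma_A(|a\rangle\langle x|)=(\langle ab|\otimes \id_{E_3})(\id_{O_A}\otimes V_B)(V_A\otimes \id_{I_B})(|xy\rangle\otimes \id_{E_1}).
\]
By bilinearity of $\sigma_B\odot\sigma_A$, this extends to entangled $\ket{\psi_j}\in I_AI_B$ and $\ket{\gamma_t^j}\in O_AO_B$. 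The complex conjugations are handled as in the one-player case. This single identity is the entire content of Step~3.
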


\begin{proof}
    By definition we have \[
\om_{A \to B}(\mathcal  G) = \sup \tr\left(P \cdot ( \id_{S_1(RO_A)} \otimes \vp_B)\circ (\id_{S_1(RI_B)}\otimes \vp_A)(\psi)\right),
    \] where the supremum is taken over all Hilbert spaces $K$ and all completely positive trace-preserving operators \[
\vp_A: S_1(I_A) \to S_1(O_AK), \vp_B: S_1(KI_B) \to S_1(O_B).
    \] Applying Stinespring's theorem to both channels there exist Hilbert spaces $E_A,E_B$, unit vectors $\ket{\xi_A} \in E_A, \ket{\xi_B} \in E_B$, and contractions $U_A: O_AKE_A \to I_AE_A, U_B: O_BE_B \to KI_BE_B$, such that \[
\vp_A(x) = \tr_{E_A} U_A^\dag(x \otimes \op{\xi_A}{\xi_A})U_A, \vp_B(y) = \tr_{E_B}U_B^\dag(y \otimes \op{\xi_B}{\xi_B})U_B,
    \] for every $x \in S_1(I_A)$ and every $y \in S_1(KI_B)$. Moreover, every map of this form defines a completely positive complete contraction. Let us denote $\tilde{U}_A:= U_A \otimes \id_{RI_BE_B}$ and $\tilde{U}_B:= U_B \otimes \id_{RO_AE_A}$. By Lemma~\ref{lem: replace cc with cptp, two-player one-way} we may express the one-way value as \begin{align*}
        \om_{A \to B}(\mathcal  G) &= \sup \tr \left[P \cdot (( \id_{S_1(RO_A)} \otimes \vp_B)\circ (\id_{S_1(RI_B)}\otimes \vp_A) (\psi) \right] 
        \\& = \sup \tr_{S_1(RO_AO_BE_AE_B)} \left[(P \otimes \id_{S_1(E_AE_B)}) \cdot (\tilde{U}_B^\dag \tilde{U}_A^\dag(\psi \otimes \op{\xi_A\xi_B}{\xi_A\xi_B})\tilde{U}_A\tilde{U}_B) \right],
    \end{align*} where the supremum is taken over all Hilbert spaces $K$, all contractions \[\norm{U_A}_{S_\infty(O_AKE_A,I_AE_A)}, \norm{U_B}_{S_\infty(O_BE_B,KI_BE_B)} \leq 1,\] and $\norm{\ket{\xi_A}}_{E_A}, \norm{\ket{\xi_B}}_{E_B} = 1$. Recalling \[
P = \sum_{i,j,t} \op{i}{j} \otimes \op{\g_t^i}{\g_t^j}, \psi = \sum_{i,j} \op{i}{j} \otimes \op{\psi_i}{\psi_j},
    \] and we denote $\wh{U}_A = U_A \otimes \id_{I_BE_B}, \wh{U}_B:= U_B \otimes \id_{O_AE_A}$. Rewriting with $\wh{U}_A, \wh{U}_B$ and tracing out the referee space we obtain \begin{align*}
    \om_{A \to B}(\mathcal  G) &= \sup \sum_{i,j,t} \tr_{S_1(O_AO_BE_AE_B)} \Big( ( \op{\g_t^i}{\g_t^j} \otimes \id_{S_1(E_AE_B)})(\wh{U}_B^\dag \wh{U}_A^\dag(\op{\psi_j\xi_A\xi_B}{\psi_i\xi_A\xi_B})\wh{U}_A\wh{U}_B) \Big)
\\ &= \sup \sum_t \Big\|\sum_j (\bra{\g_t^j} \otimes \id_{E_AE_B})(\wh{U}_B^\dag \wh{U}_A^\dag\ket{\psi_j\xi_A\xi_B}\Big\|_{E_AE_B}^2
\\&= \sup \sum_t \Big\|\sum_j (\bra{\g_t^j} \otimes \id_{E_AE_B})(U_B^\dag \otimes \id_{O_AE_A})(U_A^\dag \otimes \id_{I_BE_B}) \ket{\psi_j\xi_A\xi_B}\Big\|_{E_AE_B}^2,
        \end{align*} where the supremum is taken over all Hilbert spaces $K$, all contractions \[\norm{U_A}_{S_\infty(O_AKE_A, I_AE_A)}, \norm{U_B}_{S_\infty(O_BE_B, KI_BE_B)} \leq 1,\] and unit vectors $\ket{\xi_A} \in E_A, \ket{\xi_B} \in E_B$. It can be seen that we may express the above norm as \[
\om_{A \to B}(\mathcal  G) = \sup \sum_t \Big\|\sum_j (\bra{\g_t^j} \otimes \id_{K})(V_B \otimes \id_{O_A})(V_A \otimes \id_{I_B})\ket{\psi_j\eta}\Big\|_{K}^2,
        \]  where the supremum is taken over all Hilbert spaces $K$, all contractions \[\norm{V_A}_{S_\infty(I_AK, O_AK)}, \norm{V_B}_{S_\infty(I_BK, O_BK)} \leq 1,\] and vectors $\norm{\ket{\eta}}_{K} \leq 1$. \newline Considering the norm of the half-tensor, we write \begin{align*}
            \norm{\eta_{\mathcal  G}}_{C_D(S_1(I_B,O_B) \otimes_h S_1(I_A,O_A))} &= \norm{\eta_{\mathcal  G}}_{C_D \otimes_{min}(S_1(I_B,O_B) \otimes_h S_1(I_A,O_A))} 
            \\&= \sup \norm{(\id_{C_D} \otimes T_B \odot T_A)(\eta_{\mathcal  G})}_{C_D \otimes_{min} S_\infty(E)},
        \end{align*} where the supremum is taken over all complete contractions $T_A: S_1(I_A,O_A) \to S_\infty(E), T_B: S_1(I_B,O_B) \to S_\infty(E)$. Note we have used $T_B \odot T_A$ to denote the linear mapping \[
T_B \odot T_A: S_1(I_B,O_B) \otimes S_1(I_A,O_A) \to S_\infty(E), \, x \otimes y \mapsto T_B(x) \cdot T_A(y).
        \] As we have done before, for each $t$, we define the operator $A_t:= \sum_{j} T_B \odot T_A(\bra{\oline{\g}_t^j} \otimes \ket{\oline{\psi}_j})$, and thus we may express the above norm as \[
\norm{\eta_{\mathcal  G}}_{C_D \otimes_{min} (S_1(I_B,O_B) \otimes_h S_1(I_A,O_A))} ^2= \sup \sum_t \norm{A_t \ket{\psi}}_E^2,
        \] where the supremum is taken over Hilbert spaces $E$, all complete contractions $T_A: S_1(I_A,O_A) \to S_\infty(E), T_B: S_1(I_B,O_B) \to S_\infty(E)$ and all unit vectors $\ket{\varphi} \in E$. Employing our identification Equation~\eqref{eq: useful identification, T to V} we express the above as \[
\sup \sum_t \Big\|\sum_j (\bra{\g_t^j} \otimes \id_{E})(V_B \otimes \id_{O_A})(V_A \otimes \id_{I_B})\ket{\psi_j \vp}\Big\|_E^2,
        \] where the supremum runs over all Hilbert spaces $E$, and all contractions $V_A: I_AE \to O_AE$ and $V_B: I_BE \to O_BE$ and all unit vectors $\ket{\varphi} \in E$. This finishes the proof. \qedhere
\end{proof}

We conclude this section with the corresponding full-tensor calculation: 

\begin{theorem}\label{thm: one-way value, full tensor}
    Given a two-player quantum game $\mathcal  G(\ket{\psi}, P)$, then the one-way value may be expressed as \[
\om_{A \to B}(\mathcal  G)= \norm{\xi_\mathcal  G}_{S_1(I_A; S_\infty(O_A; S_1(I_B; S_\infty(O_B))))}.
    \]
\end{theorem}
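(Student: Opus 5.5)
The plan is to argue by duality, mirroring the proof of Theorem~\ref{thm: value of game, one-player, full tensor}, with Lemma~\ref{lem: replace cc with cptp, two-player one-way} playing the role of Lemma~\ref{lem: replace cc with cptp}. By the duality (\ref{Eq. Dual_p,q}), iterated through the vector-valued Pisier spaces, we have
\[
S_1(I_A; S_\infty(O_A; S_1(I_B; S_\infty(O_B))))^*=S_\infty(I_A; S_1(O_A; S_\infty(I_B; S_1(O_B)))).
\]
Hence $\|\xi_{\mathcal G}\|$ equals the supremum of $\langle \xi_{\mathcal G}, C_\varphi\rangle$ over tensors $C_\varphi$ in the unit ball of the dual space. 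Each such tensor is the Choi matrix of a linear map $\varphi: S_1(I_AI_B)\to S_1(O_AO_B)$. Lemma~\ref{lem: duality of game tensors and strategies}, applied verbatim with $I=I_AI_B$ and $O=O_AO_B$, gives $\langle \xi_{\mathcal G}, C_\varphi\rangle=\tr(P\cdot(\id_{S_1(R)}\otimes\varphi)(\psi))$. The theorem therefore reduces to identifying the unit ball of $S_\infty(I_A; S_1(O_A; S_\infty(I_B; S_1(O_B))))$ with Choi matrices of one-way maps built from complete contractions.

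The first step is to use (\ref{identification cb-min}) in its vector-valued form, $S_\infty(H;X)=S_\infty(H)\otimes_{min}X$, together with $CB(S_1(H),S_1(K;Y))\simeq S_\infty(H; S_1(K;Y))$. Peeling layers from the outside gives two facts. First, $C\in S_\infty(I_B;S_1(O_B))$ corresponds to a complete contraction $S_1(I_B)\to S_1(O_B)$. Second, a norm-one element of $S_\infty(I_A;S_1(O_A;Z))$ with $Z=S_\infty(I_B;S_1(O_B))$ corresponds to a complete contraction $S_1(I_A)\to S_1(O_A;Z)$.

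The key identification, and the point I expect to be the main obstacle, is that the unit ball of $S_1(O_A;Z)$ consists exactly of elements $(a\otimes 1)z(b\otimes 1)$ with $a,b$ in the unit ball of $S_2(O_A)$ and $\|z\|_{S_\infty(O_A;Z)}\le1$, by (\ref{Ec:S_1(X)}). Through this factorization, a complete contraction $S_1(I_A)\to S_1(O_A;Z)$ should split as $\varphi_A:S_1(I_A)\to S_1(O_AK)$ followed by $\id_{O_A}\otimes\varphi_B$, with $\varphi_B:S_1(KI_B)\to S_1(O_B)$ a complete contraction. Here $K$ is an auxiliary copy of $O_A$, and the $a,b$ are absorbed into $\varphi_A$. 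Concretely, $S_\infty(O_A;S_\infty(I_B;S_1(O_B)))=S_\infty(O_AI_B;S_1(O_B))$ is the Choi space of complete contractions $S_1(O_AI_B)\to S_1(O_B)$, which is why the register $K\cong O_A$ appears.

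I would prove both inequalities as follows.
\begin{itemize}
\item[(i)] For the inequality $\le\|\xi_{\mathcal G}\|$: any composition $(\id\otimes\varphi_B)\circ(\varphi_A\otimes\id)$ of complete contractions has Choi matrix of dual norm at most $1$, by the metric mapping property and the above identifications. Lemma~\ref{lem: replace cc with cptp, two-player one-way} then gives $\omega_{A\to B}(\mathcal G)\le\|\xi_{\mathcal G}\|$.
\item[(ii)] For the reverse inequality: a norming functional in the unit ball factors, via Pisier's factorization (\ref{Ec:S_1(X)}) and (\ref{Ec:S_infty(X)}), into such a composition with $K=O_A$, possibly after a Stinespring-type dilation as in Remark~\ref{Rem: Stines_general}. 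Its pairing with $\xi_{\mathcal G}$ is then bounded by $\tilde\omega(\mathcal G)=\omega_{A\to B}(\mathcal G)$.
\end{itemize}

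Making step (ii) rigorous requires checking the factorization at the complete (matrix) level, not just at the scalar level. Should this become delicate, the fallback is to combine Theorem~\ref{thm: one-way value A to B, half-tensor} with $\xi_{\mathcal G}=\eta_{\mathcal G}^*\eta_{\mathcal G}$. One would then verify directly that the norm of $\eta^*\eta$ in the iterated Pisier space equals $\|\eta\|^2$ in $C_D(S_1(I_B,O_B)\otimes_h S_1(I_A,O_A))$. This amounts to the standard identity $S_1(H;X)\ni a^*a$ with $\|a^*a\|=\|a\|^2_{C\otimes_h\cdots}$, applied iteratively using (\ref{Haagerup K}) and (\ref{Haagerup T}).
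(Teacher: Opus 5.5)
With your fallback the proof is complete. Your direction (i) is correct and differs from the paper, and your fallback for (ii) is exactly the paper's argument. The primary argument you give for (ii), however, has a genuine gap.

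\textbf{The gap in (ii).} Formula \eqref{Ec:S_1(X)} factors each \emph{single element} $T(\rho)\in S_1(O_A;Z)$, with $Z=S_\infty(I_B;S_1(O_B))$, as $(a\otimes 1)z(b\otimes 1)$. The data $a,b,z$ depend on $\rho$, so this gives neither a linear $\varphi_A$ nor one $\varphi_B$ serving all $\rho$. Formula \eqref{Ec:S_infty(X)} is only a supremum formula and does not help, and nothing forces $K\cong O_A$.

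The factorization you want is true, but it comes from factoring the whole tensor rather than its values. By \eqref{Haagerup K} and \eqref{Haagerup T}, the dual of $\mathcal X=S_1(I_A;S_\infty(O_A;S_1(I_B;S_\infty(O_B))))$ is
$$\mathcal X^*=C_{I_A}\otimes_h R_{O_A}\otimes_h Z\otimes_h C_{O_A}\otimes_h R_{I_A}.$$
Split an element of norm $<1$ at the middle factor. This gives a row in $M_{1,N}(C_{I_A}\otimes_h R_{O_A})$, a matrix in $M_N(Z)=S_\infty(\ell_2^N I_B;S_1(O_B))$ and a column in $M_{N,1}(C_{O_A}\otimes_h R_{I_A})$, all of norm at most one. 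Concretely, you get two contractions between $I_A$ and $O_A\otimes\ell_2^N$ defining $\varphi_A$, and a complete contraction $\varphi_B$ on $S_1(\ell_2^N I_B)$. So $K=\ell_2^N$, not $O_A$.

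Your fallback also closes the gap, and it is precisely how the paper proves this inequality. Write
$$\mathcal X=(R_{I_A}\otimes_h C_{O_A}\otimes_h R_{I_B}\otimes_h C_{O_B})\otimes_h(R_{O_B}\otimes_h C_{I_B}\otimes_h R_{O_A}\otimes_h C_{I_A})$$
and $\xi_{\mathcal G}=\eta_{\mathcal G}^*\eta_{\mathcal G}$. Then \eqref{Ec. Haagerup_norm} with $n=1$ gives $\|\xi_{\mathcal G}\|_{\mathcal X}\le\|\eta_{\mathcal G}\|^2_{C_D(S_1(I_B,O_B)\otimes_h S_1(I_A,O_A))}$. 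By Theorem~\ref{thm: one-way value A to B, half-tensor} the right-hand side equals $\om_{A\to B}(\mathcal G)$. Only this inequality is needed, obtained in one step; no iterated equality is required.

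\textbf{Direction (i) versus the paper.} Your argument is correct. Via \eqref{identification cb-min}, $C_\Phi$ encodes a map $S_1(I_A)\to S_1(O_A;Z)$, and this map equals $(\id_{S_1(O_A)}\otimes\tilde\varphi_B)\circ\varphi_A$ with $\tilde\varphi_B(k)=\varphi_B(k\otimes\cdot\,)$. The map $\tilde\varphi_B$ is a complete contraction $S_1(K)\to CB(S_1(I_B),S_1(O_B))=Z$, because $S_1(K)\hat\otimes S_1(I_B)=S_1(KI_B)$. Since $S_1(O_A;S_1(K))=S_1(O_AK)$ and $S_1(O_A;\cdot\,)=S_1(O_A)\hat\otimes\cdot$ is functorial, you get $\|C_\Phi\|_{\mathcal X^*}\le 1$.

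The paper argues differently. It takes a one-way channel and uses Proposition~\ref{prop: equivalence of QNS B not to A} to get $\tr_{O_B}C_\Phi=\wh P\otimes 1$. It then computes $\|C_\Phi\|_{\mathcal X^*}$ with the formulas for positive elements in Remark~\ref{Rem: norm_p_positive}.

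Each route buys something different. Yours is more functorial and needs neither positivity nor trace preservation, so it covers all compositions of complete contractions, including $\mathcal{SOW}_{A\to B}$, at once. The paper's starts from the marginal (in)equality. That is why it transfers to $\mathcal{SNOS}_{B\not\to A}$ in Remark~\ref{Rem: SNSQ and unit ball} without invoking the factorization of Proposition~\ref{prop: equiv_SNSQ_SOW}.
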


\begin{remark}\label{Rem: norm_p_positive}

In the next proof, we will need to deal with the norm of positive elements in the space
\[
S_\infty(I_A; S_1(O_A; S_\infty(I_B; S_1(O_B)))).
\]

Note \cite[Eq. (3.23)]{Dev06} that for every positive element \(y\in S_1(J; S_\infty(I_B; S_1(O_B)))\), Eq.~(\ref{Ec:S_1(X)})  has the form 
\begin{align}\label{Eq: (1,infty,1)}
\|y\|_{S_1(O_A; S_\infty(I_B; S_1(O_B)))}
    &=
    \inf \Bigl\{
        \|B\|^2_{S_2(O_A)}\, \|W\|_{S_\infty(O_AI_B; S_1(O_B))}
    \Bigr\},
\end{align}where the infimum runs over all positive definite matrices \(B \in S_\infty(O_A)\) and all positive semidefinite matrices $W$ such that
\(y = (B \otimes 1)\, W\, (B \otimes 1)\). In fact, this expression is equivalent to  
\begin{align*}
\|y\|_{S_1(O_A; S_\infty(I_B; S_1(O_B)))}
    &=
    \inf \Bigl\{
        \|(B^{-1/2}\otimes 1)y(B^{-1/2}\otimes 1)\|_{S_\infty(O_AI_B; S_1(O_B))}
    \Bigr\},
\end{align*}where the infimum is over all positive definite matrices \(B \in S_\infty(O_A)\) with $\tr(B)=1$.

Equation (\ref{Eq: (1,infty,1)}) allows us to prove a Cauchy-Schwarz inequality for the space $S_1(O_A; S_\infty(I_B; S_1(O_B)))$ using the same argument as in \cite[Lemma 9]{Dev06}. That is, we have 
\begin{align}\label{CS-inequality}
\|x^*y\|_{S_1(O_A; S_\infty(I_B; S_1(O_B)))}\leq \Big(\|x^*x\|_{S_1(O_A; S_\infty(I_B; S_1(O_B)))}\|y^*y\|_{S_1(O_A; S_\infty(I_B; S_1(O_B)))}\Big)^\frac{1}{2}.
\end{align}

Finally, using Equation~(\ref{CS-inequality}), one easily obtains that for every positive element $x$,
\[
\|x\|_{S_\infty(I_A; S_1(O_A; S_\infty(I_B; S_1(O_B)))}
=
\sup \Bigl\{
\|(A \otimes 1)\, x\, (A \otimes 1)\|_{S_1(I_A O_A; S_\infty(I_B; S_1(O_B)))}
\Bigr\},
\]
where the supremum is taken over all positive elements \(A\) such that \(\|A\|_{S_2(I_A)} \le 1\).
\end{remark}

\begin{proof}
    For notational ease we will denote the operator space $$\mathcal  X=S_1(I_A; S_\infty(O_A; S_1(I_B; S_\infty(O_B)))).$$ Using the identifications (\ref{Haagerup K}) and (\ref{Haagerup T}), together with the associative property of the Haagerup norm, we obtain the identification \[
S_1(I_A; S_\infty(O_A; S_1(I_B; S_\infty(O_B)))) = (R_{I_A} \otimes_h C_{O_A} \otimes_h R_{I_B} \otimes_h C_{O_B}) \otimes_h (R_{O_B} \otimes_h C_{I_B} \otimes_h R_{O_A} \otimes_h C_{I_A}),
\] completely isometrically. Writing $\xi_{\mathcal  G}$ as the product of half-tensors, $\xi_{\mathcal  G} = \eta_{\mathcal  G}^*\eta_{\mathcal  G}$ and applying Ec. (\ref{Ec. Haagerup_norm}) for the case $n=1$ and Theorem~\ref{thm: one-way value A to B, half-tensor} one can obtain \[
\|\xi_{\mathcal  G}\|_{\mathcal  X} \leq \|\eta_{\mathcal  G}\|_{C_D \otimes_h (R_{O_B} \otimes_h C_{I_B} \otimes_h R_{O_A} \otimes C_{I_A})}^2 = \|\eta_{\mathcal  G}\|_{C_D \otimes_h (S_1(I_B,O_B) \otimes_h S_1(I_A,O_A))}^2 = \om_{A \to B}(\mathcal  G).
\] 

Our task will be now in proving $\om_{A \to B}(\mathcal  G) \leq \norm{\xi_{\mathcal  G}}_{\mathcal  X}$. To this end, consider a one-way strategy $\Phi \in \mathcal {OW}_{A \to B}(I_AI_B, O_AO_B)$, and let $C_{\Phi} \in S_1(O_AO_B) \otimes S_\infty(I_AI_B)$ denote the Choi matrix of $\Phi$. Via Proposition~\ref{prop: equivalence of QNS B not to A} there exists a positive semidefinite  matrix $\wh{P} \in S_1(O_A) \otimes S_\infty(I_A)$ such that $\tr_{S_1(O_A)} \wh{P} = 1_{S_\infty(I_A)}$ and $\tr_{S_1(O_B)} C_{\Phi} = \hat{P} \otimes 1_{S_\infty(I_B)}$. Since $\mathcal  X^*=S_\infty(I_A; S_1(O_A; S_\infty(I_B; S_1(O_B))))$,  due to the positivity of $C_{\Phi}$ we have (see Remark \ref{Rem: norm_p_positive}) \[
\norm{C_{\Phi}}_{\mathcal  X^*} = \sup \{\norm{(A \otimes \id)C_{\Phi}(A \otimes \id)}_{S_1(I_AO_A; S_\infty(I_B; S_1(O_B)))}\},
\] where the supremum is taken over all positive elements $A$ such that  $\norm{A}_{S_2(I_A)} \leq 1$.
For any such $A \in S_2(I_A)$, set $Y_A:= (A \otimes \id)C_{\Phi}(A \otimes \id)$. According to Remark \ref{Rem: norm_p_positive}, the positivity of $Y_A$ allows us to write 
\[
\inf \Big\{ \norm{ (C^{-\frac{1}{2}} \otimes \id)Y_A (C^{-\frac{1}{2}} \otimes \id)}_{S_\infty(I_AO_AI_B; S_1(O_B))}\Big\},
\] where the infimum runs over all positive elements $C \in S_\infty(I_AO_A)$ such that $\tr(C) = 1$. Using that $\norm{x}_{S_\infty^n(S_1^m)} = \norm{(\id \otimes \tr)(x)}_{S_\infty^n}$ for every positive element $x \in M_n \otimes M_m$  (see \cite[Eq. (3.20)]{Dev06}),  we calculate 
\begin{align*}
    &\norm{(C^{-\frac{1}{2}} \otimes \id)Y_A (C^{-\frac{1}{2}} \otimes \id)}_{S_\infty(I_AO_AI_B; S_1(O_B))} = \norm{\tr_{S_1(O_B)}(C^{-\frac{1}{2}} \otimes \id)Y_A (C^{-\frac{1}{2}} \otimes \id)}_{S_\infty(I_AO_AI_B)} \\&= \norm{(C^{-\frac{1}{2}} \otimes \id)(A \otimes \id)\tr_{S_1(O_B)}C_{\Phi}(A \otimes \id)(C^{-\frac{1}{2}} \otimes \id)}_{S_\infty(I_AO_AI_B)}.
\end{align*} Applying the condition that $\tr_{S_1(O_B)}C_{\Phi} = \wh{P} \otimes 1_{S_\infty(I_B)},$ the above norm becomes \begin{align*}
    &\norm{(C^{-\frac{1}{2}} \otimes \id)(A \otimes \id)(\wh{P} \otimes 1_{S_\infty(I_B)})(A \otimes \id)(C^{-\frac{1}{2}} \otimes \id)}_{S_\infty(I_AO_AI_B)}\\ &= \norm{(C^{-\frac{1}{2}} \otimes \id)(A \otimes \id)\wh{P}(A \otimes \id)(C^{-\frac{1}{2}} \otimes \id)}_{S_\infty(I_AO_A)}.
\end{align*} We have \begin{align*}
    &\sup_A \inf_C \norm{(C^{-\frac{1}{2}} \otimes \id)(A \otimes \id)\wh{P}(A \otimes \id)(C^{-\frac{1}{2}} \otimes \id)}_{S_\infty(I_AO_A)} 
    \\&= \norm{\wh{P}}_{S_\infty(I_A; S_1(O_A))} = \norm{\tr_{S_1(O_A)}\wh{P}}_{S_\infty(I_A)} = \norm{1_{S_\infty(I_A)}}_{S_\infty(I_A)} = 1.
\end{align*} This proves that for every one-way strategy $\Phi \in \mathcal{OW}_{A \to B}(I_{AB}, O_{AB})$, it holds that $C_{\Phi } \in B_{X^*}$. Hence, the desired inequality follows \[
\om_{A \to B}(\mathcal  G) \leq \norm{\xi_{\mathcal  G}}_{\mathcal  X}.
\]
\qedhere 
\end{proof}

\begin{remark}\label{Rem: comments_to_the_proof}
Note that, in order to prove the inequality 
\[
\|\xi_{\mathcal G}\|_{\mathcal X} 
\leq 
\|\eta_{\mathcal G}\|_{C_D \otimes_h (R_{O_B} \otimes_h C_{I_B} \otimes_h R_{O_A} \otimes_h C_{I_A})}^2,
\]
in the previous proof, we only used that 
$\xi_{\mathcal G} = \eta_{\mathcal G}^* \eta_{\mathcal G}$, 
but we didn't need any particular (game) structure of the tensor.

Note also that our proof above for the inequality 
\[
\omega_{A \to B}(\mathcal G) \leq \|\xi_{\mathcal G}\|_{\mathcal X}
\]
does not use the particular structure of the game $\mathcal G$, 
since it only considers the elements in the dual space. 
In particular, the previous proof shows that for every tensor $R$ we have
\[
\sup_{\Phi \in \mathcal{OW}_{A \to B}(I_{AB}, O_{AB})} 
|\langle C_\Phi, R \rangle| \leq \|R\|_{\mathcal X}.
\]
\end{remark}

Theorem~\ref{thm: one-way value, full tensor} generalizes the corresponding result in \cite{amr2019optimal} in which it is proven for a classical game $\mathcal  G$, that \[
\om_{A \to B}(\mathcal  G) = \norm{\xi_{\mathcal  G}}_{\ell_1(I_A, \ell_\infty(O_A, \ell_1(I_B, \ell_\infty(O_B))))}.
\]

\section{sub-no-signalling values and the symmetrized Haagerup norm}

In Section~\ref{sec: ns strategies and symmetrized one-way} we proved the equivalence of no-signalling strategies with the symmetrized one-way communication channels. Using these results, in Subsection~\ref{sec: one-way values of games} we proved that the one-way value of quantum games is obtained via calculating the norm of full tensor in the space $S_1(I_A; S_\infty(O_A; S_1(I_B; S_\infty(O_B))))$, or equivalently, via calculating the norm of the half-tensor in the space $C_D(S_1(I_B,O_B)\otimes_h S_1(I_A,O_A))$. Recall that such one-way strategies $\Phi: S_1(I_AI_B) \to S_1(O_AO_B)$ are quantum channels which factorize as $\Phi = (\id_{S_1(O_A)} \otimes \Phi_B)\circ (\id_{S_1(I_B)} \otimes \Phi_A)$, where \[
\Phi_A: S_1(I_A) \to S_1(O_AK), \Phi_B: S_1(KI_B) \to S_1(O_B),
\] are quantum channels. We now wish to consider the analogues of no-signalling strategies and one-way strategies where we consider decompositions of completely positive (not necessarily trace-preserving) operators by completely positive complete contractions. To this end, we will introduce the so-called \emph{sub}-regime of games and strategies, which appeared in the context of classical games (see \cite{amr2019optimal,lancien2015parallel}). 

\begin{definition}\label{defn: sub-one-way operators, A to B}
    Given a completely positive complete contraction $\Phi: S_1(I_AI_B) \to S_1(O_AO_B)$, we say that $\Phi$ is a \emph{sub-one-way} strategy from Alice to Bob if there exists a Hilbert space $K$, and completely positive complete contractions \[
\Phi_A: S_1(I_A) \to S_1(O_AK), \Phi_B: S_1(KI_B) \to S_1(O_B),
    \] such that $\Phi = (\id_{S_1(O_A)} \otimes \Phi_B) \circ (\id_{S_1(I_B)} \otimes \Phi_A)$. Let $\mathcal {SOW}_{A \to B}(I_{AB}, O_{AB})$ denote the set of completely positive maps which admit such a factorization. Switching the roles of $A$ and $B$, we similarly define the \emph{sub-one-way} strategies from Bob to Alice, and we will denote such strategies as $\mathcal {SOW}_{B \to A}(I_{AB}, O_{AB})$. We say $\Phi$ is \emph{symmetrized sub-one-way} if \[
\Phi \in \mathcal {SOW}(I_{AB}, O_{AB}):= \mathcal {SOW}_{A \to B}(I_{AB}, O_{AB}) \cap \mathcal {SOW}_{B \to A}(I_{AB}, O_{AB}).
    \]
\end{definition}

\begin{definition}\label{defn: sub ns, B to A}
    Given a completely positive but not necessarily trace preserving operator $\Phi: S_1(I_AI_B) \to S_1(O_AO_B)$, let $C_{\Phi^\dag}$ denote the Choi matrix of the adjoint $\Phi^\dag: S_\infty(O_AO_B) \to S_\infty(I_AI_B)$. We say $\Phi$ is \emph{sub-no-signalling} from Bob to Alice and write $\Phi \in \mathcal {SNOS}_{B \not\to A}(I_{AB}, O_{AB})$ if there exists a unital completely positive map $\Phi_A^\dag: S_\infty(O_A) \to S_\infty(I_A)$, with Choi matrix $C_{\Phi_A^\dag}$ such that \[
\tr_{S_\infty(O_B)} C_{\Phi^\dag} \leq C_{\Phi_A^\dag} \otimes 1_{S_\infty(I_B)}.
    \] We define analogously the set $\mathcal {SNOS}_{A\not\to B}(I_{AB}, O_{AB})$. We say the operator $\Phi: S_1(I_AI_B) \to S_1(O_AO_B)$ is \emph{sub-no-signalling} if \[
\Phi \in \mathcal {SNOS}(I_{AB}, O_{AB}):= \mathcal {SNOS}_{A \not\to B}(I_{AB}, O_{AB}) \cap \mathcal {SNOS}_{B\not\to A}(I_{AB}, O_{AB}).
    \]
\end{definition}

\begin{remark}\label{Rem: SNSQ and unit ball}
One can easily check that the proof of Theorem \ref{thm: one-way value, full tensor} can be adapted to show that every element $\Phi\in \mathcal{SNOS}_{A\not\to B}$ belongs to the unit ball of $\mathcal X^*$. Indeed, by following exactly the same argument we arrive at the point $$\|(C^{-\frac{1}{2}}\otimes \text{Id})(A\otimes \text{Id})\tr_{S_1(O_B)}C_{\Phi^\dag}(A\otimes \text{Id})(C^{-\frac{1}{2}}\otimes \text{Id})\|_{S_\infty(I_AO_AI_B)}.$$Since $A$ and $C$ are positive elements and $\tr_{S_\infty(O_B)}C_{\Phi^\dag}\leq C_{\Phi_A^\dag}\otimes 1_{S_\infty(I_B)}$, we immediately obtain that the previous norm is upper bounded by 
$$\|(C^{-\frac{1}{2}}\otimes \text{Id})(A\otimes \text{Id})(C_{\Phi_A^\dag}\otimes 1_{S_\infty(I_B)})(A\otimes \text{Id})(C^{-\frac{1}{2}}\otimes \text{Id})\|_{S_\infty(I_AO_AI_B)}$$and the rest of the proof follows in the same way as in the proof of Theorem \ref{thm: one-way value, full tensor}.

According to Remark \ref{Rem: comments_to_the_proof} this implies that for every tensor $R$ we have
\[
\sup_{\Phi \in \mathcal{SOW}_{A \to B}(I_{AB}, O_{AB})} 
|\langle C_\Phi, R \rangle| \leq \|R\|_{\mathcal X}.
\]
\end{remark}

In a similar fashion to our results from Section~\ref{sec: ns strategies and symmetrized one-way}, we now prove the correspondence between sub-one-way strategies and sub-no-signalling strategies. 

\begin{prop}\label{prop: equiv_SNSQ_SOW}
    The following equalities hold: \begin{align*}
        \mathcal {SOW}_{A \to B}(I_{AB}, O_{AB}) &= \mathcal {SNOS}_{B \not\to A}(I_{AB}, O_{AB}), \\ \mathcal {SOW}_{B \to A}(I_{AB}, O_{AB}) &= \mathcal {SNOS}_{A \not\to B}(I_{AB}, O_{AB}). 
    \end{align*} In particular, \begin{align*}
        \mathcal {SOW}(I_{AB}, O_{AB}) = \mathcal {SNOS}(I_{AB},O_{AB}).
    \end{align*}
\end{prop}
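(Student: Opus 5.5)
By symmetry it suffices to prove $\mathcal{SOW}_{A\to B}=\mathcal{SNOS}_{B\not\to A}$; the second equality follows by exchanging the roles of Alice and Bob, and the last one by intersecting. The plan is to adapt the proof of Theorem~\ref{thm: QNS equals QOW}, replacing equalities by operator inequalities and isometries by contractions.

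\emph{Inclusion $\mathcal{SOW}_{A\to B}\subseteq\mathcal{SNOS}_{B\not\to A}$.} Take $\Phi=(\id\otimes\Phi_B)\circ(\Phi_A\otimes\id)$ with $\Phi_A,\Phi_B$ completely positive complete contractions. Then $\Phi_B^\dag$ is completely positive with $\Phi_B^\dag(1_{O_B})\le 1_{KI_B}$. Hence, for every positive $x\in S_\infty(O_A)$,
\[
\Phi^\dag(x\otimes 1_{O_B})=(\Phi_A^\dag\otimes\id)\big(x\otimes\Phi_B^\dag(1_{O_B})\big)\le \Phi_A^\dag(x\otimes 1_K)\otimes 1_{I_B}.
\]
The same inequality holds at the matrix level, since $\id_n\otimes\Phi_A^\dag$ is positive and $\Phi_B^\dag(1)\le 1$ tensorizes. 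Set $\Psi(x)=\Phi_A^\dag(x\otimes 1_K)$. This map is cp and subunital. We pad it to a ucp map $\Phi_A'^\dag=\Psi+\frac{\tr(\cdot)}{|O_A|}(1-\Psi(1))$, exactly as in Lemma~\ref{lem: replace cc with cptp}. The difference map $x\mapsto\Phi_A'^\dag(x)\otimes 1-\Phi^\dag(x\otimes 1)$ is then completely positive. Passing to Choi matrices, complete positivity of this difference is exactly $\tr_{O_B}C_{\Phi^\dag}\le C_{\Phi_A'^\dag}\otimes 1_{I_B}$.

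\emph{Inclusion $\mathcal{SNOS}_{B\not\to A}\subseteq\mathcal{SOW}_{A\to B}$.} The Choi inequality says that $\Delta(x):=\Phi_A^\dag(x)\otimes 1_{I_B}-\Phi^\dag(x\otimes 1_{O_B})$ is completely positive. Choose a cp map $\Gamma:S_1(I_AI_B)\to S_1(O_AO_B)$ with $\Gamma^\dag(x\otimes 1_{O_B})=\Delta(x)$, for instance $\Gamma=\Delta_*\otimes\tau$ with $\tau(\sigma)=\tr(\sigma)\,\omega$ for a fixed state $\omega$ on $O_B$, after arranging that $\Delta$ has the form $\Delta_0\otimes 1_{I_B}$. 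This is the delicate point, because $\Delta(x)$ need not be of the form $y\otimes 1_{I_B}$. Since $\Phi^\dag(x\otimes1)\le\Phi_A^\dag(x)\otimes1$, I would instead avoid completing and work directly with Stinespring. Dilate $\Phi$ by $V_{AB}:I_AI_B\to O_AO_BH_{AB}$, now only a contraction, and dilate $\Phi_A$ by an isometry $V_A:I_A\to O_AH_A$. The inequality for $x=\op{a}{a'}$, read as a positive-matrix inequality over the indices $(a,x,y)$, gives
\[
\Big[\sum_b\ip{\eta^{a,b}_{x,y}}{\eta^{a',b}_{x',y'}}\Big]\le\Big[\delta_{y,y'}\ip{\xi^a_x}{\xi^{a'}_{x'}}\Big]
\]
as Gram matrices. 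Consequently the map $\theta_B:\ket{y}\ket{\xi^a_x}\mapsto\sum_b\ket{\eta^{a,b}_{x,y}}\ket{b}$ is well defined on $I_BK$ and is a contraction, where $K=\Span\{\ket{\xi^a_x}\}$. Well-definedness uses the fact that a Gram inequality $G_\eta\le G_\xi$ forces linear relations among the $\xi$'s to be inherited by the $\eta$'s.

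\emph{Conclusion and main obstacle.} From the contraction $\theta_B$ we obtain $V_{AB}=(\id_{O_A}\otimes\theta_B)(V_A\otimes\id_{I_B})$. We then set $\tilde\Phi_A(\sigma)=V_A\sigma V_A^\dag$, which is cptp, and $\tilde\Phi_B(\rho)=\tr_{H_{AB}}\theta_B\rho\theta_B^\dag$, which is cp and completely contractive by Remark~\ref{Rem: Stines_general}. This exhibits $\Phi\in\mathcal{SOW}_{A\to B}$. The main obstacle is justifying the matrix-level Gram inequality and the well-definedness of $\theta_B$ from the Choi-matrix inequality. I expect to handle it by pairing the Choi inequality with the vectors $\sum\lambda^a_{x,y}\ket{a}\ket{x}\ket{y}$.
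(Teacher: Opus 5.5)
Your proposal is correct and is essentially the paper's proof. For $\mathcal{SNOS}_{B\not\to A}\subseteq\mathcal{SOW}_{A\to B}$ you use the paper's Stinespring argument, reading the Choi inequality as the Gram-matrix inequality that makes $\theta_B$ a well-defined contraction; for the forward inclusion you use $\Phi_B^\dag(1)\le 1$ directly and pad only the $A$-side (the paper pads both $\Phi_A^\dag$ and $\Phi_B^\dag$ to ucp maps and invokes Proposition~\ref{prop: equivalence of QNS B not to A}), which is a minor, slightly more direct variant, and the abandoned $\Gamma$-completion detour can simply be deleted.
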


In the following, given a bipartite quantum game $\mathcal G$, we will denote
\begin{align*}
&\omega_{NOS} (\mathcal G)=\sup_{\varphi\in \mathcal {NOS}(I_{AB},O_{AB})}\langle \mathcal G,C_\varphi\rangle,\\ &\omega_{SNOS} (\mathcal G)=\sup_{\varphi\in \mathcal {SNOS}(I_{AB},O_{AB})}\langle \mathcal G,C_\varphi\rangle.
\end{align*}

\begin{proof}
    We will prove the coincidence \[
\mathcal {SOW}_{A \to B}(I_{AB}, O_{AB}) = \mathcal {SNOS}_{B \not\to A}(I_{AB}, O_{AB}),
    \] and the case interchanging the roles of Alice and Bob follows similarly. To this end, suppose $\Phi \in \mathcal {SOW}_{A \to B}(I_{AB}, O_{AB})$. Thus, we have a Hilbert space $K$ and completely positive complete contractions \[
\Phi_A: S_1(I_A) \to S_1(O_AK), \Phi_B: S_1(KI_B) \to S_1(O_B),
    \] such that \[
\Phi = (\id_{S_1(O_A)} \otimes \Phi_B)\circ (\id_{S_1(I_B)} \otimes \Phi_A).
    \] It then follows that the corresponding adjoint maps \[
\Phi_A^\dag: S_\infty(O_AK) \to S_\infty(I_A), \Phi_B^\dag: S_\infty(O_B) \to S_\infty(KI_B),
    \] satisfy \[\Phi^\dag = (\Phi_A^\dag \otimes \id_{S_\infty(I_B)}) \circ (\id_{S_\infty(O_A)} \otimes \Phi_B^\dag).\] The fact that both $\Phi_A^\dag,\Phi_B^\dag$ are completely positive complete contractions implies the elements \[
X:= 1_{S_\infty(I_A)} - \Phi_A^\dag(1_{S_{\infty}(O_AK)}), Y:= 1_{S_\infty(KI_B)} - \Phi_B^\dag(1_{S_\infty(O_B)}),
    \] are necessarily positive. Define the operators $\Psi_A: S_\infty(O_AK) \to S_\infty(I_A), \Psi_B: S_\infty(O_B) \to S_\infty(KI_B)$, by \[
\Psi_A(\cdot):= \Phi_A^\dag(\cdot) + \frac{\tr(\cdot)}{\dim O_AK}X, \Psi_B(\cdot):= \Phi_B^\dag(\cdot) + \frac{\tr(\cdot)}{\dim O_B}Y.
    \] Note both $\Psi_A$ and $\Psi_B$ are unital completely positive, and satisfy \[\Phi_A^\dag \preceq_{c.p.} \Psi_A, \Phi_B^\dag \preceq_{c.p.} \Psi_B.\] Setting \[
    \Psi:= (\Psi_A \otimes \id_{S_\infty(I_B)}) \circ (\id_{S_\infty(O_A)} \otimes \Psi_B): S_\infty(O_AO_B) \to S_\infty(I_AI_B),
    \] we obtain a unital completely positive map $\Psi$ such that  $\Phi^\dag \preceq_{c.p.} \Psi$. Equivalently, one has $C_{\Phi^\dag} \preceq C_{\Psi}$, where $C_{\Phi^\dag}, C_{\Psi}$ are the corresponding Choi matrices. Via Proposition~\ref{prop: equivalence of QNS B not to A} it follows $\tr_{S_1(O_B)} C_{\Psi} = \wh{P} \otimes 1_{S_\infty(I_B)}$, for some positive semidefinite  $\wh{P} \in S_1(O_A) \otimes S_\infty(I_A)$ such that $\tr_{S_1(O_A)}\wh{P} = 1_{S_\infty(I_A)}$, whose corresponding operator $\psi_A: S_\infty(O_A) \to S_\infty(I_A)$ is therefore unital completely positive. It follows \[
\tr_{S_1(O_B)}C_{\Phi^\dag} \preceq \tr_{S_1(O_B)}C_{\Psi} = \wh{P} \otimes 1_{S_\infty(I_B)}.
    \] This proves $\Phi \in \mathcal {SNOS}_{B \not\to A}(I_{AB}, O_{AB})$ proving \[
\mathcal {SOW}_{A \to B}(I_{AB}, O_{AB}) \subset \mathcal {SNOS}_{B \not\to A}(I_{AB}, O_{AB}).
    \]
In proving that $\mathcal {SNOS}_{B \not\to A}(I_{AB}, O_{AB}) \subset \mathcal {SOW}_{A \to B}(I_{AB}, O_{AB})$, we will follow a similar proof method to that done in Theorem~\ref{thm: QNS equals QOW}. To this end, let $\Phi: S_1(I_AI_B) \to S_1(O_AO_B)$ be a completely positive complete contraction, and let $C_{\Phi^\dag}$ denote the Choi matrix of the adjoint $\Phi^\dag: S_\infty(O_AO_B) \to S_\infty(I_AI_B)$. Suppose  $\Phi_A: S_1(I_A) \to S_1(O_A)$ is a quantum channel inducing the unital completely positive map $\Phi_A^\dag: S_\infty(O_A) \to S_\infty(I_A)$, with Choi matrix $C_{\Phi_A^\dag} \in S_1(O_A) \otimes S_\infty(I_A)$, satisfying \[
\tr_{S_1(O_B)} C_{\Phi^\dag} \preceq C_{\Phi_A^\dag}\otimes 1_{S_\infty(I_B)}.
\] Using Stinespring's theorem, there exist Hilbert spaces $H_{AB}, H_A$, a contraction $V_{AB}: I_AI_B \to O_AO_BH_{AB}$, and an isometry $V_A: I_A \to O_AH_A$, such that \[
\Phi(\rho) = \tr_{AB} V_{AB}\rho V_{AB}^\dag, \Phi_A(\s) = \tr_A V_A \s V_A^\dag, \forall \rho \in S_1(I_AI_B), \forall \s \in S_1(I_A).
\] 

Let us write \[
V_{AB}(\ket{xy}) = \sum_{ab} \ket{ab}\ket{\eta_{x,y}^{a,b}}, \{\ket{\eta_{x,y}^{a,b}}\} \subset H_{AB},\,\,\text{and}\,\,V_A(\ket{x}) = \sum_a \ket{a} \ket{\xi_x^a},
\{\ket{\xi_x^a}\} \subset H_A,
\] for every $x,y,a,b$. 

By setting  $\ket{\psi}:= \sum_{x,y,a} \l_{x,y}^a \ket{xya} \in I_AI_BO_A$ and using that $\tr_{S_1(O_B)} C_{\Phi^\dag} \preceq C_{\Phi_A^\dag}\otimes 1_{S_\infty(I_B)}$, we can deduce that 
\begin{align*}
\sum_{x,x^\prime,y,y^\prime,a, a^\prime, b} \l_{x,y}^a\oline{\l_{x^\prime,y^\prime}^{a^\prime}} \ip{\eta_{x^\prime,y^\prime}^{a^\prime,b}}{\eta_{x,y}^{a,b}} &=\bra{\psi} \tr_{S_1(O_B)} C_{\Phi^\dag} \ket{\psi} \\&\leq \bra{\psi}C_{\Phi_A^\dag} \otimes 1_{S_\infty(I_B)} \ket{\psi} \\&= \sum_{x,x^\prime,y,a,a^\prime} \l_{x,y}^a\oline{\l_{x^\prime,y}^{a^\prime}} \ip{\xi_{x^\prime}^{a^\prime}}{\xi_x^a}.
\end{align*}
Define the Hilbert space $K:= \Span\{\ket{\xi_x^a}\}_{x,a} \subset H_{A}$, and consider a map $\theta_B:KI_B \to O_BH_{AB}$ defined, by linearity, on simple tensors as \[
\ket{\xi_x^ay} \mapsto \sum_b \ket{b\eta_{x,y}^{a,b}},
\] for every $x,y,b$. Via our deduction that \[
\sum_{x,x^\prime,y,y^\prime,a, a^\prime, b} \l_{x,y}^a\oline{\l_{x^\prime,y^\prime}^{a^\prime}} \ip{\eta_{x^\prime,y^\prime}^{a^\prime,b}}{\eta_{x,y}^{a,b}} \leq \sum_{x,x^\prime,y,a,a^\prime} \l_{x,y}^a\oline{\l_{x^\prime,y}^{a^\prime}} \ip{\xi_{x^\prime}^{a^\prime}}{\xi_x^a},
\] it necessarily follows for all $(\l_{x,y}^a) \subset \mathbb C$ that \begin{align}
    \norm{ \theta_B(\sum_{x,y,a} \l_{x,y}^a \ket{\xi_x^ay})}_{H_{AB}O_B}^2 &\leq  \norm{\sum_{x,y,a}\l_{x,y}^a \ket{\xi_x^a y}}_{KI_B}^2,
\end{align} implying that $\theta_B$ is a well-defined linear contraction. 
    We trivially obtain \[
V_{AB} = (\id_{O_A} \otimes \theta_B) \circ (V_A \otimes \id_{I_B}).
    \]
At this point we define the completely positive complete contractions \begin{align*}
    &\tilde{\Phi}_A: S_1(I_A) \to S_1(O_AK), \rho \mapsto V_A \rho V_A^\dag, \\
    &\tilde{\Phi}_B: S_1(KI_B) \to S_1(O_B), \s \mapsto \tr_{H_{AB}}\theta_B \s \theta_B^\dag. 
\end{align*} We thus obtain the desired decomposition \[
\Phi = (\id_{S_1(O_A)} \otimes \tilde{\Phi}_B)\circ (\tilde{\Phi}_A \otimes \id_{S_1(I_B)}),
\] proving $\Phi \in \mathcal {SOW}_{A \to B}(I_{AB}, O_{AB})$. This finishes the proof. \qedhere 
\end{proof}

Our next theorem will be to describe the sub-no-signalling value of a two-player quantum game.
\begin{definition}
    Given a two-player quantum game $\mathcal  G$, then we define the \emph{sub-no-signalling value} $\om_{SNOS}(\mathcal  G)$ of the game as \begin{align}
\omega_{SNOS}(G)=\sup\left\{\tr\big(P\cdot (\varphi \otimes \text{Id}_{S_1(R)})(\psi)\big) \right\},
\end{align} where the supremum is taken over all sub-no-signalling strategies $\varphi \in\mathcal{SNOS}(I_{AB}, O_{AB})$. 
\end{definition}

As in previous sections, we will prove the norm equalities with respect to both the full tensor and the half tensor of a two-player game.  It is interesting to note that a direct consequence of Lemma \ref{lem: replace cc with cptp, two-player one-way} is that $$\omega_{A\to B}(G)=\sup\left\{\tr\big(P\cdot (\varphi \otimes \text{Id}_{S_1(R)})(\psi)\big) \right\},$$where the supremum runs over all $\mathcal {SOW}_{A\rightarrow B}(I_{AB}, O_{AB})$. That is, the one-way value of a two-prover quantum game coincides with its sub-one-way value. However, this equality is not clear at all when comparing the no-signalling value and the sub-no-signalling one. That is, when considering the correlations in the intersection of the spaces \[\mathcal {SOW}_{A\rightarrow B}(I_{AB}, O_{AB})\cap \mathcal {SOW}_{B\rightarrow A}(I_{AB}, O_{AB}).\]

In order to characterize the sub-no-signalling value of a two-prover quantum game, let us start by proving the following key lemma.

\begin{lemma}\label{lem: decomposition of cb from intersection operator space}
    Let $\mathcal  X_1, \mathcal  X_2$ be two operator spaces embedded (via continuous injections) into a linear space $\mathcal  Y$. Let $T: \mathcal  X_1 \cap \mathcal  X_2 \to \ell_2$ be a linear map such that $\cbnorm{T: \mathcal  X_1 \cap \mathcal  X_2 \to C} \leq \l$. Then we may write \[
T = \l_1W_1T_1 + \l_2W_2T_2,
    \] where $T_i: \mathcal  X_i \to K_i$ are linear maps for some Hilbert spaces $K_i$ such that $\cbnorm{T_i: \mathcal  X_i \to (K_i)_c} \leq 1,i=1,2,W_i: K_i \to \ell_2$ are linear maps such that $\norm{(W_1,W_2): K_1 \oplus_2 K_2 \to \ell_2} \leq 1,$ and $\l_i \geq 0$ such that $\l_1^2 + \l_2^2 \leq \l^2.$
\end{lemma}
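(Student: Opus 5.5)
The plan is to pass from the intersection to the $\ell_\infty$-direct sum, extend $T$ there without increasing its cb-norm, and then use the characterization of completely bounded maps into a column Hilbert space by a state in order to split the extension into two pieces. Concretely, the map $x\mapsto (x,x)$ is a complete isometry from $\mathcal X_1\cap\mathcal X_2$ onto its image in $\mathcal X_1\oplus_\infty\mathcal X_2$. The intersection carries the operator space structure $M_n(\mathcal X_1\cap\mathcal X_2)\ni x\mapsto \max\{\|x\|_{M_n(\mathcal X_1)},\|x\|_{M_n(\mathcal X_2)}\}$, which is exactly the structure induced by this diagonal embedding. So $\mathcal X_1\cap \mathcal X_2$ is a subspace of $\mathcal X_1\oplus_\infty\mathcal X_2$.

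\textbf{Step 1 (extension).} I first extend $T$ to some $\widetilde T:\mathcal X_1\oplus_\infty\mathcal X_2\to C$ with $\|\widetilde T\|_{cb}\le\lambda$. Maps into a column space enjoy this Hahn--Banach type extension property. To see it (in the finite-dimensional setting relevant here, and then by a standard limiting argument), use the identification $CB(E,C_n)=C_n\otimes_{\min}E^*=C_n\otimes_h E^*$ from \eqref{Eq: indetity_min_h_1}. If $E\subset F$, then $F^*\to E^*$ is a complete quotient map. The Haagerup tensor product is projective, so $C_n\otimes_h F^*\to C_n\otimes_h E^*$ is again a complete quotient map, and $T$ lifts with the same norm. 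I then write
$$\widetilde T(x_1,x_2)=S_1(x_1)+S_2(x_2).$$

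\textbf{Step 2 (state characterization).} Realize $\mathcal X_i\subset B(H_i)$, so that $\mathcal X_1\oplus_\infty\mathcal X_2\subset B(H_1\oplus H_2)$ block-diagonally. I invoke the known characterization (Pisier/Effros--Ruan) of cb maps into a column space: a map $u:X\to C$ satisfies $\|u\|_{cb}\le\lambda$ if and only if there is a state $f$ on $B(H)$ such that
$$\|u(x)\|^2\le\lambda^2 f(x^*x)\quad\text{for all } x\in X.$$
Restricting $f$ to the diagonal blocks gives $f=t f_1+(1-t)f_2$, with $f_i$ a state on $B(H_i)$ and $t\in[0,1]$. Hence
$$\|S_1x_1+S_2x_2\|^2\le \lambda^2\big(t f_1(x_1^*x_1)+(1-t)f_2(x_2^*x_2)\big).$$

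\textbf{Step 3 (factorization).} Let $K_i$ be the completion of $\mathcal X_i$ modulo null vectors for the seminorm $x\mapsto f_i(x^*x)^{1/2}$, and let $T_i:\mathcal X_i\to K_i$ be the quotient map. By the converse direction of the same characterization, $\|T_i:\mathcal X_i\to (K_i)_c\|_{cb}\le1$. Set $\lambda_1=\lambda\sqrt t$ and $\lambda_2=\lambda\sqrt{1-t}$, so that $\lambda_1^2+\lambda_2^2=\lambda^2$. On $\lambda_1T_1(x_1)\oplus\lambda_2T_2(x_2)$ define
$$(W_1,W_2)\big(\lambda_1T_1x_1\oplus\lambda_2T_2x_2\big)=S_1x_1+S_2x_2 .$$
The inequality from Step 2 says precisely that this is well defined and contractive on the dense range. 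Extend it by continuity (and by zero on the orthocomplement) to a contraction $K_1\oplus_2K_2\to\ell_2$. Restricting to the diagonal $x_1=x_2=x$ gives $T=\lambda_1W_1T_1+\lambda_2W_2T_2$.

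The main obstacle is Step 2. I need the state characterization in exactly the column form, which is the one-sided (column) version of the Haagerup--Paulsen--Wittstock factorization. I also need to check that the splitting of the state over the two diagonal blocks interacts correctly with the $\oplus_\infty$ structure, in particular the degenerate cases $t\in\{0,1\}$, where one of the $K_i$ is trivial.
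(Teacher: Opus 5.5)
Your proposal is correct and follows essentially the same route as the paper. Both embed $\mathcal X_1\cap\mathcal X_2$ diagonally into a block-diagonal direct sum, extend $T$ using the injectivity of the column space, and use the state domination $\|\widetilde T(z)\|^2\le \varphi(z^*z)$ with $\varphi$ split over the two diagonal blocks as $t\varphi_1+(1-t)\varphi_2$, followed by GNS-type completions that produce the $T_i$ and the contraction $(W_1,W_2)$.

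The remaining differences are cosmetic. The paper extends all the way to $\mathcal B(H_1\oplus_2 H_2)$ and builds $(W_1,W_2)$ through $L_2(\varphi)$ and the inclusion $\iota_{\lambda_1,\lambda_2}$. You instead define $(W_1,W_2)$ directly on the dense range and extend it by zero, which also cleanly covers the degenerate cases $t\in\{0,1\}$, where the paper's $\iota_{\lambda_1,\lambda_2}$ would divide by zero.
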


\begin{proof}
    By homogeneity, we may assume the linear operator $T: \mathcal  X_1 \cap \mathcal  X_2 \to C$ is a complete contraction. Let $H_i$ be Hilbert spaces such that $j_i:\mathcal  X_i \hookrightarrow \mathcal B(H_i)$, completely isometrically. Then \[
j_1\otimes j_2:\mathcal  X_1 \cap \mathcal  X_2 \hookrightarrow \mathcal B(H_1) \oplus_\infty \mathcal B(H_2) \subset \mathcal B(H_1 \oplus_2 H_2),
    \]completely isometrically, where the first inclusion is the diagonal map $x \mapsto (j_1(x),j_2(x))$. Injectivity of the column operator space $C$ implies there exists an extension $\wt{T}: \mathcal B(H_1 \oplus_2 H_2) \to \ell_2$ satisfying \[
\cbnorm{\wt{T}: \mathcal B(H_1 \oplus_2 H_2) \to C} \leq 1.
    \] We then know (see for instance \cite[Theorem 14.1 and Remark 14.3]{pisier2012grothendieck}) there exists a state $\vp$ on $\mathcal B(H_1 \oplus_2 H_2)$ such that \begin{align}\label{eq: cc upperbounded by state}
\cbnorm{\wt{T}(z)}^2 \leq \vp(z^*z), 
    \end{align} for every $z \in \mathcal B(H_1 \oplus_2 H_2)$.
    
    Restricting to $\mathcal B(H_1) \oplus_\infty \mathcal B(H_2)$ we can find nonnegative $\l_i$ and states $\vp_i \in \mathcal B(H_i)^*$ such that $\l_1 + \l_2 =1$ and $\vp = \l_1 \vp_1 + \l_2\vp_2.$ At the same time we may define the closed subspace $A:= \{z \in \mathcal B(H_1 \oplus_2 H_2): \vp(z^*z) = 0\}$ and consider the completion of the quotient  \[
    L_2(\vp):= \oline{\mathcal B(H_1 \oplus_2 H_2)/A}^{\norm{\cdot}_{\vp}}, \] where $\norm{\cdot}_{\vp}$ is determined via the inner product \[\bra{[y_1]}\,\ket{[y_2]}:= \vp(y_1^*y_2).\] It follows by Equation~\eqref{eq: cc upperbounded by state} that the mapping \[
W: L_2(\vp) \to \ell_2, [x] \mapsto T(x),
    \] is well defined linear such that $\norm{W:L_2(\vp) \to \ell_2} \leq 1$. \newline For each $i =1,2$, define analogously $L_2(\vp_i):= \oline{\mathcal B(H_i)/A_i}^{\norm{\cdot}_{\vp_i}}$. Then we obtain an isometric inclusion \[
\iota_{\l_1,\l_2}: L_2(\vp_1) \oplus_2 L_2(\vp_2) \to L_2(\vp), ([y_1],[y_2]) \mapsto \Big[\Big(\frac{y_1}{\sqrt{\l_1}},\frac{y_2}{\sqrt{\l_2}}\Big)\Big]. 
    \] This follows by observing
    \begin{align*}
        \norm{\Big[\Big(\frac{y_1}{\sqrt{\l_1}},\frac{y_2}{\sqrt{\l_2}}\Big)\Big]}_{L_2(\vp)}^2 &= \vp\Big(\Big(\frac{y_1}{\sqrt{\l_1}},\frac{y_2}{\sqrt{\l_2}})^*(\frac{y_1}{\sqrt{\l_1}},\frac{y_2}{\sqrt{\l_2}}\Big)\Big)
        \\ &= \vp\Big( \frac{y_1^*y_1}{\l_1}, \frac{y_2^*y_2}{\l_2}\Big) \\&= \vp_1(y_1^*y_1) + \vp(y_2^*y_2) 
        \\&= \norm{[y_1]}_{L_2(\vp_1)}^2 + \norm{[y_2]}_{L_2(\vp_2)}^2
    \end{align*}

Denote $K_i:= L_2(\vp_i)$ and define the map $S_i: \mathcal B(H_i) \to K_i$ such that $S_i(y_i) = [y_i]_{\vp_i},i=1,2$. We observe \[
\norm{S_i(y_i)}^2 = \norm{[y_i]}_{L_2(\vp_i)}^2 = \vp_i(y_i^*y_i),
\] which implies that the mappings $S_i$ are well defined linear maps such that \[
\cbnorm{S_i: \mathcal B(H_i) \to (K_i)_c} \leq 1.
\] Let us now define $W_i: K_i \to \ell_2$ by $W_i:= W\circ \iota_i: K_i \to \ell_2$, where each $\iota_i: K_i \to L_2(\vp)$ denotes the corresponding inclusion $\iota_1: K_1 \to L_2(\vp), [y_1] \mapsto \frac{1}{\sqrt{\l_1}}[y_1,0], \iota_2: K_2 \to L_2(\vp), [y_2] \mapsto \frac{1}{\l_2}[0,y_2].$ The contractive inclusion \[
\iota_{\l_1,\l_2}: L_2(\vp_1) \oplus_2 L_2(\vp_2) \to L_2(\vp),
\] defined previously, together with the contractivity of the mapping $W: L_2(\vp) \to \ell_2$, imply the induced mapping satisfies \[
\norm{(W_1,W_2): K_1 \oplus_2 K_2 \to \ell_2} \leq 1. 
\]It is then readily checked that \[
T(x) = \wt{T}(j_1(x),j_2(x)) = \wt{T}(j_1(x),0) + \wt{T}(0,j_2(x)) = \sqrt{\l_1}W_1S_1(j_1(x)) + \sqrt{\l_2}W_2S_2(j_2(x)).
\] That is, $T_1:=S_1\circ j_1$ and $T_2:=S_2\circ j_2$ are the desired maps. This finishes the proof. \qedhere 
\end{proof}

\begin{theorem}\label{thm: sub-no-signalling, half tensor}
    Given a two-player quantum game $\mathcal  G$, we have \[
\om_{SNOS}(\mathcal  G) = \norm{\eta_{\mathcal  G}}_{C_D (S_1(I_A,O_A) \otimes_\mu S_1(I_B,O_B))}^2.
    \]
\end{theorem}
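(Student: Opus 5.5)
We prove the two inequalities separately, reducing both to the already established one-way result, Theorem~\ref{thm: one-way value A to B, half-tensor}, and to the quotient description (\ref{mu_quotient}) of $\otimes_\mu$. Write $X_A=S_1(I_A,O_A)$ and $X_B=S_1(I_B,O_B)$. Since $C_D(\cdot)=C_D\otimes_h(\cdot)$ and the Haagerup tensor product preserves complete quotients, (\ref{mu_quotient}) gives a complete metric surjection
\[
C_D\otimes_h\big((X_A\otimes_h X_B)\oplus_1(X_B\otimes_h X_A)\big)\to C_D(X_A\otimes_\mu X_B).
\]
Dualizing identifies the unit ball of $C_D(X_A\otimes_\mu X_B)^*=R_D\otimes_h(X_A\otimes_\mu X_B)^*$ with functionals that are simultaneously in the unit balls of $R_D\otimes_h(X_A\otimes_h X_B)^*$ and $R_D\otimes_h(X_B\otimes_h X_A)^*$. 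In other words, the dual picture is an intersection $\mathcal X_1\cap\mathcal X_2$, which is exactly the setting of Lemma~\ref{lem: decomposition of cb from intersection operator space}.

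\textbf{Upper bound} ($\omega_{SNOS}\le\|\eta_{\mathcal G}\|^2_\mu$). Take $\varphi\in\mathcal{SNOS}$. By Proposition~\ref{prop: equiv_SNSQ_SOW}, $\varphi$ lies in both $\mathcal{SOW}_{A\to B}$ and $\mathcal{SOW}_{B\to A}$, so it admits two factorizations through cpcc maps. Applying Stinespring to each factorization as in the proof of Theorem~\ref{thm: one-way value A to B, half-tensor}, the quantity $\tr(P\cdot(\id\otimes\varphi)(\psi))$ becomes
\[
\sum_t\|V(\eta_t)\|^2,
\]
where $V$ is one linear map defined on $X_A\otimes X_B$ (tensored with the vector $\ket{\psi_j\xi}$). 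The two factorizations express this same $V$ in two ways: once as $T_B\odot T_A$ with contractions on a common Hilbert space, and once as $T_A\odot T_B$. The map $z\mapsto\big(\id\otimes V\big)(z)$ from $C_D\otimes(X_A\otimes X_B)$ into $C_D\otimes_{\min}H_c$ is therefore contractive for both Haagerup orders. Hence on any decomposition $\eta_{\mathcal G}=w_1+w_2^T$ it is bounded by
\[
\|w_1\|_{C_D(X_B\otimes_h X_A)}+\|w_2\|_{C_D(X_A\otimes_h X_B)}.
\]
Taking the infimum over decompositions gives $\sqrt{\text{value}}\le\|\eta_{\mathcal G}\|_\mu$. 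The point to check is that the Stinespring data from the two factorizations produce literally the same vector map $\ket{\psi_j}\mapsto$ output, which holds because both reproduce $V_{AB}$ of $\varphi$.

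\textbf{Lower bound} ($\|\eta_{\mathcal G}\|^2_\mu\le\omega_{SNOS}$). By the quotient description, $\|\eta_{\mathcal G}\|_{C_D(X_A\otimes_\mu X_B)}$ equals the cb norm of the map
\[
\eta_{\mathcal G}^\flat:\ (X_A\otimes X_B)^*_{\text{int}}\to C_D,
\]
viewed as a map on the intersection $\mathcal X_1\cap\mathcal X_2$ with $\mathcal X_1=(X_A\otimes_h X_B)^*$ and $\mathcal X_2=(X_B\otimes_h X_A)^*$ (up to the usual transposition). Equivalently, it equals the norm attained by a single column-valued functional $T$ of cb norm $\lambda=\|\eta_{\mathcal G}\|_\mu$ on that intersection. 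Lemma~\ref{lem: decomposition of cb from intersection operator space} splits $T=\lambda_1W_1T_1+\lambda_2W_2T_2$ with $\lambda_1^2+\lambda_2^2\le\lambda^2$. Each $T_i$ is a complete contraction into a column Hilbert space on a Haagerup dual, so by Remark~\ref{Rem: SNSQ and unit ball} and the proof of Theorem~\ref{thm: one-way value, full tensor} it corresponds to a sub-one-way strategy $\Phi_i$, in the $A\to B$ direction for $i=1$ and the $B\to A$ direction for $i=2$.

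The main obstacle is the last step: assembling $\Phi_1,\Phi_2$ with the weights $\lambda_i^2/\lambda^2$ and the joint contraction $(W_1,W_2)$ into a single $\varphi\in\mathcal{SNOS}$ satisfying both sub-no-signalling inequalities of Definition~\ref{defn: sub ns, B to A}. The plan is to define $\varphi$ through its Stinespring contraction $W\circ(T_1\oplus T_2)$, so that
\[
\tr(P\cdot(\id\otimes\varphi)(\psi))=\|\eta_{\mathcal G}\|_\mu^2.
\]
One would then verify $\mathcal{SNOS}_{B\not\to A}$ by bounding $\tr_{O_B}C_{\varphi^\dag}$ using the $\oplus_2$-contractivity of $(W_1,W_2)$ together with the one-way structure of the first summand, and check the dual condition symmetrically. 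If this direct assembly fails, the fallback is to argue via the full-tensor dual. Prove that the polar of $\{C_\varphi:\varphi\in\mathcal{SNOS}\}$ on positive tensors is the infimal-convolution norm of Theorem~\ref{thm: Main_I}(a), using Hahn--Banach with positivity. Then compare that norm with $\|\eta_{\mathcal G}\|^2_\mu$ via the factorization $\xi_{\mathcal G}=\eta_{\mathcal G}^*\eta_{\mathcal G}$ and Remark~\ref{Rem: comments_to_the_proof}.
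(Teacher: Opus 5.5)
Your upper bound is a valid alternative to the paper's argument once two repairs are made, but your lower bound $\norm{\eta_{\mathcal G}}^2_{\mu}\le\om_{SNOS}(\mathcal G)$ has a genuine gap: the tool you chose points the wrong way.

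\textbf{The lower bound.} Lemma~\ref{lem: decomposition of cb from intersection operator space}, applied to $T_{\eta_{\mathcal G}}:\mathcal X_1\cap\mathcal X_2\to C_D$, decomposes the \emph{game tensor}, not a strategy. The maps $T_i:\mathcal X_i\to (K_i)_c$ are defined on the $S_\infty\otimes_h S_\infty$ side. They correspond to half tensors $\eta_i$ with Haagerup column norm at most one (in the two orders), not to channels on $S_1(I_AI_B)$. Likewise, $(W_1,W_2)\circ(\lambda_1T_1\oplus\lambda_2T_2)$ is just $T_{\eta_{\mathcal G}}$ again, not a Stinespring contraction $I_AI_B\to O_AO_BH$. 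Through $\eta_{\mathcal G}^*\eta_{\mathcal G}\le\lambda_1^2\eta_1^*\eta_1+\lambda_2^2\eta_2^*\eta_2$, this decomposition yields \emph{upper} bounds on $\om_{SNOS}$, which is exactly how the paper uses it. So your ``assembly'' step is a dead end rather than an obstacle.

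Your fallback amounts to proving Theorem~\ref{thm: Main_I}(a) independently; the paper derives it from this theorem. It would need a positivity/minimax statement that positive elements of both dual unit balls are dominated by Choi matrices of $\mathcal{SNOS}$ maps. It would also need a converse to Remark~\ref{Rem: comments_to_the_proof}. Neither is supplied.

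The missing idea is the commuting-range formula~(\ref{Definition_mu_norm}). The norm $\norm{\eta_{\mathcal G}}_{C_D(S_1(I_A,O_A)\otimes_\mu S_1(I_B,O_B))}$ is the supremum of $\norm{(\id_{C_D}\otimes T_A\odot T_B)(\eta_{\mathcal G})}_{C_D\otimes_{min}S_\infty(H)}$ over complete contractions $T_A,T_B$ into a common $S_\infty(H)$ with commuting ranges. Such a pair corresponds to contractions $V_A:I_AH\to O_AH$ and $V_B:I_BH\to O_BH$. Commuting ranges means exactly that $V_A$ and $V_B$, tensored with identities, can be composed in either order with the same result. Consider the map that appends $\op{\varphi}{\varphi}$ (with $\ket{\varphi}\in H$ a unit vector), applies both contractions, and traces out $H$. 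It factors as a sub-one-way strategy in each direction, with the state preparation in the first factor and the partial trace in the second. Hence it lies in $\mathcal{SNOS}$ by Proposition~\ref{prop: equiv_SNSQ_SOW}. Its value is $\sum_t\norm{A_t\ket{\varphi}}^2$, where $A_t$ are the components of $(\id_{C_D}\otimes T_A\odot T_B)(\eta_{\mathcal G})$. Taking suprema gives the lower bound.

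\textbf{The upper bound.} The paper decomposes $\eta_{\mathcal G}$ via the Lemma and pairs $\lambda_i^2\eta_i^*\eta_i$ against the positive $C_\Phi$ using the one-way full-tensor estimates (Remarks~\ref{Rem: comments_to_the_proof} and~\ref{Rem: SNSQ and unit ball}). That route also yields the decomposition behind Corollary~\ref{cor: sub-no-signalling, full tensor}. You instead show directly that each $\Phi\in\mathcal{SNOS}$ gives a single map $\Theta$ into a column space that is completely contractive for both Haagerup orders. This avoids the full-tensor machinery. Two repairs are needed.
\begin{enumerate}
\item To get literally the same $\Theta$, take both factorizations from one fixed Stinespring contraction $V_{AB}$, as in the proof of Proposition~\ref{prop: equiv_SNSQ_SOW}. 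There $V_{AB}=(\id_{O_A}\otimes\theta_B)(V_A\otimes\id_{I_B})$, and symmetrically with $A$ and $B$ exchanged. This exhibits $\Theta$ as a product of rectangular complete contractions in each order.
\item Do not finish with $\inf\{\norm{w_1}+\norm{w_2}\}$ at the $C_D$ level. That infimum only dominates the column norm coming from the complete quotient~(\ref{mu_quotient}), and it can exceed it by up to a factor $\sqrt2$. At column level the $\oplus_1$ pieces combine in an $\ell_2$ fashion, as with $\lambda_1^2+\lambda_2^2$ in the Lemma. Instead, note that $\Theta\circ Q$ is completely contractive on the $\oplus_1$-sum. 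Hence $\Theta$ is a complete contraction on $S_1(I_A,O_A)\otimes_\mu S_1(I_B,O_B)$, and $\norm{(\id_{C_D}\otimes\Theta)(\eta_{\mathcal G})}\le\norm{\eta_{\mathcal G}}_{C_D(S_1(I_A,O_A)\otimes_\mu S_1(I_B,O_B))}$.
\end{enumerate}
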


\begin{proof}
    First suppose $\norm{\eta_{\mathcal  G}}_{C_D (S_1(I_A,O_A) \otimes_\mu S_1(I_B,O_B))}^2 \leq \d$, with $\d >0$. Since the $\min$ norm is commutative, we can see the tensor  $\eta_{\mathcal G}$ in $(S_1(I_A, O_A) \otimes_\mu S_1(I_B, O_B)) \otimes_{\min} C_D$ by simply flipping it correspondingly, and the norm does not change. Moreover, we know (see Eq. (\ref{Eq: min_cb})) that $$\norm{\eta_{\mathcal  G}}_{(S_1(I_A,O_A) \otimes_\mu S_1(I_B,O_B))^* \otimes_{min}C_D}=\|T_{\eta_{\mathcal  G}}:(S_1(I_A,O_A) \otimes_\mu S_1(I_B,O_B))^*\to C_D\|_{cb},$$where $T_{\eta_{\mathcal  G}}$ is the linear map associated to the tensor $\eta_{\mathcal  G}$. Set $\mathcal  X_1:= S_\infty(I_B,O_B) \otimes_h S_\infty(I_A,O_A)$ and $\mathcal  X_2:= S_\infty(I_A,O_A) \otimes_h S_\infty(I_B,O_B)$. Note that $\mathcal  X_1 =\mathcal  X_2$ algebraically. Using Eq. (\ref{mu_quotient}) one can see that the $\d$-contractivity of the half tensor $\eta_{\mathcal  G}$ is equivalent to the induced operator $T_{\eta_{\mathcal  G}}: \mathcal  X_1 \cap \mathcal  X_2 \to C_D$ satisfying \[
\cbnorm{T_{\eta_{\mathcal  G}}: \mathcal  X_1 \cap \mathcal  X_2 \to C_D} \leq \sqrt{\d}.
    \] 
    
    Applying Lemma~\ref{lem: decomposition of cb from intersection operator space} there exist Hilbert spaces $K_i, i=1,2$, linear maps $T_i: \mathcal  X_i \to K_i, W_i: K_i \to \ell_2^D$ satisfying 
    \[
\cbnorm{T_i: \mathcal  X_i \to (K_i)_c} \leq 1,\,\,\norm{(W_1,W_2): K_1\oplus_2 K_2 \to \ell_2^D} \leq 1,
    \] and nonnegative numbers $\l_i$ satisfying $\l_1^2 + \l_2^2 \leq \d$, such that we may write $T_{\eta_{\mathcal  G}}$ as  \[
T_{\eta_{\mathcal  G}} = \l_1 W_1 T_1 + \l_2 W_2T_2.
    \] 
For each $i$, set $\eta_i \in \mathcal  X_i^* \otimes K_i$ to be the corresponding tensor of the operator $T_i: \mathcal  X_i \to K_i$. It then follows that we may write the half tensor $\eta_{\mathcal  G}$ as \[
\eta_{\mathcal  G} = \l_1(\id_{\mathcal  X_1^*} \otimes W_1)(\eta_1) + \l_2(\id_{\mathcal  X_2^*} \otimes W_2)(\eta_2).
\] Set $\wt{W}_i:= \id \otimes W_i$. The expression for our half tensor becomes \[
\eta_{\mathcal  G}  = \l_1\wt{W}_1(\eta_1) + \l_2\wt{W}_2 (\eta_2). 
\] The norm bound $\norm{(W_1,W_2): K_1 \oplus_2 K_2 \to \ell_2} \leq 1$ implies \begin{align*}
    \xi_{\mathcal G}=\eta_{\mathcal  G}^*\eta_{\mathcal  G} &= \begin{pmatrix}
        \l_1\eta^* & \l_2\eta^*
    \end{pmatrix} \begin{pmatrix}
        \wt{W}_1^* \\ \wt{W}_2^*
    \end{pmatrix} \begin{pmatrix}
        \wt{W}_1 & \wt{W}_2
    \end{pmatrix} \begin{pmatrix}
        \l_1\eta_1 \\ \l_2\eta_2
    \end{pmatrix} \\ &\leq \begin{pmatrix}
        {\l_1}\eta^*_1 & {\l_2}\eta^*_2
    \end{pmatrix}\begin{pmatrix}
        \l_1\eta_1 \\ \l_2\eta_2
    \end{pmatrix} \\ &= \l_1^2\eta_1^*\eta_1 + \l_2^2 \eta_2^*\eta_2.
\end{align*}

Since $C_\varphi\geq 0$ for every $\varphi \in \mathcal{SNOS}(I_{AB}, O_{AB})$, we know that 
\begin{align*}
\om_{SNOS}(\mathcal  G)&=\sup \{\langle \xi_G, C_\varphi\rangle\}=\sup \{\langle \eta^*\eta, C_\varphi\rangle\}\leq \sup  \{\lambda_1^2\langle \eta_1^*\eta_1, C_\varphi\rangle+\lambda_2^2\langle \eta_2^*\eta_2, C_\varphi\rangle\}\\&\leq \lambda_1^2\sup  \{\langle \eta_1^*\eta_1, C_\varphi\rangle\}+\lambda_2^2\sup\{\langle \eta_2^*\eta_2, C_\varphi\rangle\},
\end{align*}where all the supremums are taken when $\varphi \in \mathcal{SNOS}(I_{AB}, O_{AB})$. It is then clear that
\begin{align*}
\om_{SNOS}(\mathcal  G)\leq \lambda_1^2\sup  \{\langle \eta_1^*\eta_1, C_\varphi\rangle:\, \varphi\in \mathcal {SOW}_{A \to B}\}+\lambda_2^2\sup\{\langle \eta_2^*\eta_2, C_\varphi\rangle\, \varphi\in \mathcal {SOW}_{B \to A}\}.
\end{align*} Now, as we have seen in the proof of Theorem \ref{thm: one-way value, full tensor}, $\varphi\in \mathcal {SOW}_{A \to B}$ implies that $C_{\varphi}$ is in the unit ball of $S_\infty(I_A; S_1(O_A; S_\infty(I_B; S_1(O_B))))$. Therefore, we can upper bound 
\begin{align*}
\om_{SNOS}(\mathcal  G)&\leq \lambda_1^2\|\eta_1^*\eta_1\|_{S_1(I_A; S_\infty(O_A; S_1(I_B; S_\infty(O_B))))}+\lambda_2^2\|\eta_2^*\eta_2\|_{S_1(I_B; S_\infty(O_B; S_1(I_A; S_\infty(O_A))))}\\&\leq \lambda_1^2\|\eta_1\|^2_{C_D \otimes_h (S_1(I_B,O_B) \otimes_h S_1(I_A,O_A))}+\lambda_2^2\|\eta_2\|^2_{C_D \otimes_h (S_1(I_A,O_A) \otimes_h S_1(I_B,O_B))}\\&\leq \lambda_1^2+ \lambda_2^2\leq \delta,
\end{align*}where the second inequality follows from the first part of the proof of Theorem \ref{thm: one-way value, full tensor} (see Remark \ref{Rem: comments_to_the_proof}).

This finishes our claim that  \[
\om_{SNOS}(\mathcal  G) \leq \norm{\eta_{\mathcal  G}}_{C_D \otimes_{h} (S_1(I_A, O_A) \otimes_\mu S_1(I_B,O_B))}.\] 

For the reverse inequality we follow a similar line of thought as that of Theorem~\ref{thm: one-way value A to B, half-tensor}. To this end, suppose $\om_{SNOS}(\mathcal  G) \leq \d^2$. Given two complete contractions $T_A: S_1(I_A,O_A) \to S_\infty(\h), T_B: S_1(I_B, O_B) \to S_\infty(\h)$ with commuting ranges, we obtain \[
\norm{\id_{C_D} \otimes T_A \odot T_B(\eta_{\mathcal  G})}_{C_D \otimes_{min} S_\infty(\h)}^2 = \sup_{\ket{\vp} \in H} \sum_t \norm{A_t \ket{\vp}}_H^2,
\] where for each $t \in D$ we set \[A_t:= \sum_j T_A \odot T_B( \bra{\oline{\g}_t^j} \otimes \ket{\oline{\psi}_j}),\] and where the supremum is taken over all unit vectors $\ket{\vp} \in H$. As in Theorem~\ref{thm: one-way value A to B, half-tensor}, for each complete contraction $T_A, T_B$, we associate contractions \[
V_A: I_AH \to O_AH, V_B: I_BH \to O_BH,
\] so that the above supremum may be expressed as \[
\sum_t \Big\|\sum_j (\bra{{\g}_t^j} \otimes \id_E)((V_B \otimes \id_{O_A})(V_A \otimes \id_{I_B})(\ket{\psi_j\vp})\Big\|_H^2.
\] Commutativity of $T_A$ and $T_B$ imply \[
(\id_{O_A} \otimes V_B)(V_A \otimes \id_{I_B}) =(V_A \otimes \id_{I_B})(\id_{O_A} \otimes V_B). 
\] The previous expression coincides with \[
\tr (P \cdot (\id_{S_1(R)}\otimes \id_{S_1(O_A)} \otimes \vp_B) \circ (\id_{S_1(R)}\otimes \vp_A \otimes \id_{S_1(I_B)})(\op{\psi}{\psi})),
\] where the operators $\vp_A: S_1(I_A) \to S_1(O_AH), \vp_B: S_1(I_BH) \to S_1(O_B)$ are defined as \[
\vp_A(\s) = V_A(\s \otimes \op{\vp}{\vp})V_A^\dag,\,\, \vp_B(\rho) = \tr_H V_B \rho V_B^\dag.
\] Applying the commutativity relation between $V_A$ and $V_B$ enables us to write the strategy \[
(\id_{S_1(R)}\otimes \id_{S_1(O_A)}\otimes \vp_B) \circ (\id_{S_1(R)}\otimes \vp_A \otimes \id_{S_1(I_B)}),
\] as \[
(\id_{S_1(R)}\otimes \wt{\vp}_A \otimes \id_{S_1(O_B)}) \circ (\id_{S_1(R)}\otimes \id_{S_1(I_A)} \otimes\wt{\vp}_B),
\] where \[
\wt{\vp}_B(\s) = V_B(\s \otimes \op{\vp}{\vp})V_B^\dag,\,\, \wt{\vp}_A(\rho) = \tr_H(V_A \rho V_A^\dag).
\] We thus have the desired sub-no-signalling strategy which implies \[
\norm{\id_{C_D} \otimes T_A \odot T_B(\eta_{\mathcal  G})}_{C_D \otimes_{min} S_\infty(\h)}^2 \leq \d^2. 
\] This finishes the reverse inequality, and finishes the proof. \qedhere
\end{proof} 

As a corollary to Theorem~\ref{thm: sub-no-signalling, half tensor} we express the sub-no-signalling value of the game via the full game tensor: 

\begin{corollary}\label{cor: sub-no-signalling, full tensor}
    Given a two-player quantum game $\mathcal  G$ with full tensor $\xi_{\mathcal  G}$, then \[
\om_{SNOS}(\mathcal  G) = \inf \big\{ \norm{\xi_1}_{S_1(I_A; S_\infty(O_A; S_1(I_B; S_\infty(O_B))))} + \norm{\xi_2^T}_{S_1(I_B; S_\infty(O_B; S_1(I_A; S_\infty(O_A))))}\big \},
    \]where the infimum runs over all positive elements $\xi_1$, $\xi_2$ such that $\xi_G\leq \xi_1+\xi_2$ and the superscript $T$ refers to the suitable rearrangment of the tensor.
\end{corollary}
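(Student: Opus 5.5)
We prove the two inequalities separately. Write $\mathcal X_{AB}=S_1(I_A; S_\infty(O_A; S_1(I_B; S_\infty(O_B))))$ and $\mathcal X_{BA}=S_1(I_B; S_\infty(O_B; S_1(I_A; S_\infty(O_A))))$, and let $\beta$ denote the infimum on the right-hand side.

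\emph{Upper bound $\omega_{SNOS}(\mathcal G)\le\beta$.} Take positive $\xi_1,\xi_2$ with $\xi_{\mathcal G}\le\xi_1+\xi_2$ and fix $\varphi\in\mathcal{SNOS}(I_{AB},O_{AB})$. Since $C_\varphi\ge 0$, the pairing $\langle\cdot,C_\varphi\rangle$ is monotone on self-adjoint tensors. Hence
\[
\langle\xi_{\mathcal G},C_\varphi\rangle\le\langle\xi_1,C_\varphi\rangle+\langle\xi_2,C_\varphi\rangle .
\]
Because $\varphi\in\mathcal{SNOS}_{B\not\to A}$, Remark~\ref{Rem: SNSQ and unit ball} (the proof of Theorem~\ref{thm: one-way value, full tensor}) puts $C_\varphi$ in the unit ball of $\mathcal X_{AB}^*$, so $\langle\xi_1,C_\varphi\rangle\le\|\xi_1\|_{\mathcal X_{AB}}$. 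Symmetrically, $\varphi\in\mathcal{SNOS}_{A\not\to B}$ gives $\langle\xi_2,C_\varphi\rangle\le\|\xi_2^T\|_{\mathcal X_{BA}}$, after the index rearrangement of Remark~\ref{rem:Choi_adjoint}. Taking the supremum over $\varphi$ and then the infimum over decompositions gives the bound.

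\emph{Lower bound $\beta\le\omega_{SNOS}(\mathcal G)$.} We reuse the first half of the proof of Theorem~\ref{thm: sub-no-signalling, half tensor}. Let $\delta>\omega_{SNOS}(\mathcal G)$. By Theorem~\ref{thm: sub-no-signalling, half tensor}, $\|\eta_{\mathcal G}\|^2_{C_D(S_1\otimes_\mu S_1)}<\delta$. Lemma~\ref{lem: decomposition of cb from intersection operator space} then yields a decomposition $\eta_{\mathcal G}=\lambda_1\wt W_1(\eta_1)+\lambda_2\wt W_2(\eta_2)$ with the following properties:
\begin{itemize}
\item $\lambda_1^2+\lambda_2^2\le\delta$;
\item $\|\eta_1\|_{C\otimes_h(S_1(I_B,O_B)\otimes_h S_1(I_A,O_A))}\le1$ and $\|\eta_2\|_{C\otimes_h(S_1(I_A,O_A)\otimes_h S_1(I_B,O_B))}\le 1$;
\item $\|(W_1,W_2)\|\le1$.
\end{itemize}
As computed in that proof, $\xi_{\mathcal G}=\eta_{\mathcal G}^*\eta_{\mathcal G}\le\lambda_1^2\eta_1^*\eta_1+\lambda_2^2\eta_2^*\eta_2$. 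Set $\xi_i=\lambda_i^2\eta_i^*\eta_i$; these are positive. By Remark~\ref{Rem: comments_to_the_proof} (the Haagerup factorization bound $\|\eta^*\eta\|_{\mathcal X}\le\|\eta\|^2_{C\otimes_h\cdots}$, which uses no game structure),
\[
\|\xi_1\|_{\mathcal X_{AB}}\le\lambda_1^2,\qquad \|\xi_2^T\|_{\mathcal X_{BA}}\le\lambda_2^2 .
\]
Therefore $\beta\le\delta$, and letting $\delta\downarrow\omega_{SNOS}(\mathcal G)$ gives the claim.

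\emph{Main obstacle.} The delicate point is bookkeeping rather than analysis. First, one must check that $\eta_i^*\eta_i$ lands in the tensor space $S_1(I_AI_B)\otimes S_\infty(O_AO_B)$ with the index ordering that matches $\mathcal X_{AB}$, resp.\ $\mathcal X_{BA}$ after the flip $T$. Second, the operator inequality $\eta^*\eta\le\lambda_1^2\eta_1^*\eta_1+\lambda_2^2\eta_2^*\eta_2$ must be interpreted in the correct $C^*$-sense, namely as positivity of the Choi-type matrix. Both follow the same identifications used in Theorem~\ref{thm: one-way value, full tensor}; the $\eta_2$ term only requires swapping the roles of Alice and Bob.
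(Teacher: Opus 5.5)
Your proposal is correct and is essentially the paper's own proof. For the upper bound, both arguments use positivity of $C_\varphi$ together with the fact that $C_\varphi$ lies in the unit balls of both dual spaces. This is Remark~\ref{Rem: SNSQ and unit ball}, read with $\mathcal{SNOS}_{B\not\to A}$ for the $S_1(I_A; S_\infty(O_A; S_1(I_B; S_\infty(O_B))))$ term, as you correctly do and as the proof of Theorem~\ref{thm: one-way value, full tensor} requires. For the lower bound, both reuse the inequality $\xi_{\mathcal G}\le\lambda_1^2\eta_1^*\eta_1+\lambda_2^2\eta_2^*\eta_2$ from the proof of Theorem~\ref{thm: sub-no-signalling, half tensor}, combined with the Haagerup factorization bound of Remark~\ref{Rem: comments_to_the_proof}.
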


\begin{proof}
    Given two positive tensors $\xi_1,\xi_2$ such that $\xi_{\mathcal  G} \leq \xi_1 + \xi_2$, and using Remark~\ref{Rem: comments_to_the_proof} and the fact that elements in SQNS are completely positive maps, it necessarily follows
    \begin{align*}
\om_{SNOS}(\mathcal  G) &\leq \om_{SNOS}(\xi_1) + \om_{SNOS}(\xi_2) \leq \om_{A \to B}(\xi_1) + \om_{B \to A}(\xi_2)
\\&\leq  \norm{\xi_1}_{S_1(I_A; S_\infty(O_A; S_1(I_B; S_\infty(O_B))))} + \norm{\xi_2}_{S_1(I_B; S_\infty(O_B; S_1(I_A; S_\infty(O_A))))}.
    \end{align*} 
    
    Conversely, suppose $\om_{SNOS}(\mathcal  G) \leq \d^2$. Writing $\xi_{\mathcal  G} = \eta_{\mathcal  G}^* \eta_{\mathcal  G}$ with \[\norm{\eta_{\mathcal  G}}_{C_D (S_1(I_A,O_A) \otimes_\mu S_1(I_B,O_B))}^2 \leq \d^2, 
\] we are able to follow the analogous argument as in Theorem~\ref{thm: sub-no-signalling, half tensor}. We conclude \[
\xi_{\mathcal  G} \leq \l_1^2\eta_1^*\eta_1 + \l_2^2\eta_2^*\eta_2,
\] where $\|\eta_1\|_{C_D\otimes_{min}(S_1(I_B,O_B) \otimes_h S_1(I_A,O_A))}\leq 1$, $\|\eta_2\|_{C_D\otimes_{min}(S_1(I_A,O_A) \otimes_h S_1(I_B,O_B))}\leq 1$ and $\l_1^2+\l_2^2\leq \delta^2$. Defining $\xi_i:= \l_i^2\eta_i^*\eta_i$, we obtain the upper bound for $\xi_{\mathcal  G}\leq \xi_1+\xi_2$ such that \begin{align*}
&\norm{\xi_1}_{S_1(I_A; S_\infty(O_A; S_1(I_B; S_\infty(O_B))))} + \norm{\xi_2}_{S_1(I_B; S_\infty(O_B; S_1(I_A; S_\infty(O_A))))} \\&\leq  \l_1^2\|\eta_1\|^2_{C_D\otimes_{min}(S_1(I_B,O_B) \otimes_h S_1(I_A,O_A))}+ \l_2^2\|\eta_2\|^2_{C_D\otimes_{min}(S_1(I_A,O_A) \otimes_h S_1(I_B,O_B))} \\&\leq \d^2.
\end{align*} This finishes the proof. \qedhere 
\end{proof}

\begin{corollary}\label{cor: SNOS full tensor_II}
Let $\mathcal G$ be a quantum game with full tensor $\xi_G$. Then, 
\begin{align*}
\omega_{SNOS}(\mathcal G)=\sup\{\langle \xi_\mathcal G, \hat{\Phi}\rangle\},
\end{align*}where the sup runs over all positive elements $\hat{\Phi}$ such that $$\max\big\{\|\hat{\Phi}\|_{S_\infty(I_A; S_1(O_A; S_\infty(I_B; S_1(O_B))))}\leq 1, \, \|\hat{\Phi}^T\|_{S_\infty(I_B; S_1(O_B; S_\infty(I_A; S_1(O_A))))}\big\}\leq 1.$$
\end{corollary}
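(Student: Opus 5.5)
Since the infimum in Corollary~\ref{cor: sub-no-signalling, full tensor} is over positive decompositions $\xi_{\mathcal G}\le \xi_1+\xi_2$ rather than the exact decompositions appearing in a plain sum-norm, I would prove the corollary by conic duality: it is the dual of that infimum. Write $\mathcal X=S_1(I_A; S_\infty(O_A; S_1(I_B; S_\infty(O_B))))$ and $\mathcal X^T$ for the flipped space, with duals $\mathcal X^*=S_\infty(I_A; S_1(O_A; S_\infty(I_B; S_1(O_B))))$ and $(\mathcal X^T)^*$. Let $\mathcal K$ be the set of positive $\hat\Phi$ with $\max\{\|\hat\Phi\|_{\mathcal X^*},\|\hat\Phi^T\|_{(\mathcal X^T)^*}\}\le 1$, and set $\beta(\mathcal G)=\sup_{\hat\Phi\in\mathcal K}\langle\xi_{\mathcal G},\hat\Phi\rangle$.

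\emph{Step 1, the inequality $\beta\le\omega_{SNOS}$.} Take $\hat\Phi\in\mathcal K$ and any admissible $\xi_1,\xi_2\ge0$ with $\xi_{\mathcal G}\le\xi_1+\xi_2$. Positivity of $\hat\Phi$ lets us pair against the operator inequality:
\begin{align*}
\langle\xi_{\mathcal G},\hat\Phi\rangle\le\langle\xi_1,\hat\Phi\rangle+\langle\xi_2,\hat\Phi\rangle\le\|\xi_1\|_{\mathcal X}+\|\xi_2^T\|_{\mathcal X^T}.
\end{align*}
The duality pairing $\langle A,B\rangle=\tr(A^TB)$ of positive elements is nonnegative, which is what the first inequality uses (up to the conjugation issues noted in Remark~\ref{rem:Choi_adjoint}). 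Taking the infimum over $(\xi_1,\xi_2)$ and applying Corollary~\ref{cor: sub-no-signalling, full tensor} gives $\beta\le\omega_{SNOS}$. Alternatively, every $C_\varphi$ with $\varphi\in\mathcal{SNOS}$ is positive and, by Remark~\ref{Rem: SNSQ and unit ball} applied in both directions, lies in $\mathcal K$; this shows $\omega_{SNOS}\le\beta$ directly from the definition.

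\emph{Step 2, the reverse inequality.} With both directions available from Step 1, the statement already follows: the Choi matrices give $\omega_{SNOS}\le\beta$, and the decomposition bound gives $\beta\le\inf=\omega_{SNOS}$. So the proof is the combination of these two observations, and no Hahn--Banach argument is needed.

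The points to check carefully are the following.
\begin{itemize}
\item Remark~\ref{Rem: SNSQ and unit ball} must indeed give $\|C_\varphi\|_{\mathcal X^*}\le1$ for $\varphi\in\mathcal{SNOS}_{B\not\to A}$, and the flipped bound for $\mathcal{SNOS}_{A\not\to B}$. This needs the right matching between $C_\varphi$ and $C_{\varphi^\dag}$, which Remark~\ref{rem:Choi_adjoint} provides.
\item The identity $\omega_{SNOS}(\mathcal G)=\sup_\varphi\langle\xi_{\mathcal G},C_\varphi\rangle$ must hold, which is the two-player analogue of Lemma~\ref{lem: duality of game tensors and strategies}.
\item The pairing of two positive tensors must be nonnegative, so that $\xi_{\mathcal G}\le\xi_1+\xi_2$ can be tested against $\hat\Phi$.
\end{itemize}

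I expect the main obstacle to be the first point: the bookkeeping of the index rearrangement (which $T$ and which direction) in the bound $\|C_\varphi\|\le 1$. If the sandwich argument of Remark~\ref{Rem: SNSQ and unit ball} does not transfer cleanly, I would fall back on the Hahn--Banach separation for the cone-constrained infimum. In that approach, finite dimensionality and the positivity of $\xi_{\mathcal G}$ give the existence of a separating functional, which can be taken positive and norm-bounded in both dual norms.
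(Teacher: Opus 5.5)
Your proposal is correct and follows the paper's proof. The Choi matrices of $\mathcal{SNOS}$ strategies lie in the constraint set by Remark~\ref{Rem: SNSQ and unit ball}, which gives $\omega_{SNOS}\le\beta$, and pairing a positive $\hat\Phi$ against any decomposition $\xi_{\mathcal G}\le\xi_1+\xi_2$ from Corollary~\ref{cor: sub-no-signalling, full tensor} gives $\beta\le\omega_{SNOS}$ (the paper phrases this second step by homogeneity, assuming $\omega_{SNOS}(\mathcal G)<1$). Note that the Choi-matrix argument in your Step~1 is not an ``alternative'' proof of the same inequality but the converse one, and your Step~2 rightly combines the two, so the Hahn--Banach fallback is not needed.
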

\begin{proof}
According to Lemma \ref{lem: duality of game tensors and strategies}, for every $\Phi\in \mathcal {SNOS}(I_{AB}, O_{AB})=\mathcal {SOW}_{A\rightarrow B}(I_{AB}, O_{AB})\cap \mathcal {SOW}_{B\rightarrow A}(I_{AB}, O_{AB})$, we have $$\tr\big(P\cdot (\text{Id}_{S_1(R)}\otimes \Phi)(\psi)\big)=\langle \xi_G, C_{\Phi}\rangle,$$where $C_{\Phi}$ is the Choi matrix of $\Phi$. According to Remark \ref{Rem: SNSQ and unit ball}, this implies 
\begin{align*}
\|C_{\Phi}\|_{S_\infty(I_A; S_1(O_A; S_\infty(I_B; S_1(O_B))))}\leq 1\text{ and }\|C_{\Phi}^T\|_{S_\infty(I_B; S_1(O_B; S_\infty(I_A; S_1(O_A))))}\leq 1.
\end{align*} Hence, one immediately obtains inequality $\leq$.

On the other hand, according to Corollary \ref{cor: sub-no-signalling, full tensor}, for every game $\mathcal G$ such that $\omega_{SNOS}(\mathcal G)<1$, there exist positive tensors $\xi_1$, $\xi_2$ such that $\xi_\mathcal G\leq \xi_1+\xi_2$ and 

\begin{align*}
\|\xi_1\|_{S_1(I_A; S_\infty(O_A; S_1(I_B; S_\infty(O_B)))}+\|\xi_2^T\|_{S_1(I_B; S_\infty(O_B; S_1(I_A; S_\infty(O_A)))}<1.
\end{align*}  

Now, if $C_{\Phi}$ is a positive element such that $$\max\{\|C_{\Phi}\|_{S_\infty(I_A; S_1(O_A; S_\infty(I_B; S_1(O_B))))}\leq 1, \, \|C_{\Phi}^T\|_{S_\infty(I_B; S_1(O_B; S_\infty(I_A; S_1(O_A))))}\}\leq 1,$$we have
\begin{align*}
\langle \xi_G, C_{\Phi}\rangle&\leq \langle \xi_1, C_{\Phi}\rangle+ \langle \xi_2, C_{\Phi}\rangle\\&\leq \|\xi_1\|_{S_1(I_A; S_\infty(O_A; S_1(I_B; S_\infty(O_B)))}+\|\xi_2^T\|_{S_1(I_B; S_\infty(O_B; S_1(I_A; S_\infty(O_A)))}\\&<1.
\end{align*}

We conclude the proof by homogeneity.
\end{proof}

It is clear that $\mathcal{NOS}(I_{AB}, O_{AB})\subseteq \mathcal{SNOS}(I_{AB}, O_{AB})$, which immediately implies that for every quantum game $\mathcal G$ one has $$\om_{NOS}(\mathcal G)\leq \om_{SNOS}(\mathcal G).$$  The sub-no-signalling correlations have been already studied in the context of classical games (see for instance \cite{amr2019optimal, lancien2015parallel}) and in that case, it can be proved that for every $\Phi\in \mathcal{SNOS}(I_{AB}, O_{AB})$ there exists a $\Phi'\in \mathcal{NOS}(I_{AB}, O_{AB})$ such that $\Phi\leq \Phi'$, from where one can easily deduce that $$\om_{NOS}(\mathcal G)= \om_{SNOS}(\mathcal G)$$  for every bipartite classical game. However, the proof of the existence of such a $\Phi'$ fails dramatically in the case of quantum games and at the moment of this writing we don't know if the no-signalling and the sub-no-signalling value of a bipartite quantum games are the same. Nevertheless, it can be proved that 
\begin{align}\label{Eq: SQNS=SNS}
\omega_{SNOS}(\mathcal G) \leq 8\,\omega_{NOS}(\mathcal G)
\end{align}
for every bipartite quantum game $\mathcal G$. 
That is, the quantities $\omega_{SNOS}(\mathcal G)$ and $\omega_{NOS}(\mathcal G)$ 
are equivalent up to a multiplicative constant factor. 

The comparison between \(\omega_{NOS}(G)\) and \(\omega_{SNOS}(G)\) is closely tied to the problem of lifting families of commuting partial isometries and commuting contractions, and the proof of (\ref{Eq: SQNS=SNS}) is technically involved. Since none of the results in this work rely on this estimate, and including the proof would significantly increase the length of the paper, we have chosen to omit it here. A complete exposition will be provided in a forthcoming work.

\section{Grothendieck's Theorem for Operator Spaces}\label{Sec: Grothendieck's Theorem for Operator Spaces}
Our goal in this section is to apply the theory developed in the previous sections to derive several consequences related to Grothendieck's Theorem. We begin with a consequence of our results on the sub-no-signalling and entangled values of games established earlier.
\begin{theorem}\label{thm: comparison of subns and entangled}
    The following assertions are equivalent: \begin{enumerate}
        \item There exists a constant $c_1$ such that for every bipartite quantum game $\mathcal  G$ one has \[
\om_{SNOS}(\mathcal  G) \leq c_1\om^*(\mathcal  G).
        \]
        \item There exists a constant $c_2$ such that for every tensor $t\in C\otimes S_1\otimes S_1$ that is obtained as a half tensor of a quantum game it holds that \[
\norm{t}_{C(S_1 \otimes_{\mu} S_1)} \leq c_2 \norm{t}_{C(S_1 \otimes_{min} S_1)}.
        \] If $c_1$ is the best constant satisfying (1) and $c_2$ is the best item satisfying (2), then \[
c_1 = c_2^2.
        \]
    \end{enumerate}
\end{theorem}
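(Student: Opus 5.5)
The plan is to reduce both assertions to the same ratio, using the two half-tensor formulas already proved. By Theorem~\ref{thm: sub-no-signalling, half tensor}, every bipartite quantum game $\mathcal G$ with half tensor $\eta_{\mathcal G}$ satisfies
\[
\omega_{SNOS}(\mathcal G)=\|\eta_{\mathcal G}\|^2_{C_D(S_1(I_A,O_A)\otimes_\mu S_1(I_B,O_B))}.
\]
By Theorem~\ref{thm: entangled value, half-tensor},
\[
\omega^*(\mathcal G)=\|\eta_{\mathcal G}\|^2_{C_D(S_1(I_A,O_A)\otimes_{\min} S_1(I_B,O_B))}.
\]
Both assertions therefore concern the supremum of the same ratio, taken over the class of half tensors of games. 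In (1) the ratio is squared and in (2) it is not.

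\textbf{(2) implies (1).} Let $\mathcal G$ be a game and $t=\eta_{\mathcal G}$. Assertion (2) gives $\|t\|_{C(S_1\otimes_\mu S_1)}\le c_2\|t\|_{C(S_1\otimes_{\min}S_1)}$. Squaring and using the two identities yields $\omega_{SNOS}(\mathcal G)\le c_2^2\,\omega^*(\mathcal G)$. Hence (1) holds with constant $c_2^2$, and so $c_1\le c_2^2$.

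\textbf{(1) implies (2).} Let $t$ be a half tensor of some game $\mathcal G$, so $t=\eta_{\mathcal G}$. Applying (1) and the two identities gives $\|t\|^2_\mu\le c_1\|t\|^2_{\min}$. Taking square roots gives (2) with constant $\sqrt{c_1}$, so $c_2\le\sqrt{c_1}$. Combined with the previous paragraph, this gives $c_1=c_2^2$ for the best constants.

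The only step needing care is bookkeeping, not analysis. The spaces $C\otimes S_1\otimes S_1$ in (2) must be read with the dimensions $D$, $I_A$, $O_A$, $I_B$, $O_B$ inherited from the game, as in $C_D\otimes S_1(I_A,O_A)\otimes S_1(I_B,O_B)$. One should also check that the class of half tensors in (2) is exactly the set $\{\eta_{\mathcal G}\}$. This identification is built into the statement of (2), so the correspondence between the two classes is a bijection by definition. The homogeneity of both norms means no normalization of the game, such as $P$ being a projection or $\psi$ a unit vector, is needed.
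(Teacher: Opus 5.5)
Your proof is correct and follows the paper's argument. Both directions come from Theorem~\ref{thm: entangled value, half-tensor} and Theorem~\ref{thm: sub-no-signalling, half tensor}, which identify $\omega^*(\mathcal G)$ and $\omega_{SNOS}(\mathcal G)$ with the squared $\min$- and $\mu$-norms of $\eta_{\mathcal G}$. The only difference is that the paper normalizes ($\|\eta\|_{\min}\le 1$, resp.\ $\omega^*(\mathcal G)=1$) where you use homogeneity directly, which also covers the trivial case $\omega^*(\mathcal G)=0$.
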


\begin{proof}
    First suppose there exists a universal constant $c_1$ such that \[
\om_{SNOS}(\mathcal  G) \leq c_1 \om^*(\mathcal  G),
    \] for all quantum games $\mathcal  G$, and let $\eta \in C_D \otimes S_1(I_A,O_A) \otimes S_1(I_B,O_B)$ be a tensor satisfying \[
\norm{\eta}_{C_D(S_1(I_A,O_A) \otimes_{min} S_1(I_B,O_B))}^2 \leq 1.
    \] Via Theorem~\ref{thm: entangled value, half-tensor} we then have the quantum game $\mathcal  G = \eta^*\eta$ satisfies \[
\om^*(\mathcal  G) \leq 1.
    \] 
Our assumption then implies $\om_{SNOS}(\mathcal  G) \leq c_1$. Consequently via Theorem~\ref{thm: sub-no-signalling, half tensor} we have \[\norm{\eta}_{C_D (S_1(I_A,O_A) \otimes_{\mu} S_1(I_B,O_B))} \leq \sqrt{c_1}.
\] This finishes the claim that (1) implies (2). \newline Suppose now Item (2) holds. Let $\mathcal  G$ be an arbitrary quantum game and assume $\om^*(\mathcal  G)=1$. Writing $\mathcal  G = \eta_{\mathcal  G}^*\eta_{\mathcal  G}$ with $\eta_{\mathcal  G} \in C_D \otimes S_1(I_A,O_A) \otimes S_1(I_B,O_B)$, the corresponding half tensor, we have via Theorem~\ref{thm: entangled value, half-tensor} that  \[
\norm{\eta_{\mathcal  G}}_{C_D(S_1(I_A,O_A) \otimes_{min} S_1(O_A,O_B))} = 1
\] Our assumption implies \[
\norm{\eta_{\mathcal  G}}_{C_D (S_1(I_A,O_A) \otimes_{\mu} S_1(I_B,O_B))} \leq {c_2},
\] for some universal constant $c_2$. Applying Theorem~\ref{thm: sub-no-signalling, half tensor} it follows \[
\om_{SNOS}(\mathcal  G) \leq c_2^2.
\] This proves Item (1), which finishes the equivalence. \qedhere 

\end{proof}

\begin{remark}
It is plausible that the previous theorem could be formulated in a more general setting, one in which the tensor \(t\) is not required to possess the specific structure arising from a nonlocal game, but instead where the statement holds for any tensor \(t \in C \otimes S_1 \otimes S_1\). However, establishing such a result would require modifying several of the arguments developed in this work in order to adapt them to this broader framework. On the other hand, in this section we present a counterexample to item (2) even within the more restrictive setting considered here, so there is no compelling reason to pursue this generalization.
\end{remark}

Remarkably, statement (2) in the previous theorem can be viewed as a restricted version of the Grothendieck problem for operator spaces. This problem, first formulated in \cite{Blecher92}, asks whether there exists a universal constant \( K \) such that  
\[
\|z\|_{S_\infty \otimes_{\min} (S_1 \otimes_{\mu} S_1)} 
   \leq K \|z\|_{S_\infty \otimes_{\min} (S_1 \otimes_{\min} S_1)}
\]
for every \( z \in S_\infty \otimes S_1 \otimes S_1 \). We refer the reader to \cite{Ara, pisier2012grothendieck} for a detailed discussion of this problem and its consequences.  

A negative answer to item~(2) above would immediately yield a negative answer to this question. This issue was resolved in the recent paper~\cite{Ara}, which was partly motivated by results in quantum information theory. However, the deeper analysis developed in the present work---particularly Theorem \ref{thm: comparison of subns and entangled}---provides a much clearer explanation of the counterexample exhibited in~\cite{Ara}.  It should be emphasized, however, that the main result in~\cite{Ara} is considerably stronger than the one obtained in this work, as it provides a counterexample to item~(2) above in the case where the spaces $S_1$ are replaced by their commutative analogue~$\ell_1$---a situation to which the approach developed in the present paper does not apply.

In fact, we know examples of even classical games for which item~(1) of Theorem~ \ref{thm: comparison of subns and entangled} fails, thereby implying the failure of item~(2) as well. Thus, the counterexample presented in~\cite{Ara} can be precisely derived from certain games already known in the field of quantum information theory. Although this connection was briefly mentioned in~\cite{Ara}, Theorem~\ref{thm: comparison of subns and entangled} fully clarifies the situation. Let us now look into this connection in detail in the next section.

\subsubsection{An explicit counterexample to item (2) in Theorem \ref{thm: comparison of subns and entangled}}

In the work~\cite{bavarian15}, the following two-prover (classical) game was considered. 
Let us fix $n$, which we assume to be a prime number (or a power of a prime number), and let $\mathbb{Z}_n$ denote the cyclic group of order~$n$. 
In the $\mathcal G=\mathrm{CHSH}_n$ game, the referee chooses questions $(x,y)$ uniformly at random from $\mathbb{Z}_n \times \mathbb{Z}_n$ and sends question~$x$ to Alice and question~$y$ to Bob. 
Each party must produce an output $a,b \in \mathbb{Z}_n$, respectively. 
The players win the game if their outputs satisfy the equation $a + b = xy$, where both the addition and the product are taken modulo~$n$. 
This means that the game is defined by the probability distribution $\pi(x,y) = 1/n^2$ for every $x,y$, and the predicate function is $V(a,b \mid x,y) = \delta_{a + b = xy}$. 
It was proved in~\cite{bavarian15} that $\omega^*(\mathcal G) \leq C / \sqrt{n}$ for some universal constant~$C$, while it is easy to see that $\omega_{\mathrm{NOS}}(\mathcal G) = 1$. 
According to the explanation given in the introduction of the present work, if we consider the tensor $$z=\frac{1}{n^2}\sum_{\substack{x,y,a,b\in \mathbb Z_n
    \\ \hspace{-0.2 cm}a+b=xy}}(e_x\otimes e_a)\otimes (e_y\otimes e_b)\in \ell_1^n(\ell_\infty^n)\otimes \ell_1^n(\ell_\infty^n),$$one has
    \begin{align*}
    \|z\|_{\ell_1^n(\ell_\infty^n)\otimes_{min} \ell_1^n(\ell_\infty^n)}=\omega^*(\mathcal G)\leq  C\frac{1}{\sqrt{n}},\hspace{0.4 cm}\|z\|_{NOS}=\omega_{NOS}(\mathcal G)=1, \end{align*}
    where the second norm is defined in (\ref{Eq: NS_Classical_Games}).
    
 Now, in order to produce our counterexample, we need to compute the half-tensor of the game. To do so, we first reinterpret our previous classical game as a quantum game. This can be achieved by embedding the previous $\ell_p$ spaces as the diagonal of $S_p$. 

In other words, we consider the full tensor
\begin{align}\label{Eq: xi_G_CHSH_n}
\xi_\mathcal G = \frac{1}{n^2}\sum_{\substack{x,y,a,b\in \mathbb{Z}_n \\[-0.2em] a+b=xy}}
  (|x\rangle\langle x|\otimes |a\rangle\langle a|)\otimes (|y\rangle\langle y|\otimes |b\rangle\langle b|)
  \;\in\; S_1^n(S_\infty^n)\otimes S_1^n(S_\infty^n),
\end{align}
and we obtain
\begin{align}\label{classic_to_quantum_ex}
\|\xi_\mathcal G \|_{S_1^n(S_\infty^n)\otimes_{\min} S_1^n(S_\infty^n)} = \omega^*(\mathcal G)
  \;\leq\; C\,\frac{1}{\sqrt{n}}, 
  \qquad 
  \|\xi_\mathcal G \|_{\mathrm{SNOS}} = \omega_{NOS}(\mathcal G)= 1,
\end{align}
where the second norm is defined as in Corollary~\ref{cor: sub-no-signalling, full tensor}.  In the above expression, $|p\rangle\langle p|$ denotes the matrix whose only nonzero entry is $1$ in the $(p,p)$ position. Note that we have used the fact that, for bipartite classical games, the no-signalling and sub-no-signalling values coincide (see \cite{amr2019optimal, lancien2015parallel}). 
\begin{remark}
It is very clear that the inclusion
\[
\ell_1^n(\ell_\infty^n)\otimes_{min} \ell_1^n(\ell_\infty^n)
\subset
S_1^n(S_\infty^n)\otimes_{\min} S_1^n(S_\infty^n)
\]
via the embedding of each of the commutative $\ell_p$ spaces into the diagonal of the non-commutative ones defines a linear isometry.
Although we have not shown that a similar statement holds for the sub--no-signalling value, it is obvious from its very definition that the sub--no-signalling value of a classical game, when viewed as a quantum game, is at least the (sub--)no-signalling value of the game when viewed as a classical game. This explians the second equality in the Equation (\ref{classic_to_quantum_ex}) above.
\end{remark}

We find it interesting to explain how this game should be understood in terms of the input, following the framework introduced in Section~\ref{Sec: Games and Tensors}. 
To do so, we must describe the input state $\ket{\psi} \in R I_A I_B$ and the projection $P \in S_\infty(R O_A O_B)$ that define the quantum game~$\mathcal{G}$. Let us consider $I_A=I_B=O_A=O_B=\C^n$ and $R=\C^n\otimes\C^n$. Then, we consider $$\ket{\psi}=\frac{1}{n}\sum_{x,y\in\Z_n}|xy\rangle_R|x\rangle_A|y\rangle_B\in RI_AI_B,$$and 
 $$P=\sum_{\substack{x,y,a,b\in \mathbb{Z}_n \\[-0.2em] a+b=xy}}|x y\rangle\langle x y|_R\otimes |a\rangle\langle a|_A\otimes |b\rangle\langle b |_B\in P(RO_AO_B).$$  Here, $\ket{xy}_R = \ket{x}_R \otimes \ket{y}_R \in \mathbb{C}^n \otimes \mathbb{C}^n$ and $\ket{xy}\!\bra{xy}_R = \ket{x}\!\bra{x}_R \otimes \ket{y}\!\bra{y}_R$. 
Thus, the notation $xy$ here is purely a shorthand for the tensor product and does not denote any multiplication. According to these definitions we see that 
\begin{align*}
\xi_{\mathcal  G} &= \tr_R\left(\left(\psi \otimes 1_{S_\infty(O)} \right)\left(P^{T_O} \otimes 1_{S_1(I)} \right)  \right) \\&=  \frac{1}{n^2}\sum_{\substack{x,y,a,b\in \mathbb{Z}_n \\[-0.2em] a+b=xy}}
  |xy\rangle\langle xy|\otimes |ab\rangle\langle ab|\in S_1(I_AI_B)\otimes S_\infty(O_AO_B).\end{align*}
  
  If we want to realize this element as a tensor in $S_1(I_A,S_\infty(O_A))\otimes S_1(I_B,S_\infty(O_B))$, we obtain
  \begin{align*} 
\xi_{\mathcal  G}=  \frac{1}{n^2}\sum_{\substack{x,y,a,b\in \mathbb{Z}_n \\[-0.2em] a+b=xy}}
  (|x\rangle\langle x|\otimes |a\rangle\langle a|)\otimes (|y\rangle\langle y|\otimes |b\rangle\langle b|)
  \;\in\; S_1(I_A,S_\infty(O_A))\otimes S_1(I_B,S_\infty(O_B)),
    \end{align*}as we anticipated in Eq.~\eqref{Eq: xi_G_CHSH_n}.

The next step is to find the half-tensor $\eta_{\mathcal{G}}$ such that 
\[
\xi_\mathcal G = \eta_{\mathcal{G}}^* \eta_{\mathcal{G}}.
\]
Let us consider the element
\[
\eta_{\mathcal{G}}
  = \frac{1}{n}\sum_{\substack{x,y,a,b\in \mathbb{Z}_n \\[-0.2em] a+b=xy}}
  |1\rangle \langle xy|\otimes |xyab\rangle\langle ab|\in S_1(I_{AB})\otimes S_\infty(O_AO_B, D),
\]where, again, notation $xy$, $xyab$ and $ab$ here is purely a shorthand for the tensor product and $D$ is the Hilbert spaces spanned by the elements $\text{span}\{ |xyab\rangle:\, x,y,a,b\in \mathbb{Z}_n, a+b=xy\}$. Then, it is easy to check that 
 \begin{align*} 
\eta_{\mathcal{G}}^* \eta_{\mathcal{G}}=\frac{1}{n^2}\sum_{\substack{x,y,a,b\in \mathbb{Z}_n \\[-0.2em] a+b=xy}}
  |xy\rangle\langle xy|\otimes |ab\rangle\langle ab|=\xi_{\mathcal  G} \in S_1(I_{AB})\otimes S_\infty(O_AB).
    \end{align*}

Then, according to the identification explained in Eq. (\ref{Eq._Half_Tensor}) we must consider the element 
\[
\eta_{\mathcal{G}}
  = \frac{1}{n}\sum_{\substack{x,y,a,b\in \mathbb{Z}_n \\[-0.2em] a+b=xy}}
  |xyab\rangle \otimes \langle ab|\otimes |xy\rangle\in C_D\otimes R_{O_AO_B}\otimes C_{I_AI_B}.\]
  
  If we want to realize this element as a tensor in $C_D\otimes S_1(I_A,O_A)\otimes S_1(I_B,O_B)$ we must write
  \[
\eta_{\mathcal{G}}
  = \frac{1}{n}\sum_{\substack{x,y,a,b\in \mathbb{Z}_n \\[-0.2em] a+b=xy}}
  |xyab\rangle \otimes |a\rangle\langle x| \otimes |b\rangle\langle y| \in C_D\otimes S_1(I_A,O_A)\otimes S_1(I_B,O_B).\]

Hence, according to Theorem~\ref{thm: comparison of subns and entangled}, this tensor satisfies
\[
\|\eta_{\mathcal{G}}\|_{C^{n^4}(S_1^n\otimes_{\min} S_1^n)}
  \;\leq\; C'\, n^{-1/4},
  \qquad
  \|\eta_{\mathcal{G}}\|_{C^{n^4}(S_1^n\otimes_{\mu} S_1^n)}
  \;=\; 1,
\]
which provides the desired counterexample.

We note that this is exactly the same tensor, up to the normalization constant~$n$, that was used in the proof of~\cite[Theorem~3.1]{Ara} to construct a counterexample to Grothendieck's  theorem for operator spaces.

\subsubsection{An upper bound}
We finish this section by studying the comparison of $c_1, c_2$ in Theorem \ref{thm: comparison of subns and entangled} quantitatively. In particular, we wish to express the constant $c_2$ as a function of the rank of the game tensor. 

First, given linear spaces $X$, $Y$ and $Z$ and  a tensor $z \in X \otimes Y \otimes Z$, we define the \emph{rank} of $z$ as
\[
\operatorname{rank}(z)
   = \min \Bigl\{ r \in \mathbb{N} :
      \operatorname{rank}(z_{X|YZ}),
      \operatorname{rank}(z_{Y|XZ}),
      \operatorname{rank}(z_{Z|XY})  \Bigr\},
\]where $\operatorname{rank}_{\mathrm{mat}}(z_{X|YZ})$ denotes the rank of the tensor $z$
when viewed as a bipartite tensor in $X \otimes (Y \otimes Z)$, and similarly for the other bipartitions.

The main result of this section is the following.
\begin{theorem}\label{thm: upper_bound_mu_min}
    Given any two-prover quantum game $\mathcal G(\ket{\psi}, P)$ with corresponding half tensor $\eta_\mathcal G$ we have the following inequality: \[
\norm{\eta_{\mathcal G}}_{C_D (S_1(I_A,O_A) \otimes_\mu S_1(I_B,O_B))} \leq K\sqrt{\operatorname{rank}(\eta_\mathcal G)}\norm{\eta_\mathcal G}_{C_D(S_1(I_A,O_A) \otimes_{min} S_1(I_B,O_B))}, 
    \]where $K$ is a universal constant.
\end{theorem}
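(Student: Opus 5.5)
Let $r=\operatorname{rank}(\eta_{\mathcal G})$ and write $X=S_1(I_A,O_A)$, $Y=S_1(I_B,O_B)$. We treat separately the three bipartitions attaining the minimum in the definition of rank. In each case we compress $\eta_{\mathcal G}$ to a tensor living in spaces of dimension controlled by $r$, and then use a known comparison between $\otimes_{\min}$ and a one-sided Haagerup norm. Since $\|w\|_{X\otimes_\mu Y}\le \min\{\|w\|_{X\otimes_h Y},\|w^T\|_{Y\otimes_h X}\}$ by (\ref{Def_mu_norm}), and $C_D(\cdot)=C_D\otimes_{\min}(\cdot)$ preserves this comparison by the metric mapping property, it suffices in every case to prove
\[
\|\eta_{\mathcal G}\|_{C_D(X\otimes_h Y)}\le K\sqrt r\,\|\eta_{\mathcal G}\|_{C_D(X\otimes_{\min}Y)},
\]
or the analogous bound with the order of $X$ and $Y$ reversed.

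\emph{Case $r=\operatorname{rank}(z_{C_D|XY})$.} The range of $\eta_{\mathcal G}$ in $C_D$ is an $r$-dimensional subspace, and projecting onto it is a complete contraction on $C_D$. So we may assume $D=r$. Then $C_r\otimes_{\min}Z$ is $\sqrt r$-equivalent to $\ell_2^r(Z)$ and to $R_r\otimes_{\min}Z$. Writing $\eta=\sum_{t\le r}e_t\otimes w_t$, it therefore suffices to bound $\|w_t\|_{X\otimes_\mu Y}$ by $\|w_t\|_{X\otimes_{\min}Y}$ uniformly. This is exactly the Haagerup--Musat/Pisier--Shlyakhtenko inequality (\ref{Eq: GT_OS}) with constant $2$. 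Tracking the column structure, I expect to use the column-valued version, in the form $\|\sum e_t\otimes w_t\|_{C_r(S_1\otimes_\mu S_1)}\le 2\|\cdot\|_{R_r\otimes_{\min}\ldots}$, which costs only $\sqrt r$. The place to be careful here is whether (\ref{Eq: GT_OS}) extends to a $C_r$-valued or $R_r$-valued setting with a factor $\sqrt r$ and no worse.

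\emph{Case $r=\operatorname{rank}(z_{X|C_DY})$ (symmetrically for $Y$).} Here the $X$-leg of $\eta_{\mathcal G}$ lies in an $r$-dimensional subspace $E\subset S_1(I_A,O_A)$. Since $\otimes_{\min}$ and $\otimes_h$ are injective, we may compute both norms with $E$ in place of $X$. The key estimate is that for an $r$-dimensional operator space $E$ and any $Z$, the identity $E\otimes_{\min}Z\to E\otimes_h Z$ has cb norm at most $\sqrt r$, up to a universal constant. This should follow by factoring the identity of $E$ through $R_r$ or $C_r$ with $d_{cb}$-type constant $\sqrt r$: every $n$-dimensional operator space satisfies $d_{cb}(E,R_n)\le\sqrt n$, or more precisely the identity of $E$ factors as $E\to R_r\to E$ with product of cb norms at most $\sqrt r$. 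We then use the identity $R\otimes_{\min}Z=R\otimes_hZ$ from (\ref{Eq: indetity_min_h_2}), together with (\ref{Eq: indetity_min_h_1}), to pass from $\min$ to $h$ at no cost. Placing $E$ on the left with the row space gives the $h$-order needed; if $Y$ carries the low rank instead, we use $Z\otimes_{\min}C=Z\otimes_h C$ and the reversed Haagerup order.

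\emph{Main obstacle.} The main difficulty is the factorization step: we need a general, not merely generic, bound $\|E\otimes_{\min}Z\to E\otimes_hZ\|_{cb}\le K\sqrt{\dim E}$. I would first try Pisier's result that $\|\mathrm{id}:E\to R_n\|_{cb}\,\|\mathrm{id}:R_n\to E\|_{cb}\le\sqrt n$ for a suitable linear identification, via an Auerbach-type/Fritz John position in the cb setting. If that only yields the bound $n$, I would instead factor through $\ell_2^r$ equipped with the intermediate structure $OH$ or $R\cap C$, using $\min\le h$ inequalities with constant $\sqrt r$ on $R_r\cap C_r$.
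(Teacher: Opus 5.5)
Your first case is fine. Applying (\ref{Eq: GT_OS}) coordinatewise and paying $\sqrt r$ to pass between $C_r(\cdot)$ and an $\ell_2^r$- or $\ell_\infty^r$-sum is essentially the paper's argument, and no column-valued version of (\ref{Eq: GT_OS}) is needed.

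The gap is in your second case, which is the heart of the theorem. The estimate you call key, $\|\id: E\otimes_{\min}Z\to E\otimes_h Z\|_{cb}\le K\sqrt{\dim E}$, is false. Take $E=R_n$ and $Z=C_n$, both of which sit completely isometrically in $S_1$. Then $R_n\otimes_{\min}C_n=S_\infty^n$ while $R_n\otimes_hC_n=S_1^n$, so the cost is $n$; this is exactly the remark following Proposition~\ref{prop: min_mu_cb}. Your justification is also wrong on two counts.
\begin{itemize}
\item The bound $d_{cb}(E,R_n)\le\sqrt n$ fails already for $E=C_n$. If $a:C_n\to R$ and $b:R\to C_n$ satisfy $ba=\id$, then, since $CB(C,R)$ and $CB(R,C)$ are the Hilbert--Schmidt classes, $n=\tr(ba)\le\|a\|_{cb}\|b\|_{cb}$. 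The correct $\sqrt n$ statement is for $OH_n$.
\item Equation (\ref{Eq: indetity_min_h_2}) concerns $\hat\otimes$, not $\otimes_{\min}$. The identities that pass from $\min$ to $h$ at no cost are $C\otimes_{\min}Z=C\otimes_hZ$ and $Z\otimes_{\min}R=Z\otimes_hR$, so you have rows and columns swapped.
\end{itemize}

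Swapping them back does not help, because the reduction at the start of your proposal is itself false. Bounding a single one-sided Haagerup norm cannot work, even when the rank is attained on the $X$-leg and you may choose the order. Here is a counterexample.
\begin{itemize}
\item In $S_1(I_A,O_A)$ choose a copy $\{r_i\}$ of $R_n$ and a copy $\{c_i'\}$ of $C_n$ in disjoint corners, so that corner compressions are complete contractions isolating each.
\item In $S_1(I_B,O_B)$ choose, again in disjoint corners, copies $\{c_i\}$ of $C_n$ and $\{r_i'\}$ of $R_n$, and linearly independent $y_2,\dots,y_N$.
\item Put $w=\sum_i r_i\otimes c_i+\sum_i c_i'\otimes r_i'$ and $\eta=e_1\otimes w+\epsilon\sum_{k=2}^N e_k\otimes r_1\otimes y_k$, with $N\gg n$ and $\epsilon$ small.
\end{itemize}
The flattening ranks are $N$, $2n$ and $2n+N-1$. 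So $\operatorname{rank}(\eta)=2n$ is attained on the $X$-leg, and your first case would only give $\sqrt N$. We have $\|\eta\|_{C_D(X\otimes_{\min}Y)}\le 3$. But applying the coordinate functional at $e_1$ and then the corner compressions gives $\|\eta\|_{C_D(X\otimes_hY)}\ge\|\sum_ir_i\otimes c_i\|_{R_n\otimes_hC_n}=n$. Symmetrically, the reversed order also has norm at least $n$. Up to a positive scalar, every tensor in $C_D\otimes S_1(I_A,O_A)\otimes S_1(I_B,O_B)$ is a game half tensor once the referee space is large, and your argument uses no game structure anyway.

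The missing idea is to factor through $R\oplus C$ rather than through $R$ or $C$ alone. The $R$-part and the $C$-part are then sent into \emph{opposite} Haagerup orders. This is exactly what $\otimes_\mu$ permits, since $R\otimes_{\min}Z=R\otimes_\mu Z$ and $C\otimes_{\min}Z=C\otimes_\mu Z$.

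The paper builds $v:S_1\to S_1$ with $v\circ j=j$ and $\gamma_{R\oplus C}(v)\le C\sqrt n$, using four ingredients:
\begin{itemize}
\item John's theorem, giving $\id_E=ba$ with $\pi_2(a)\le\sqrt n$;
\item extension of $a$ to $S_1$ as a $2$-summing map, which is then completely bounded into $R_n\cap C_n$;
\item the noncommutative little Grothendieck theorem, applied to $j\circ b$;
\item Pisier's factorization through $R\oplus C$ for maps with exact domain.
\end{itemize}
It then splits $v=v_1+v_2$, with $v_1$ factoring through $R$ and $v_2$ through $C$, and handles each piece with the corresponding identity for $\mu$.
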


In \cite{Crann}, the rank of a game $\mathcal{G}$ is defined as the rank of the projection $P$ that defines the game, which coincides with the dimension of the column space $C_D$ introduced above. Let us denote this rank by $\operatorname{rank}_{\mathcal{G}}$.

Then, in \cite[Corollary 5.18]{Crann}, it is shown that
\[
\omega_{qc}(\mathcal{G}) \leq 4 \, \operatorname{rank}_{\mathcal{G}} \, \omega^*(\mathcal{G}),
\]
where $\omega_{qc}(\mathcal{G})$ is a value associated with the game $\mathcal{G}$, which is known to satisfy $\omega_{qc}(\mathcal{G}) \leq \omega_{NOS}(\mathcal{G})$. Since $\operatorname{rank}_{\mathcal{G}} \leq \operatorname{rank}(\eta_\mathcal{G})$, Theorem \ref{thm: upper_bound_mu_min} improves the upper bound shown in \cite{Crann} in two different ways.

On the one hand, our theorem implies that
\[
\omega_{NOS}(\mathcal{G}) \leq K^2 \, \operatorname{rank}_{\mathcal{G}} \, \omega^*(\mathcal{G}),
\]
so the previously shown upper bound also holds for the quantity $\omega_{NOS}(\mathcal{G})$. On the other hand, Theorem \ref{thm: upper_bound_mu_min} is stated in terms of a much more general definition of rank, not just the dimension of $C_D$, but also the dimension of the spaces $S_1$ as well as any possible subspace of them.

In fact, we will show in a forthcoming paper that the previous theorem is optimal up to logarithmic factors. Namely, for every (high enough) $n$ there exist tensors $z \in C \otimes S_1 \otimes S_1$ of rank $\operatorname{rank}(z) = n$ such that 
\[
\frac{\|\eta_{\mathcal{G}}\|_{C(S_1 \otimes_\mu S_1)}}{\|\eta_{\mathcal{G}}\|_{C (S_1 \otimes_{\min} S_1)}} \geq K \frac{\sqrt{n}}{\log^\beta n},
\]
for some universal positive constants $K$ and $\beta$.

In order to prove Theorem \ref{thm: upper_bound_mu_min}, we will need some technical results. First, recall that, given a linear map $T : X \to Y$ between operator spaces the \emph{$\gamma_{R \oplus C}$-norm} of $T$ is defined by
\[
\gamma_{R \oplus C}(U)
=\inf\left\{\, \|a\|_{\mathrm{cb}}\,\|b\|_{\mathrm{cb}} :
T = b \circ a,\;
a : X \to R \oplus C,\;
b : R \oplus C \to Y \,\right\},
\]
where the infimum is taken over all factorizations of $T$ through the direct sum of the row and column operator spaces $R$ and $C$. Here, the operator space $R \oplus C$ is defined as the direct sum
\[
R \oplus C = \{ (x, y) : x \in R,\, y \in C \},
\]
equipped with the matrix norms
\[
\|(x, y)\|_{M_n(R \oplus C)} = \max\{\, \|x\|_{M_n(R)},\, \|y\|_{M_n(C)} \,\},
\quad x \in M_n(R),\; y \in M_n(C).
\]

We shall also make use of the notion of a \textit{$2$-summing operator} and some of its basic properties. 
Let $T:X\rightarrow Y$ be a linear operator between Banach spaces. 
We say that $T$ is \textit{$2$-summing} whenever
\begin{align}\label{Equation p-summing}
\pi_2(T):=\|id\otimes T:\ell_2\otimes_\epsilon X\rightarrow \ell_2(X)\|<\infty.
\end{align}
It can be seen that $\pi_2$ defines a norm on the collection of all $2$-summing operators. 

The \textit{factorization theorem} for such operators \cite[Theorem 2.13]{DiJaTo95} states that 
$T:X\rightarrow Y$ is $2$-summing if and only if there exist a regular Borel probability measure 
$\mu$ on the unit ball of the dual space $B_{X^*}$, and a linear map 
$u:L_2(C(B_{X^*}), \mu)\rightarrow Y$ with $\|u\|=\pi_2(T)$ such that the following diagram commutes:
$$
\xymatrix@R=0.6cm@C=1.5cm{
  {C(B_{X^*})}\ar[r]^{i} &
  {L_2(\mu)\ar[d]^{u}} \\
  {X}\ar[u]^{j}\ar[r]^{T} & {Y}}
$$
Here, $j:X\hookrightarrow C(B_{X^*})$ denotes the canonical embedding, and 
$i:C(B_{X^*})\rightarrow L_p(C(B_{X^*}), \mu)$ is the identity map. 

From this factorization result one can easily deduce that $2$-summing operators enjoy the 
\textit{extension property} \cite[Theorem 4.15]{DiJaTo95}: 
given a $2$-summing operator $T:X\rightarrow Y$ and any isometric embedding $j:X\hookrightarrow \tilde{X}$, 
there exists an extension $\tilde{T}:\tilde{X}\rightarrow Y$ (so that $T=\tilde{T}\circ j$) satisfying 
$\pi_2(T)=\pi_2(\tilde{T})$.

A classical result, known as the \textit{little Grothendieck theorem}, asserts that every bounded linear map 
$T:A\rightarrow L_2(\mu)$, where $A$ is a commutative C$^*$-algebra, satisfies 
$\pi_2(T)\leq K_{LG}\|T\|$. 
Here, $K_{LG}=\sqrt{\pi/2}$ in the real case, and $K_{LG}=2/\sqrt{\pi}$ in the complex case.

\begin{lemma}\label{lemma: 2-sum}
	Let $H$ be a Hilbert space endowed with one of the following operator space structures: $R$, $C$ or $R\cap C$ and let $A$ be a commutative $C^*$-algebra. Then, for any linear map $T:A\rightarrow H$ we have $$\|T\|_{cb}=\pi_2(T)\leq K_{GL}\|T\|.$$
	\end{lemma}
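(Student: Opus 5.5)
The plan is to prove the two inequalities $\|T\|_{cb}\le \pi_2(T)$ and $\pi_2(T)\le \|T\|_{cb}$ by hand. The remaining bound $\pi_2(T)\le K_{LG}\|T\|$ is exactly the little Grothendieck theorem recalled just before the statement, so nothing more is needed for it. Throughout I identify $A=C(\Omega)$ for a compact (or locally compact) space $\Omega$ via Gelfand duality. The key elementary fact is that for $x_1,\dots,x_m\in A$,
\[
\sup_{\varphi\in B_{A^*}}\Big(\sum_k|\varphi(x_k)|^2\Big)^{1/2}=\Big\|\sum_k |x_k|^2\Big\|_\infty^{1/2} .
\]
The supremum over $B_{A^*}$ may be restricted to point evaluations by convexity, which gives this identity. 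The right-hand side equals both the column norm $\|\sum_k x_k^*x_k\|^{1/2}$ and the row norm $\|\sum_k x_kx_k^*\|^{1/2}$ of $(x_k)$ in $M_{m,1}(A)$, respectively $M_{1,m}(A)$, precisely because $A$ is commutative.

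\emph{Step 1: $\pi_2(T)\le\|T\|_{cb}$.} Take $x_1,\dots,x_m$ with $\|\sum_k|x_k|^2\|\le 1$. In the column case, the column $(Tx_k)_k\in M_{m,1}(H_c)$ is just a longer column vector. Its norm is therefore $(\sum_k\|Tx_k\|^2)^{1/2}$, and this is at most $\|T\|_{cb}\,\|(x_k)\|_{M_{m,1}(A)}\le\|T\|_{cb}$. In the row case I would use instead the row $(Tx_k)_k\in M_{1,m}(H_r)$, whose norm is again the $\ell_2$-sum of the $\|Tx_k\|$, together with the row norm in $A$. For $R\cap C$ the matrix norms dominate those of $R$ (and of $C$), so either argument applies. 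Taking the supremum over such families gives $\pi_2(T)\le\|T\|_{cb}$.

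\emph{Step 2: $\|T\|_{cb}\le\pi_2(T)$.} By the Pietsch factorization recalled above (applied directly on $\Omega$, which is possible since $A=C(\Omega)$), there is a probability measure $\mu$ such that $T=u\circ i$. Here $i:C(\Omega)\to L_2(\mu)$ is the formal identity and $\|u:L_2(\mu)\to H\|=\pi_2(T)$. I then equip $L_2(\mu)$ with the same structure ($C$, $R$ or $R\cap C$) as $H$. Since these structures are homogeneous, $\|u\|_{cb}=\|u\|$. It then remains to show $\|i\|_{cb}\le1$. For $x=(x_{ij})\in M_n(C(\Omega))$, the column norm of $(i(x_{ij}))$ is
\[
\Big\|\int_\Omega x(\omega)^*x(\omega)\,d\mu(\omega)\Big\|_{M_n}^{1/2}\le \sup_\omega\|x(\omega)\|_{M_n}=\|x\|_{M_n(A)},
\]
because $x(\omega)^*x(\omega)\le\|x\|^2\,1$ pointwise and $\mu$ is a probability measure. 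The row case is identical with $xx^*$, and the $R\cap C$ case is the maximum of the two.

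The only delicate point I anticipate is bookkeeping rather than depth. One must check that the matrix norms on $L_2(\mu)_c$ and $L_2(\mu)_r$ are given by the integral formulas above, which follows from the Hilbert-space description $M_n(H_c)=B(\ell_2^n,\ell_2^n\otimes H)$. One must also check that the Pietsch measure can be placed on $\Omega$ rather than on $B_{A^*}$. The latter holds because the relevant weak-$\ell_2$ norm only sees point evaluations, by the identity above.
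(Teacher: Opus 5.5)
Your proof is correct. The paper gives no argument for this lemma: it states it as a standard consequence of the Pietsch factorization and the little Grothendieck theorem recalled just before, and points to the literature.

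Your two steps supply exactly the verification being left to the reader:
\begin{itemize}
\item $\pi_2(T)\le\|T\|_{cb}$: you test $T$ on a single column or row. Commutativity of $A$ makes the column norm, the row norm and the weak-$\ell_2$ norm of $(x_k)$ all equal to $\|\sum_k|x_k|^2\|^{1/2}$.
\item $\|T\|_{cb}\le\pi_2(T)$: you factor $T=u\circ i$, with $\|i\|_{cb}\le 1$ and $\|u\|_{cb}=\|u\|$ by homogeneity.
\end{itemize}

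One simplification in Step~2: you do not need to move the Pietsch measure from $B_{A^*}$ to $\Omega$. The canonical embedding $j:A\to C(B_{A^*})$ is automatically a complete contraction, because every bounded map into a commutative $C^*$-algebra has cb-norm equal to its norm. So the factorization $T=u\circ i\circ j$ recalled in the paper already gives $\|T\|_{cb}\le\|u\|\,\|i\|_{cb}\,\|j\|_{cb}\le\pi_2(T)$. Your computation of $\|i\|_{cb}\le 1$ goes through verbatim with $\Omega$ replaced by $B_{A^*}$.

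This shows that $\|T\|_{cb}\le\pi_2(T)$ holds for an arbitrary operator space domain, with constant $1$. That is exactly how the paper uses it in the next lemma, for $\tilde a:S_1\to R_n\cap C_n$. Commutativity of $A$ is needed only for your Step~1 and for the little Grothendieck bound $\pi_2(T)\le K_{LG}\|T\|$.
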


There also exists a noncommutative analogue of this theorem, first proved in \cite{Pisier78} and later refined in 
\cite{Haagerup85}. 
This result, commonly referred to as the \textit{noncommutative little Grothendieck theorem}, 
implies, in particular, that for any (non-necessarily commutative) C$^*$-algebra $A$ and any Hilbert space $H$, 
every bounded linear operator $T:H\rightarrow A^*$ satisfies
\begin{align}\label{Eq: NC_LG}
\|T:R\cap C\rightarrow A^*\|_{cb}\leq 2\|T\|.
\end{align}

The reader may find a more detailed discussion on $p$-summing operators and the Little Grothendieck Theorem, in the form in which we are using them, in \cite[Section 2]{junge22}.

The first lemma we need to prove is the following.
\begin{lemma}
Let $E$ be an $n$-dimensional operator space and let $j:E\iny S_1$ be a complete isometry. Then, there exists a linear map $v:S_1\to S_1$ such that $\gamma_{R\oplus C}(v)\leq C\sqrt{n}$ and $v\circ j=j$.
\end{lemma}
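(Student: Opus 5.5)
Since $v\circ j=j$ means $v$ restricts to the identity on $F:=j(E)$, the natural plan is to produce $v$ as a projection $P:S_1\to F$ that factors through $R\oplus C$ with good control, and then set $v:=P$ viewed as a map into $S_1$. Concretely, $\gamma_{R\oplus C}(P)\le\|a\|_{cb}\|b\|_{cb}$ where $P=b\circ a$ with $a:S_1\to R\oplus C$ and $b:R\oplus C\to F\subset S_1$. I would build $a$ by dualizing: maps $S_1\to R\oplus C$ correspond (via $S_1^*=B(\ell_2)$ and $(R\oplus C)^*=C\oplus_1 R$ type duality) to maps $R\cap C\to B(\ell_2)$-type objects, and the point is to exploit the noncommutative little Grothendieck theorem \eqref{Eq: NC_LG} together with Lemma \ref{lemma: 2-sum} in dual form.

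The key steps, in order, are as follows. First, choose a good Euclidean structure on $F$. By John's theorem (or the Kadec--Snobar/Lewis argument) there is an isomorphism $w:\ell_2^n\to F$ with $\|w\|\,\|w^{-1}\|\le\sqrt n$, and more precisely I would choose it via the Lewis/$\pi_2$-norm normalization: take $u:F\to\ell_2^n$ with $\pi_2(u^*)$-type trace duality $\pi_2(u)\,\pi_2((u^{-1})^*)=n$ suitably balanced, so that $\pi_2(u)\le\sqrt n$ and $\|u^{-1}\|\le 1$. Second, extend $u$ from $F\subset S_1$ to all of $S_1$ using the extension property of $2$-summing maps: get $\tilde u:S_1\to\ell_2^n$ with $\pi_2(\tilde u)=\pi_2(u)\le\sqrt n$ and $\tilde u|_F=u$. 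Third, upgrade the Banach $\pi_2$ bound to a completely bounded bound into $R\oplus C$: by the Pisier--Shlyakhtenko/Haagerup--Musat type principle (the operator-space version of little Grothendieck, dual to \eqref{Eq: NC_LG}), a $2$-summing map from the predual of a C$^*$-algebra into a Hilbert space factors with $\|\tilde u:S_1\to R\oplus C\|_{cb}\lesssim\pi_2(\tilde u)$, possibly after splitting the Hilbert space into a row and a column part. Set $a:=\tilde u$ with $\|a\|_{cb}\le C\sqrt n$. Fourth, define $b:=u^{-1}\circ(\text{quotient }R\oplus C\to\ell_2^n)$ landing in $F\subset S_1$; one needs $\|b\|_{cb}\le C$, which I would obtain since $\|u^{-1}\|\le1$ and maps from $R\oplus C$ into the maximal-type space $S_1$ restricted to an $n$-dimensional image can be controlled via \eqref{Eq: NC_LG} applied to $(u^{-1})^*:F^*\to\ell_2^n$ (with $F^*$ a quotient of $B(\ell_2)$). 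Then $v=b\circ a$ satisfies $v\circ j=u^{-1}u\,j=j$ and $\gamma_{R\oplus C}(v)\le C\sqrt n$.

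The main obstacle is the balancing in steps one, three and four: one must simultaneously have a $\sqrt n$ bound on one side and an $O(1)$ completely bounded bound on the other, where the cb-ness (not just boundedness) has to come from a noncommutative little Grothendieck inequality. I would first try to get the $O(1)$ side by dualizing $b$ to a map $F^*\to R\cap C$ from a quotient of $B(\ell_2)$, lifting to $B(\ell_2)$ and applying the dual form of \eqref{Eq: NC_LG}; if the normalization from John's ellipsoid does not fit, I would instead use the Lewis-type position given by Pietsch factorization of the identity of $F$, which guarantees $\pi_2(u)\pi_2(u^{-1*})\le n$ and lets one split the $\sqrt n$ symmetrically as $n^{1/4}\cdot n^{1/4}$ if needed.
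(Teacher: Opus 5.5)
Your first two steps are the paper's (John's theorem, then the extension property of $2$-summing maps). The conclusion of your third step is also true, but not for the reason you give. No Grothendieck-type theorem and no row/column splitting is needed on that side. The Pietsch factorization $S_1\to C(K)\to L_2(\mu)\to\ell_2^n$ has a completely contractive first map, since the target is a commutative C$^*$-algebra. Combined with Lemma~\ref{lemma: 2-sum}, it gives $\|\tilde u:S_1\to R_n\cap C_n\|_{cb}\le\pi_2(\tilde u)\le\sqrt n$. So $a$ can simply be the diagonal map $s\mapsto(\tilde u(s),\tilde u(s))$.

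The genuine gap is your fourth step. What \eqref{Eq: NC_LG} gives is $\|j\circ u^{-1}:R_n\cap C_n\to S_1\|_{cb}\le 2$. That controls $b$ only on the diagonal copy of $\ell_2^n$ inside $R\oplus C$, whereas $b$ must be completely bounded on all of $R\oplus C$, i.e.\ separately on $R$ and on $C$. Composing $u^{-1}$ with a left inverse $q(x,y)=q_1x+q_2y$ ($q_1+q_2=I$) of the diagonal does not achieve this. For instance, take $F=\mathrm{span}\{e_{1i}+e_{i1}\}\subset S_1$. The compressions $x\mapsto e_{11}x$ and $x\mapsto xe_{11}$ are completely contractive on $S_1$, with ranges completely isometric to $C_n$ and $R_n$. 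They show that $F$ is, with constant $2$, a completely isomorphic copy of $R_n\cap C_n$. Hence
$\|b\|_{cb}\ge\frac12\max\{\|q_1:R_n\to C_n\|_{cb},\|q_2:C_n\to R_n\|_{cb}\}=\frac12\max\{\|q_1\|_{S_2},\|q_2\|_{S_2}\}\ge\frac{\sqrt n}{4}$,
so the claimed $\|b\|_{cb}\le C$ is false. Your fallback of ``dualizing $b$ to a map $F^*\to R\cap C$'' points the wrong way. The noncommutative little Grothendieck theorem makes bounded maps from a C$^*$-algebra to $\ell_2$ completely bounded into $R+C$, not into $R\cap C$. Rebalancing Banach constants via a Lewis position does not touch cb-ness at all.

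What is missing is a factorization of $j\circ u^{-1}:R_n\cap C_n\to S_1$ through $R\oplus C$ with $O(1)$ constants, in which the second factor may take values in $S_1$ rather than in $F$. So $v$ will in general not be a projection onto $F$, and the statement does not ask for one. The paper gets this from the Pisier--Shlyakhtenko theorem \cite[Corollary~0.7]{Pi02}, using that $R_n\cap C_n$ is exact and $S_1=K(\ell_2)^*$. It writes $j\circ u^{-1}=\beta\circ\alpha$ with $\alpha:R_n\cap C_n\to R\oplus C$ and $\beta:R\oplus C\to S_1$ completely bounded, and takes $v=\beta\circ\alpha\circ\tilde u$.

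Your dual idea can also be repaired, but it needs the two-state form of the noncommutative little Grothendieck theorem \cite{Pisier78,Haagerup85} rather than \eqref{Eq: NC_LG}. Apply it to the predual map $w:K(\ell_2)\to\ell_2^n$ of $j\circ u^{-1}$; work with $K(\ell_2)$ rather than $B(\ell_2)$ to avoid normality issues. The inequality $\|w(x)\|^2\lesssim f_1(x^*x)+f_2(xx^*)$ splits $w=w_c+w_r$, with $w_c$ completely bounded into $C_n$ and $w_r$ into $R_n$. Dualizing gives $j\circ u^{-1}=T_R+T_C$, with $T_R:R_n\to S_1$ and $T_C:C_n\to S_1$ completely bounded by universal constants. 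Then $a=(\tilde u,\tilde u)$ and $b(x,y)=T_Rx+T_Cy$ yield $v=j\circ u^{-1}\circ\tilde u$, with $v\circ j=j$ and $\gamma_{R\oplus C}(v)\lesssim\sqrt n$. Note that $T_R$ and $T_C$ individually need not take values in $F$, which is why you should not insist that $b$ land in $F$.
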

\begin{proof}
According to John's Theorem \cite[Proposition 3.8]{Pis89} there exist linear maps $a:E\to \ell_2^n$ such that $\pi_2(a)\leq \sqrt{n}$ and $\|b:\ell_2^n\to E\|\leq 1$ satisfying $b\circ a=\id_E$.

Now, because of the extension property of the $2$-summing operators, we can extend the map $a$ to $\tilde{a}:S_1\to \ell_2^n$ so that $\pi_2(\tilde{a})\leq \sqrt{n}$ (note that $\tilde{a}\circ j=a$). Moreover, since $2$-summing operators factor through an $L_\infty$-space, one can deduce that $\|\tilde{a}:S_1\to R_n\cap C_n\|_{cb}\leq C\sqrt{n}$ for a given universal constant $C$. 

Since $j:E\iny S_1$ is a complete isometry,  according to Eq. (\ref{Eq: NC_LG}), $$\|j\circ b:R_n\cap C_n\to  S_1\|_{cb}\leq C'\|j\circ b:\ell_2^n\to  S_1\|\leq C'$$ for a certain universal constant $C'$. Moreover, the fact that $R_n\cap C_n$ is an exact operator space guarantees (\cite[Corollary 0.7]{Pi02}) that $$\gamma_{R\oplus C}(j\circ b:R_n\cap C_n\to  S_1)\leq C''\|j\circ b:R_n\cap C_n\to  S_1\|_{cb}\leq C'''.$$ That is, there exist linear maps $u$ and $v$ such that  $\|u:R_n\cap C_n\to  R\oplus C\|_{cb}\leq C'''$, $\|v:R\oplus C\to S_1 \|_{cb}\leq 1$ and such that $j\circ b=v\circ u$. Hence, by defining $T=u\circ\tilde{a}$ and $S=v$ we have that $v=S\circ T:S_1\to S_1$ is a linear map which factors through $R\oplus C$, $\|S:R\oplus C\to S_1 \|_{cb}\leq 1$, $\|T:S_1\to R\oplus C\|_{cb}\leq \|u:R_n\cap C_n\to  R\oplus C\|_{cb}\|\tilde{a}:S_1\to R_n\cap C_n\|_{cb}\leq C''''\sqrt{n}$ and $$v\circ j=S\circ T\circ j=v \circ u\circ a=j\circ b\circ a=j\circ \id_E=j.$$  This concludes the proof.
\end{proof}

\begin{prop}\label{prop: min_mu_cb}
Let $E$ be an $n$-dimensional operator space, let $j:E\iny S_1$ be a complete isometry and let $X$ be any operator space. Then, $$\|j\otimes \id_X:E\otimes_{min} X\to S_1\otimes_{\mu} X\|_{cb}\leq C\sqrt{n},$$where $C$ is a universal constant.
\end{prop}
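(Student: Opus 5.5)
The plan is to combine the previous lemma with the fact that maps factoring through $R\oplus C$ tensorize well from $\otimes_{min}$ into $\otimes_\mu$. By the previous lemma there is $v=S\circ T:S_1\to S_1$ with $v\circ j=j$, $\|T:S_1\to R\oplus C\|_{cb}\leq C\sqrt{n}$ and $\|S:R\oplus C\to S_1\|_{cb}\leq 1$. Then
\[
j\otimes \id_X=(S\otimes\id_X)\circ (T\circ j\otimes \id_X).
\]
By the metric mapping property of $\otimes_{min}$, the map $T\circ j\otimes \id_X: E\otimes_{min}X\to (R\oplus C)\otimes_{min} X$ has cb-norm at most $\|T\circ j\|_{cb}\leq C\sqrt{n}$. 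It therefore suffices to show that $S\otimes \id_X:(R\oplus C)\otimes_{min}X\to S_1\otimes_\mu X$ has cb-norm bounded by a universal constant.

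For this, write $S=S_R\circ P_R+S_C\circ P_C$, where $P_R,P_C$ are the coordinate projections of $R\oplus C$ onto $R$ and $C$, and $S_R,S_C$ are the restrictions of $S$ to the two summands. All four maps are complete contractions: the projections because the norm on $R\oplus C$ is the max norm, and the restrictions because the inclusions $R,C\hookrightarrow R\oplus C$ are complete isometries. Hence $(R\oplus C)\otimes_{min}X\to (R\otimes_{min}X)\oplus(C\otimes_{min}X)$ is completely contractive on each coordinate. Next we use (\ref{Eq: indetity_min_h_1}). We have $C\otimes_{min}X=C\otimes_h X$ completely isometrically, and by commutativity of $\min$, $R\otimes_{min}X=X\otimes_{min}R=X\otimes_h R$, with the flip. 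The metric mapping property for $\otimes_h$ then gives complete contractions
\[
S_C\otimes\id_X: C\otimes_h X\to S_1\otimes_h X,\qquad \id_X\otimes S_R: X\otimes_h R\to X\otimes_h S_1 .
\]
By the definition (\ref{Def_mu_norm}) of $\otimes_\mu$ (equivalently the quotient map $Q$ in (\ref{mu_quotient})), both $S_1\otimes_h X$ and, after flipping, $X\otimes_h S_1$ map completely contractively into $S_1\otimes_\mu X$. Summing the two pieces yields $\|S\otimes\id_X:(R\oplus C)\otimes_{min}X\to S_1\otimes_\mu X\|_{cb}\leq 2$. Combining the two bounds gives $\|j\otimes\id_X\|_{cb}\leq 2C\sqrt{n}$.

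The main point requiring care is the bookkeeping at the matrix level. One must check that the decomposition $w=w_1+w_2$ obtained from the row and column parts, for $w\in M_m(E\otimes X)$, satisfies $\|w_1\|_{M_m(S_1\otimes_h X)}+\|w_2^T\|_{M_m(X\otimes_h S_1)}\leq 2C\sqrt n\,\|w\|_{M_m(E\otimes_{min}X)}$. This is exactly what composing the complete contractions above provides, since each step is stated at all matrix levels. The step where a genuine idea enters is the previous lemma (through $\gamma_{R\oplus C}$); the present proposition should follow from it essentially formally.
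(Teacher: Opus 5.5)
Your proof is correct and takes essentially the paper's route. Both arguments use the preceding lemma to replace $j$ by $v\circ j$ with $v$ factoring through $R\oplus C$, and then use the identities $C\otimes_{\min}X=C\otimes_h X$ and $X\otimes_{\min}R=X\otimes_h R$ to pass from $\min$ to $\mu$ on the row and column parts. The differences are cosmetic. The paper writes $v=v_1+v_2$ with $\gamma_R(v_1)+\gamma_C(v_2)\lesssim\sqrt n$ and invokes $R\otimes_{\min}X=R\otimes_\mu X$ and the metric mapping property of $\mu$. You instead split the second factor as $S=S_RP_R+S_CP_C$ and route through $\otimes_h$, which makes the paper's unproved splitting step explicit at the cost of a harmless factor $2$.
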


\begin{proof}
First of all, let us note that Eqs.~(\ref{Eq: indetity_min_h_1}) and (\ref{Eq: indetity_min_h_2}) imply that 
\[
C \otimes_{\min} X = C \otimes_{\mu} X \quad \text{and} \quad X \otimes_{\min} R = X \otimes_{\mu} R
\] 
completely isometrically. Moreover, since these two norms are commutative, it follows that 
\begin{align}\label{Eq: min_mu_R}
R \otimes_{\min} X = R \otimes_{\mu} X
\end{align} 
completely isometrically as well.

Now, we want to prove that 
$$\|j\otimes \id:M_d(E\otimes_{min} X)\to M_d(S_1\otimes_{\mu} X)\|\leq C\sqrt{n}$$ for every $d$, where $\id=\id_{M_d}\otimes \id_X$. To this end, let us consider an element $t\in M_d(E\otimes_{min} X)$ and show that $$\|(j\otimes \id)(t)\|_{M_d(S_1\otimes_{\mu} X)}\leq C\sqrt{n}\|t\|_{M_d(E\otimes_{min} X)}.$$ According to the previous lemma,   $$\|(j\otimes \id)(t)\|_{M_d(S_1\otimes_{\mu} X)}=\|((v\circ j)\otimes \id)(t)\|_{M_d(S_1\otimes_{\mu} X)},$$ where $v:S_1\to S_1$ is such that $\gamma_{R\oplus C}(v)\leq C\sqrt{n}$.

Now, the estimate $\gamma_{R\oplus C}(v)\leq C\sqrt{n}$ implies that $v=v_1+v_2$ such that $$\gamma_{R}(v_1:S_1\rightarrow S_1)+\gamma_{C}(v_1:S_1\rightarrow S_1)\leq C\sqrt{n}.$$ Therefore, 
$$\|((v\circ j)\otimes \id)(t)\|_{M_d(S_1\otimes_{\mu} X)}\leq \|((v_1\circ j)\otimes \id)(t)\|_{M_d(S_1\otimes_{\mu} X)}+\|((v_2\circ j)\otimes \id)(t)\|_{M_d(S_1\otimes_{\mu} X)}.$$

Let us first prove that 
\begin{align}\label{Eq: v__com_j}
\|((v_1\circ j)\otimes \id)(t)\|_{M_d(S_1\otimes_{\mu} X)}\leq \gamma_{R}(v_1:S_1\rightarrow S_1)\|t\|_{M_d(E\otimes_{min} X)},
\end{align}which will follow from the estimate $$\|(v_1\circ j)\otimes \id:M_d(E\otimes_{min} X)\to M_d(S_1\otimes_{\mu} X)\|\leq  \gamma_{R}(v_1:S_1\rightarrow S_1).$$ Let us write $v_1=w_2\circ w_1$, where $w_1$ and $w_2$ are linear maps satisfying  $$\|w_1:S_1\to R\|_{cb}\|w_2:R\to S_1\|_{cb}\leq \gamma_{R}(v_1:S_1\rightarrow S_1)$$ so that we can decompose $$(v_1\circ j)\otimes \id= (w_2\otimes \id)\circ(w_1\otimes \id)\circ (j\otimes \id).$$ We know that 
\begin{align*}
&\|j\otimes \id:M_d(E\otimes_{min} X)\to M_d(S_1\otimes_{min} X)\|\leq 1,\\&
\|w_1\otimes \id:M_d(S_1\otimes_{min} X)\to M_d(R\otimes_{\mu} X)\|\leq \|w_1:S_1\to R\|_{cb},\\&
\|w_2\otimes \id:M_d(R\otimes_{\mu} X)\to M_d(S_1\otimes_{\mu} X)\|\leq \|w_2:R\to S_1\|_{cb},
\end{align*}where in the second estimate we have used Eq. (\ref{Eq: min_mu_R}). It follows that 

\begin{align*}
\|(v_1\circ j)\otimes \id:M_d(E\otimes_{min} X)\to M_d(S_1\otimes_{\mu} X)\|&\leq \|w_1:S_1\to R\|_{cb}\|w_2:R\to S_1\|_{cb}\\&\leq \gamma_{R}(v_1:S_1\rightarrow S_1).
\end{align*} So we obtain the desired estimate (\ref{Eq: v__com_j}).

A completely analogous proof shows that $$\|((v_2\circ j)\otimes \id)(t)\|_{M_d(S_1\otimes_{\mu} X)}\leq \gamma_{C}(v_2:S_1\rightarrow S_1)\|t\|_{M_d(E\otimes_{min} X)}.$$

Hence, we conclude that
\begin{align*}
\|((v\circ j)\otimes \id)(t)\|_{M_d(S_1\otimes_{\mu} X)}&\leq \Big(\gamma_{R}(v_1:S_1\rightarrow S_1)+ \gamma_{C}(v_2:S_1\rightarrow S_1)\Big)\|t\|_{M_d(E\otimes_{min} X)}\\&\leq C\sqrt{n}\|t\|_{M_d(E\otimes_{min} X)},
\end{align*}as we wanted.

\end{proof}

\begin{remark}
Note that the previous result is not true if one replaces the $\mu$ norm by the Haagerup tensor norm. Indeed, in that case, if we set $E=R_n$ and $X=C_n$, the injectivity of the Haagerup tensor norm implies
\begin{align*}
\|j\otimes \id_{C_n}:R_n\otimes_{min} C_n\to S_1^n\otimes_{h} C_n\|_{cb}&=\|\id_{R_n}\otimes \id_{C_n}:R_n\otimes_{min} C_n\to R_n\otimes_{h} C_n\|_{cb}\\&\geq\|\id_{R_n}\otimes \id_{C_n}:R_n\otimes_{min} C_n\to R_n\otimes_{h} C_n\|\geq n,
\end{align*}where the last estimate follows from the isometric identifications $R_n\otimes_{min} C_n=C_n\otimes_{min} R_n=S_\infty^n$ and $R_n\otimes_{h} C_n=S_1^n$.
\end{remark}

Finally, Theorem \ref{thm: upper_bound_mu_min} immediately follows from the next corollary.
\begin{cor}
Let $z\in C\otimes S_1\otimes S_1$ be a  tensor of rank equal $n$. Then, $$\|z\|_{C(S_1\otimes_\mu S_1)}\leq C\sqrt{n}\|z\|_{C (S_1\otimes_{min} S_1)},$$where $C$ is a universal constant. 
\end{cor}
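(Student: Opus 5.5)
The corollary is deduced from Proposition~\ref{prop: min_mu_cb}, with a case split according to which bipartition of $z$ realizes the minimal rank $n$. In each case the plan is to find an $n$-dimensional subspace $E$ of one factor that contains $z$, and then apply the proposition with the remaining two factors playing the role of $X$.

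\emph{Case 1: the first $S_1$ factor.} Suppose $\operatorname{rank}(z_{S_1|C S_1})=n$, where the bipartition separates the first $S_1$ factor from $C\otimes S_1$. Then $z\in C\otimes E\otimes S_1$ for some $n$-dimensional subspace $E\subset S_1$, namely the range of $z$ viewed as a map into the first factor. Equip $E$ with the operator space structure induced from $S_1$, so that the inclusion $j:E\iny S_1$ is a complete isometry. By injectivity of the minimal norm,
\[
\|z\|_{C(E\otimes_{min}S_1)}=\|z\|_{C(S_1\otimes_{min}S_1)}.
\]
Now apply Proposition~\ref{prop: min_mu_cb} with $X=S_1$. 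The map $j\otimes\id_X:E\otimes_{min}S_1\to S_1\otimes_\mu S_1$ has cb-norm at most $C\sqrt n$. Tensoring a cb map with $\id_C$ in the minimal tensor product preserves the cb-norm, so $\id_C\otimes(j\otimes\id)$ has norm at most $C\sqrt n$ from $C(E\otimes_{min}S_1)$ to $C(S_1\otimes_\mu S_1)$. This gives the bound.

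\emph{Case 2: the second $S_1$ factor.} Here I use the commutativity of $\otimes_\mu$, which holds by its symmetric definition~(\ref{Def_mu_norm}), together with the commutativity of $\otimes_{min}$. After flipping the two $S_1$ factors, Case 1 applies verbatim.

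\emph{Case 3: the column factor $C$.} Suppose $\operatorname{rank}(z_{C|S_1S_1})=n$, so that $z\in F\otimes S_1\otimes S_1$ for some $n$-dimensional subspace $F\subset C$, i.e.\ $F=C_n$. This is the step I expect to be the main obstacle, because the $\sqrt n$ has to come from a different source than Proposition~\ref{prop: min_mu_cb}. The plan is to view $z$ as a map $u:R_n\to S_1\otimes_{min}S_1$ with $\|z\|_{C_n(Y)}=\|u:R_n\to Y\|_{cb}$ for each tensor norm $Y$. The identity $\id:S_1\otimes_{min}S_1\to S_1\otimes_\mu S_1$ is bounded by the operator space Grothendieck estimate~(\ref{Eq: GT_OS}), so $\|u:R_n\to S_1\otimes_\mu S_1\|\le 2\|u\|\le 2\|u\|_{cb}$. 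Any bounded map defined on an $n$-dimensional operator space satisfies $\|u\|_{cb}\le\sqrt n\,\|u\|$ when the domain is $R_n$ (or $C_n$); I will use the standard bound $\|\id:R_n\to\max(\ell_2^n)\|\lesssim\sqrt n$, or equivalently the fact that $\ell_2^n$ with its row structure is $\sqrt n$-close to $\max$. Together these yield $\|z\|_{C_n(S_1\otimes_\mu S_1)}\le 2\sqrt n\,\|z\|_{C_n(S_1\otimes_{min}S_1)}$ up to a universal constant. If this bound is not available exactly in that form, a fallback is to write $z=\sum_{k\le n}e_k\otimes z_k$ and estimate crudely using $\|z\|_{C(Y)}\ge\max_k\|z_k\|$.

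Finally, Theorem~\ref{thm: upper_bound_mu_min} follows by applying the corollary to $z=\eta_{\mathcal G}$, and the consequence $\omega_{SNOS}(\mathcal G)\le K^2\operatorname{rank}(\eta_{\mathcal G})\,\omega^*(\mathcal G)$ follows via Theorems~\ref{thm: sub-no-signalling, half tensor} and~\ref{thm: entangled value, half-tensor}.
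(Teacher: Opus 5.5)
Your proposal is correct and follows the paper's proof. Cases 1--2 are the same reduction to Proposition~\ref{prop: min_mu_cb} via injectivity of $\otimes_{min}$; your flip argument is what the paper means by ``the argument works in the same way''. In Case 3, your bound $\|u:R_n\to Y\|_{cb}\le\sqrt n\,\|u\|$ together with the scalar estimate~(\ref{Eq: GT_OS}) is the same mechanism as the paper's factorization $C_n\to\ell_\infty^n\to C_n$ with cb-norms $1$ and $\sqrt n$, and your fallback is exactly the paper's argument. (That $\sqrt n$ bound is elementary: the column norm of $(u(e_k))_k$ is at most $(\sum_k\|u(e_k)\|^2)^{1/2}\le\sqrt n\,\|u\|$, so the detour through $\max(\ell_2^n)$, where you meant the cb-norm rather than the norm, is unnecessary.)
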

\begin{proof}
Let us start assuming that $z=(j\otimes \id)(t)$ for a certain $t\in C\otimes E\otimes S_1$, where $E$ is an $n$-dimensional operator space,  $j:E\iny S_1$ is complete isometry and $\id=\id_C\otimes \id_{S_1}$. According to  Proposition \ref{prop: min_mu_cb} we have 
\begin{align*}
&\|z\|_{C\otimes_{min} (S_1\otimes_\mu S_1)}\leq C\sqrt{n}\|t\|_{C\otimes_{min} (E\otimes_{min} S_1)}=C\sqrt{n}\|z\|_{C\otimes_{min} (S_1\otimes_{min} S_1)},
\end{align*}where in the last equality we have used the injectivity of the minimal norm and the fact that $j$ is a complete isometry.

Of course this argument works in the same way for $t\in C\otimes S_1\otimes E$.

Let us now assume that $u=(j\otimes \id)(t)$ for a certain $t\in E\otimes S_1\otimes S_1$, where $E$ is an $n$-dimensional operator space,  $j:E\iny C$ is complete isometry and $\id=\id_{S_1}\otimes \id_{S_1}$. Since $C$ is an homogeneous operator space (see \cite[Section 9.2]{Pi03}), we can assume that $u\in C_n\otimes S_1 \otimes S_1$ and we must show that 
$$\|u\|_{C_n\otimes_{min} (S_1\otimes_\mu S_1)}\leq C\sqrt{n}\|t\|_{C_n\otimes_{min} (S_1\otimes_{min} S_1)}.$$

Now, it is well known and easy to see that $\|\id:C_n\to \ell_\infty^n\|_{cb}=\|\id:\ell_2^n\to \ell_\infty^n\|=1$ and $\|\id:\ell_\infty^n\to C_n\|_{cb}=\sqrt{n}$. Then, 
\begin{align*}
\|u\|_{C_n\otimes_{min} (S_1\otimes_\mu S_1)}&\leq \sqrt{n}\|u\|_{\ell_\infty^n\otimes_{min} (S_1\otimes_\mu S_1)}=\sqrt{n}\|u\|_{\ell_\infty^n(S_1\otimes_\mu S_1)}\\&\leq 2\sqrt{n}\|u\|_{\ell_\infty^n(S_1\otimes_{min} S_1)}= C\sqrt{n}\|u\|_{\ell_\infty^n\otimes_{min} (S_1\otimes_{min} S_1)}\\&\leq 2\sqrt{n}\|u\|_{C_n\otimes_{\min} (S_1\otimes_{min} S_1)}.
\end{align*}

Here, we have used that $\ell_\infty\otimes_{min} X=\ell_\infty(X)$ isometrically and also Eq. (\ref{Eq: GT_OS}).

This concludes the proof.
\end{proof}

\end{document}